\numberwithin{equation}{section}
\newcommand{\R}{{\mathbb R}}
\newcommand{\C}{{\mathbb C}}
\newcommand{\Z}{{\mathbb Z}}
\renewcommand{\Re}{{\operatorname{Re\,}}}
\renewcommand{\Im}{{\operatorname{Im\,}}}
\newcommand{\Ai}{{\operatorname{Ai}}}
\newcommand{\sn}{{\operatorname{sn}}}
\newcommand{\cn}{{\operatorname{cn}}}
\newcommand{\dn}{{\operatorname{dn}}}
\newcommand{\I}{{\bold I}}
\newcommand{\sign}{{\operatorname{sgn}\,}}
\newcommand{\Res}{{\operatorname{Res}\,}}
\newcommand{\al}{\alpha}
\newcommand{\be}{\beta}
\newcommand{\ga}{\gamma}
\newcommand{\ep}{\varepsilon}
\newcommand{\De}{\Delta}
\newcommand{\sg}{\sigma}
\newcommand{\Sg}{\Sigma}
\newcommand{\om}{\omega}
\newcommand{\Om}{\Omega}
\renewcommand{\th}{\vartheta}
\newcommand{\z}{\zeta}
\newtheorem{theo}{{\sc \bf Theorem}}[section]
\newtheorem{lem}[theo]{{\sc \bf Lemma}}
\newtheorem{prop}[theo]{{\sc \bf Proposition}}
\newenvironment{rem}{\medskip\noindent{\it Remark:\/} }{\medskip}
\begin{document}

\title[Exact solution of the six-vertex model]
{Exact solution of the six-vertex model with domain wall boundary conditions.
Antiferroelectric phase}

\author{Pavel Bleher}
\address{Department of Mathematical Sciences,
Indiana University-Purdue University Indianapolis,
402 N. Blackford St., Indianapolis, IN 46202, U.S.A.}
\email{bleher@math.iupui.edu}

\author{Karl Liechty}
\address{Department of Mathematical Sciences,
Indiana University-Purdue University Indianapolis,
402 N. Blackford St., Indianapolis, IN 46202, U.S.A.}
\email{kliechty@math.iupui.edu}

\thanks{The first author is supported in part
by the National Science Foundation (NSF) Grant DMS-0652005.}

\thanks{Both authors would like to thank the Centre de Recherches Math\'{e}matiques at l'Universit\'{e} de Montr\'{e}al, where much of this work was performed, for its hospitality during their visit in the Fall of 2008.}

\date{\today}

\begin{abstract} 
We obtain the large $n$ asymptotics  of the partition function $Z_n$ of the 
six-vertex model with domain wall boundary conditions 
in the antiferroelectric phase region, with the weights
$a=\sinh(\ga-t), \;
b=\sinh(\ga+t), \;
c=\sinh(2\ga), \;
|t|<\ga$.
We prove the conjecture of Zinn-Justin, that as $n\to\infty$, 
$Z_n=C\th_4(n\om) F^{n^2}[1+O(n^{-1})]$, where $\om$ and $F$ are given by explicit
expressions in $\ga$ and $t$, and $\th_4(z)$ is the Jacobi theta function. 
The proof is based on the Riemann-Hilbert approach to the large $n$ asymptotic expansion
of the underlying discrete orthogonal polynomials and on the Deift-Zhou nonlinear
steepest descent method. 
\end{abstract}

\maketitle

\section{Introduction and formulation of the main result}

\subsection{Definition of the model}
The six-vertex model, or the model of two-dimensional ice, is stated on a square $n\times n$
lattice with arrows on edges. The arrows obey the rule that at every vertex there 
are two arrows 
pointing in and two arrows pointing out. Such a rule is sometimes 
called the {\it ice-rule}. There are only six possible configurations of arrows at each 
vertex, hence the name of the model, see Fig.~1. 
%
\begin{center}
\begin{figure}[h]\label{arrows}
\begin{center}
\scalebox{0.52}{\includegraphics{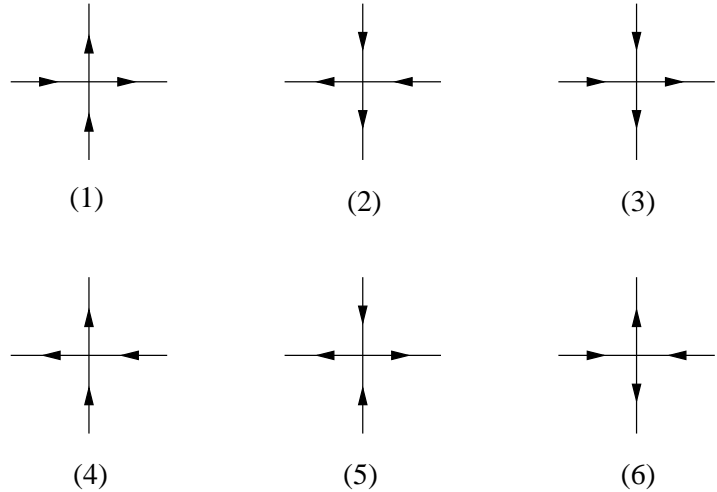}}
\end{center}
  \caption[The six arrow configurations allowed at a vertex]{The six arrow configurations allowed at a vertex.}
 \end{figure}
\end{center}

We will consider the {\it domain wall boundary conditions} (DWBC), 
in which the arrows on the upper and lower boundaries point into the square, 
and the ones on the left and right boundaries point out. 
One possible configuration with DWBC on the $4\times 4$ lattice is shown on Fig.~2.
\begin{center}
 \begin{figure}[h]\label{DWBC}
\begin{center}
   \scalebox{0.52}{\includegraphics{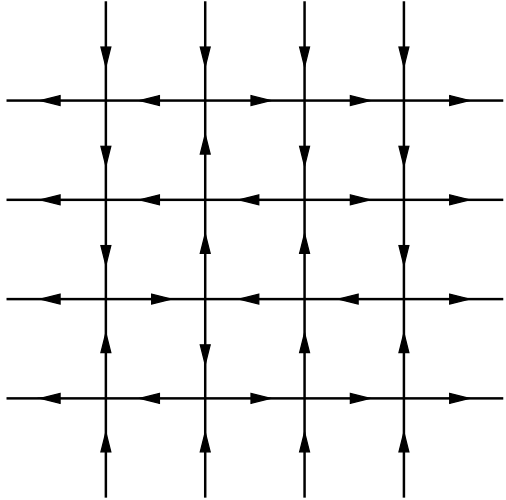}}
\end{center}
        \caption[An example of $4\times4$ configuration]
{An example of $4\times4$ configuration with DWBC.}
   \end{figure}
\end{center}

For each possible vertex state we assign a weight $w_i,\; i=1,\dots,6$, 
and define, as usual, the partition function as a sum over all possible 
arrow configurations of the product of the vertex weights,
\begin{equation}\label{lattice_11}
Z_n=\sum_{{\rm arrow\; configurations}\;\sigma}w(\sigma),
\qquad w(\sigma)=\prod_{x\in V_n} w_{t(x;\sg)}=\prod_{i=1}^6w_i^{N_i(\sigma)},
\end{equation}
where $V_n$ is the $n\times n$ set of vertices,
$t(x;\sg)\in\{1,\ldots,6\}$ is the vertex-type of configuration $\sg$
at vertex $x$ according to Fig. 1, and
$N_i(\sigma)$ is the number of vertices of  type $i$ in  
the configuration $\sg$. The sum is taken over all possible configurations
obeying the given boundary condition. The Gibbs measure is defined then
as
\begin{equation}\label{lattice_12}
\mu_n(\sg)=\frac{w(\sg)}{Z_n}\,.
\end{equation}
Our main goal is to obtain the large $n$ asymptotics of the partition function $Z_n$.

 The six-vertex model has six parameters: the weights $w_i$. By using some conservation
laws it can be reduced to only two parameters (see, e.g., \cite{FS}, \cite{BL1}, and \cite{AR}). 
Namely, we have that
\begin{equation}\label{cl_14}
Z_n(w_1,w_2,w_3,w_4,w_5,w_6)=C(n)Z_n(a,a,b,b,c,c),
\end{equation}
and 
\begin{equation}\label{cl_15}
\mu_n(\sg;w_1,w_2,w_3,w_4,w_5,w_6)=\mu_n(\sg;a,a,b,b,c,c),
\end{equation}
where
\begin{equation}\label{cl_12}
a=\sqrt{w_1w_2},\quad b=\sqrt{w_3w_4},\quad c=\sqrt{w_5w_6},
\end{equation}
and 
\begin{equation}\label{cl_13}
C(n)=
\left(\frac{w_5}{w_6}\right)^{\frac{n}{2}}\,.
\end{equation}
Furthermore, 
\begin{equation}\label{cl_17}
Z_n(a,a,b,b,c,c)=c^{n^2}Z_n\left(\frac{a}{c},\frac{a}{c},\frac{b}{c},\frac{b}{c},1,1\right)
\end{equation}
and
\begin{equation}\label{cl_18}
\mu_n(\sg;a,a,b,b,c,c)=\mu_n\left(\sg;\frac{a}{c},\frac{a}{c},\frac{b}{c},\frac{b}{c},1,1\right),
\end{equation}
so that a general weight reduces to the two parameters, $\frac{a}{c},\frac{b}{c}\,$.

\subsection {Exact solution of the six-vertex model for a finite $n$}
Introduce the parameter
\begin{equation}\label{pf1}
\Delta=\frac{a^2+b^2-c^2}{2ab}\,.
\end{equation}
The phase diagram of the six-vertex model consists of the three 
phase regions: the ferroelectric phase region, $\Delta > 1$;
 the antiferroelectric phase region, $\Delta<-1$; and the disordered phase region, $-1<\Delta<1$. Observe that
$|a-b|>c$ in the ferroelectric phase region and $c>a+b$ in the antiferroelectric phase region, while in the disordered
phase region $a,b,c$ satisfy the triangle inequalities.
\begin{center}
 \begin{figure}[h]\label{PhaseDiagram1}
\begin{center}
   \scalebox{0.5}{\includegraphics{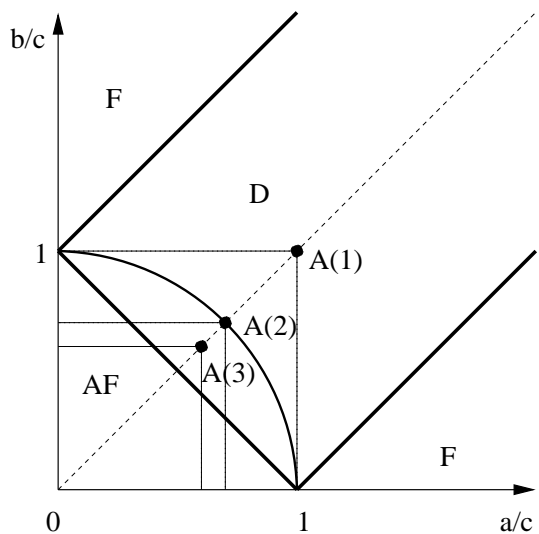}}
\end{center}
        \caption[The phase diagram of the model]{The phase diagram of the model, where {\bf F}, {\bf AF} and {\bf D} mark ferroelectric, antiferroelectric,  and disordered  phases, respectively. The circular arc corresponds to the so-called free fermion line, when $\Delta=0$, and the three
dots correspond to 1-, 2-, and 3-enumeration of alternating sign matrices.}
    \end{figure}
\end{center}
In the three phase regions we parameterize the weights in the standard way:
for the ferroelectric phase,
\begin{equation}\label{pf4}
a=\sinh(t-\ga), \quad
b=\sinh(t+\ga), \quad
c=\sinh(2|\ga|), \quad
0<|\ga|<t\,;
\end{equation}
for the antiferroelectric phase,
\begin{equation}\label{pf5}
a=\sinh(\ga-t), \quad
b=\sinh(\ga+t), \quad
c=\sinh(2\ga), \quad
|t|<\ga\,;
\end{equation}
and for the disordered phase
\begin{equation}\label{pf6}
a=\sin(\ga-t), \quad
b=\sin(\ga+t), \quad
c=\sin(2\ga), \quad
|t|<\ga.
\end{equation}
The phase diagram of the six-vertex model is shown on Fig.~3.
The phase diagram  and the Bethe Ansatz solution  of the six-vertex model for periodic and antiperiodic
boundary conditions are thoroughly
discussed in the works of Lieb \cite{Lieb1}-\cite{Lieb4}, Lieb, Wu \cite{LW},
Sutherland \cite{Sut}, Baxter \cite{Bax}, Batchelor, Baxter, O'Rourke, Yung \cite{BBOY}.
See also the work of Wu, Lin \cite{WL}, in which  the Pfaffian solution for the six-vertex
model with periodic boundary conditions is obtained on the free fermion line, $\Delta=0$.
Brascamp, Kunz and Wu \cite{BKW} prove the equality of the free energy with periodic and
free boundary conditions under various conditions on $a,b,c$, and also they prove the existence of the
spontaneous staggered polarization for sufficiently small values of the parameters $a/c$ and $b/c$. 

As concerns the six-vertex model with DWBC, it was noticed by Kuperberg \cite{Kup}, 
that on the diagonal, 
\begin{equation}\label{pf2}
\frac{a}{c}=\frac{b}{c}=x,
\end{equation}
 the six-vertex model with DWBC is equivalent to the $s$-enumeration of alternating
sign matrices (ASM), in which the weight of each such matrix is equal to $s^{N_-}$, 
where $N_-$ is the number of $(-1)$'s in the matrix and $s=\frac{1}{x^2}$. The exact
solution for a finite $n$ is known for 1-, 2-, and 3-enumerations of ASMs, see the 
works by Kuperberg \cite {Kup} and Colomo-Pronko \cite{CP1} for a solution
based on the Izergin-Korepin formula. A fascinating story of the discovery 
of the ASM formula is presented in the book  \cite{Bre} of Bressoud.
On the free fermion line, $\ga=\frac{\pi}{4}$, the partition function of
the six-vertex model with DWBC has a very simple form: $Z_n=1$. For a nice
short proof of this formula see the work \cite{CP1} of Colomo-Pronko. 

In this paper we will discuss  the {\it antiferroelectric phase region}, and we will use parameterization (\ref{pf5}).
The parameter $\Delta$ in the antiferroelectric phase region reduces to
\begin{equation}\label{pf6c}
\Delta=-\cosh(2\ga).
\end{equation}

 The six-vertex model with DWBC was introduced 
by Korepin in \cite{Kor}, who derived important 
recursion relations for the partition function of the model. These recursion relations 
have been solved by Izergin \cite{Ize}, and this lead to a beautiful determinantal formula 
for the partition function with DWBC.
A detailed proof of this formula, usually called the Izergin-Korepin formula, 
and its generalizations are given in
the paper of Izergin, Coker, and Korepin \cite{ICK}. When the weights are parameterized 
according to (\ref{pf5}), the Izergin-Korepin formula is
\begin{equation} \label{pf7}
Z_n=\frac{[\sinh(\ga-t)\sinh(\ga+t)]^{n^2}\tau_n}{\left(
\prod_{j=0}^{n-1}j!\right)^2}\,,
\end{equation}
where $\tau_n$ is the Hankel determinant,
\begin{equation} \label{pf8}
\tau_n=\det\left(\frac{d^{j+k-2}\phi}{dt^{j+k-2}}\right)_{1\le j,k\le n},
\end{equation} 
and
\begin{equation} \label{pf9}
\phi(t)=\frac{\sinh(2\ga)}{\sinh(\ga-t)\sinh(\ga+t)}\,.
\end{equation}
An elegant derivation of the 
Izergin-Korepin formula from the Yang-Baxter equations is 
given in the papers of Korepin, Zinn-Justin \cite{KZ} and Kuperberg \cite {Kup}
(see also the book of Bressoud \cite{Bre}).

One of the applications of the determinantal formula is that it
implies that 
the function $\tau_n$ solves the Toda equation
\begin{equation} \label{pf10}
\tau_n\tau''_n-{\tau'_n}^2=\tau_{n+1}\tau_{n-1},
\qquad n\ge 1,\qquad ({}')=\frac{\partial }{\partial t}\,,
\end{equation}
cf. the work of Sogo, \cite{Sog}. The Toda equation was used by Korepin and Zinn-Justin \cite{KZ} to 
derive the free energy of the six-vertex model with DWBC, assuming
some ansatz on the behavior of subdominant terms in the large $n$
asymptotics of the free energy.

Another application of the Izergin-Korepin formula is that
$\tau_n$ can be expressed in terms of  a partition function of
a random matrix model and also in terms of related orthogonal polynomials, see
 the paper  \cite{Z-J1} of Zinn-Justin.  In the antiferroelectric phase the
expression in terms of orthogonal polynomials can be obtained as follows. 
For the evaluation of the Hankel determinant, let us
write $\phi(t)$ in the form of the Laplace transform of a discrete measure,
\begin{equation} \label{dph6}
\phi(t)=\frac{\sinh(2\ga)}{\sinh(\ga-t)\sinh(\ga+t)}=
2\sum_{l=-\infty}^\infty e^{2tl-2\ga |l|}.
\end{equation} 
Then
\begin{equation} \label{dph7}
\tau_n=\frac{2^{n^2}}{n!}\sum_{l_1,\ldots,l_n=-\infty}^\infty \Delta(l)^2\prod_{i=1}^n
e^{2tl_i-2\ga |l_i|}, 
\end{equation}
where
\begin{equation} \label{dph8}
\Delta(l)=\prod_{1\le i<j\le n}(l_j-l_i)
\end{equation}
is the Vandermonde determinant, see Appendix \ref{proof_tau_n} in the end of the paper.

Introduce now discrete monic polynomials $P_j(x)=x^j+\dots$ orthogonal 
on the set
$\Z$ with respect to the weight,
\begin{equation} \label{dph9}
w(l)=e^{2tl-2\ga |l|},
\end{equation}
so that
\begin{equation} \label{dph14}
\sum_{l=-\infty}^\infty P_j(l)P_k(l)w(l)=h_k\delta_{jk}.
\end{equation}
Then it follows from (\ref{dph7}) that
\begin{equation} \label{dph15}
\tau_n=2^{n^2}\prod_{k=0}^{n-1}h_k, 
\end{equation}
see Appendix \ref{proof_tau_n_2} in the end of the paper. 

\subsection{Rescaling of the weight}

Set 
\begin{equation} \label{rw1}
\Delta_n=\frac{2\ga }{n}\,,\qquad x=l\Delta_n,\qquad w_n(x)=e^{-n(|x|-\z x)},\qquad \z=\frac{t}{\ga}\,,
\end{equation}
and
\begin{equation} \label{rw2}
 P_{nk}(x)= \Delta_n^k P_k\left(\frac{x}{\Delta_n}\right).
\end{equation}
Consider also the lattice
\begin{equation} \label{rw3}
 L_n=\left\{ x=\frac{2\ga k}{n}\,,\; k\in\Z\right\}.
\end{equation}
Then from (\ref{dph14}) we obtain that the monic polynomials $P_{nk}(x)$ satisfy the orthogonality condition,
\begin{equation} \label{rw4}
\sum_{x\in L_n}  P_{nj}(x) P_{nk}(x) w_n(x)\Delta_n=h_{nk}\delta_{jk},\qquad
h_{nk}=h_k \Delta_n^{2k+1}.
\end{equation}
We can then combine equations (\ref{pf7}), (\ref{dph15}), and (\ref{rw4}) to obtain
\begin{equation} \label{rw5}
Z_n=\left(\frac{nab}{\ga}\right)^{n^2}\prod_{k=0}^{n-1}\frac{h_{nk}}{(k!)^2}
\,,\qquad a=\sinh(\ga-t),\qquad b=\sinh(\ga+t).
\end{equation}
For what follows we will need to extend the weight $w_n(x)$ to the complex plane.
We do so by defining $w_n(z)$ on the complex plane as
\begin{equation} \label{rw6}
w_n(z)=
e^{-nV(z)}
\end{equation}
where 
\begin{equation} \label{rw7}
V(z)=
\left\{
\begin{aligned}
&z-\z z\quad &\textrm{when}\quad \Re z\ge 0,\\
&-z-\z z \quad &\textrm{when}\quad \Re z\le 0,
\end{aligned}
\right.
\end{equation}
so that $V(z)$, and thus $w_n(z)$, is two-valued on the imaginary axis.

\subsection{Main result: Asymptotics of the partition function}

This work is a continuation of the work \cite{BF} of the first author with Vladimir Fokin
and \cite{BL1}, \cite{BL2} of the authors of the present work.
In \cite{BF} the authors obtain the large $n$ asymptotics
of the partition function $Z_n$ in the disordered phase. They prove the conjecture 
of Paul Zinn-Justin \cite{Z-J1} that the large 
$n$ asymptotics of $Z_n$ in the disordered phase has the following form: For
some $\ep>0$,
\begin{equation}\label{main1}
Z_n=Cn^{\kappa}F^{n^2}[1+O(n^{-\ep})].
\end{equation}
Furthermore, they find the exact value of the exponent $\kappa$,
\begin{equation}\label{main2}
\kappa=\frac{1}{12}-\frac{2\ga^2}{3\pi(\pi-2\ga)}\,.
\end{equation}
The value of $F$ in the disordered phase is given by the formula,
\begin{equation}\label{main3}
F=\frac{\pi ab}{2\ga\cos\frac{\pi t}{2\ga}}
\,,\qquad a=\sin(\ga-t),\qquad b=\sin(\ga+t)\,,
\end{equation}
 in parameterization (\ref{pf6}).

In the work \cite{BL1} we obtain the following large $n$ asymptotic formula for $Z_n$
in the ferroelectric phase region: For any $\ep>0$,
\begin{equation} \label{main4}
Z_n=C G^n F^{n^2}\left[1+O\left(e^{- n^{1-\ep}}\right)\right],
\end{equation}
where $C=1-e^{-4\ga}$, $G=e^{\ga-t}$, and $F=\sinh(t+\ga)$ in parameterization (\ref{pf4}).

Finally, in the work \cite{BL2} we obtain the following large $n$ asymptotic formula for $Z_n$
on the border line between the ferroelectric and disordered phase regions: 
\begin{equation} \label{main5} 
Z_n\left(a,a,a+1,a+1,1,1\right)=C n^\kappa
G^{\sqrt n}F^{n^2}[1+O(n^{-1/2})]\,,
\end{equation}
where $C>0$,
\begin{equation} \label{main6}
\kappa=\frac{1}{4}\,,\qquad G=\exp\left[-\z\left(\frac{3}{2}\right)\sqrt{\frac{a}{ \pi }}\right]\,, \qquad F=a+1,
\end{equation}
and
$\z$ is the Riemann zeta-function.

In the present paper we obtain the large $n$ asymptotic formula for $Z_n$ in the antiferroelectric phase
region. The formulation of the main result of the present paper
and the proofs involve the Jacobi theta functions.
Let us review their definition and basic properties.

There are four Jacobi theta functions: 
\begin{equation}\label{main8}
\begin{aligned}
\th_1(z)&=2\sum_{n=0}^{\infty}(-1)^n q^{(n+\frac{1}{2})^2} \sin\big((2n+1)z\big) \\
\th_2(z)&=2\sum_{n=0}^{\infty}q^{(n+\frac{1}{2})^2}\cos\big((2n+1)z\big) \\
\th_3(z)&=1+2\sum_{n=1}^{\infty}q^{n^2}\cos(2nz) \\
\th_4(z)&=1+2\sum_{n=1}^{\infty}(-1)^n q^{n^2} \cos(2nz) \\
\end{aligned}
\end{equation}
where $q$ is the {\it elliptic nome}. We will assume that $1>q>0$. Fig.~4 shows the graphs of $\th_1,\,\th_2$ (left
figure) and $\th_3,\,\th_4$ (right figure) on the interval $[0,\pi]$ for $q=0.5$. 
Observe that $\th_1,\,\th_4$ are increasing on $[0,\frac{\pi}{2}]$ while $\th_2,\,\th_3$
are decreasing on this interval.

\begin{center}
 \begin{figure}[h]\label{theta}
\begin{center}
   \scalebox{0.3}{\rotatebox{0}{\includegraphics{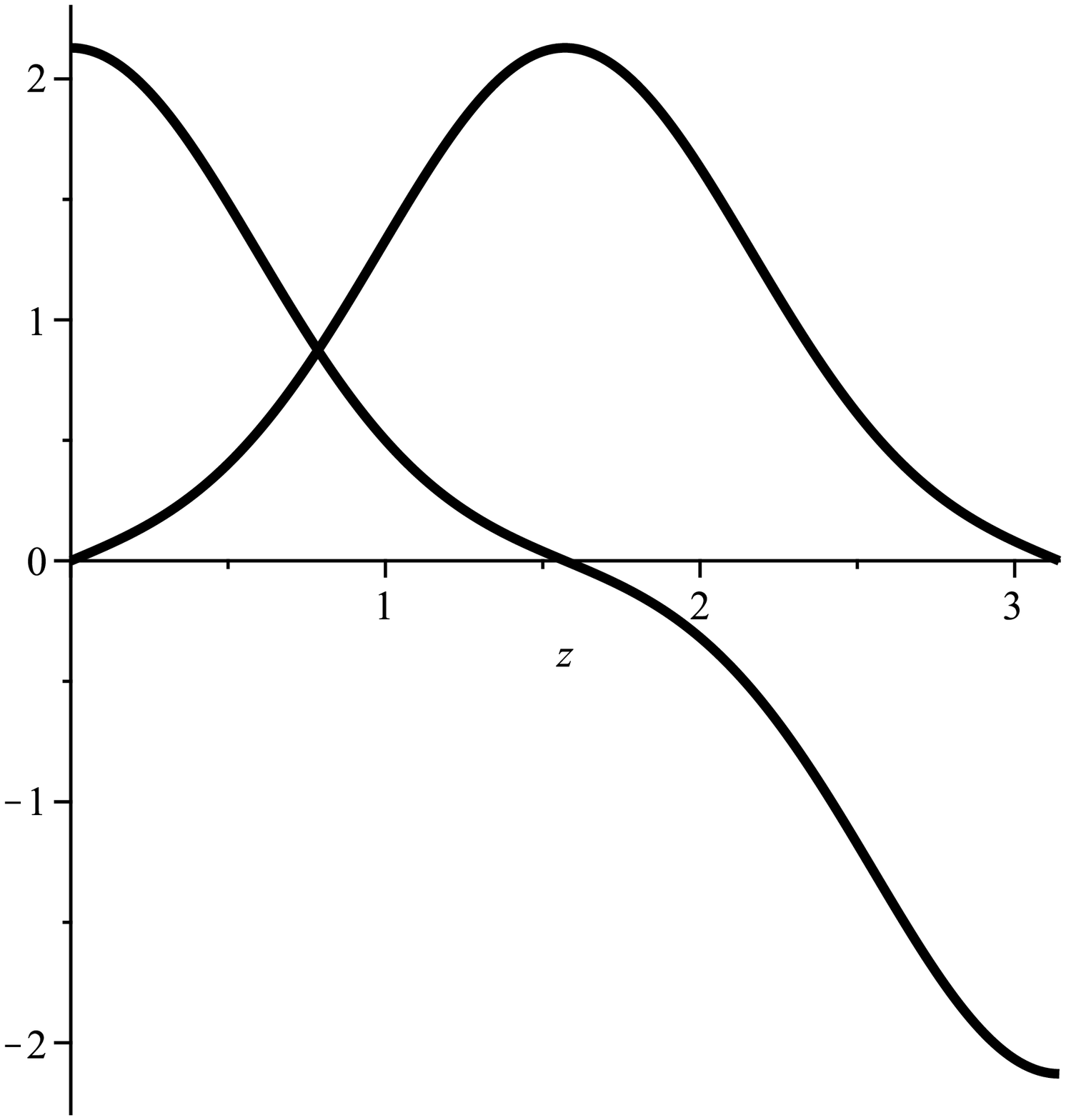}}}\hskip 1cm \scalebox{0.3}{\rotatebox{0}{\includegraphics{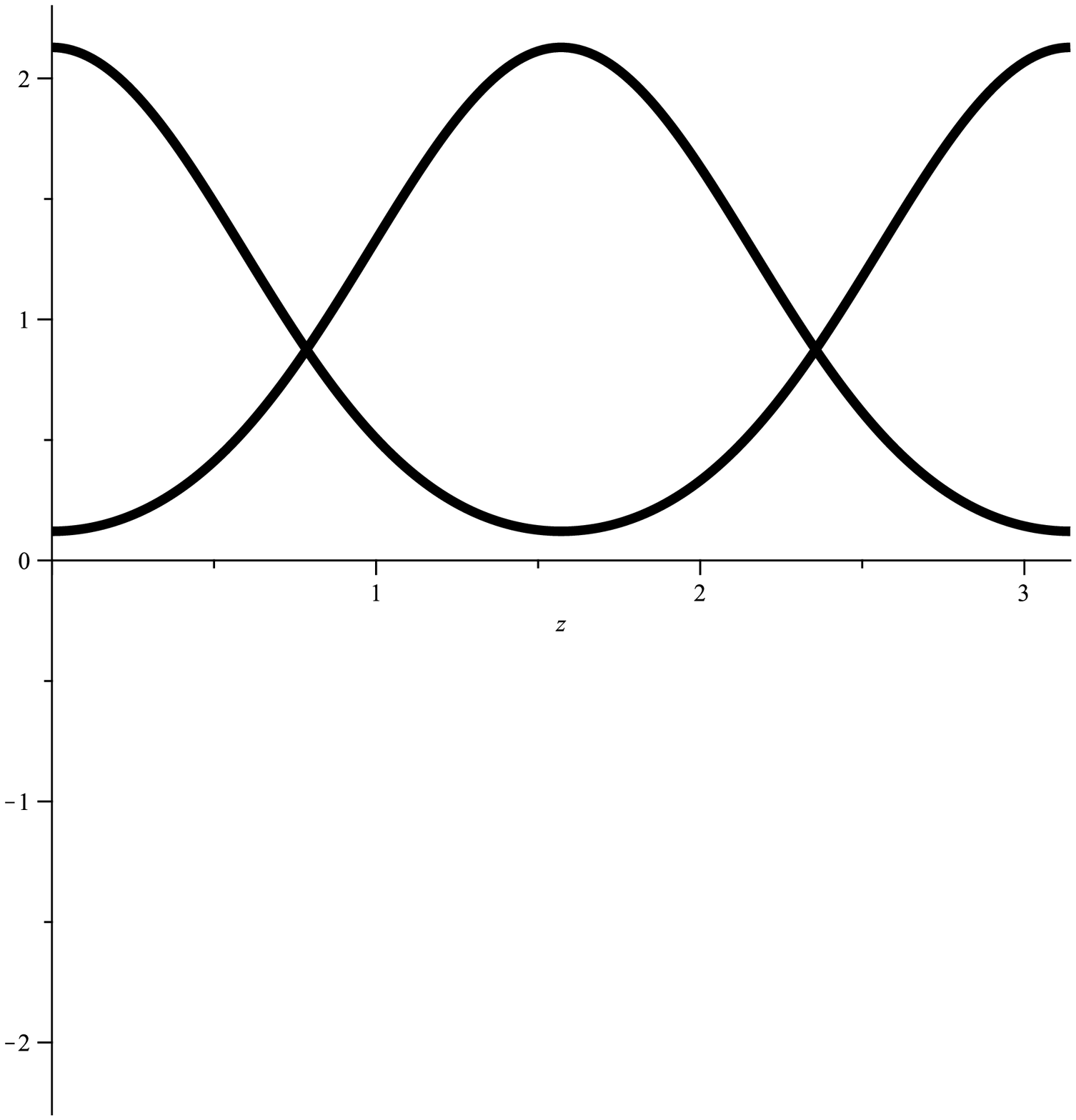}}}
\end{center}
        \caption[The graphs of thetas.]{The graphs of $\th_1,\,\th_2$ (left
figure) and $\th_3,\,\th_4$ (right figure) on the interval $[0,\pi]$ for $q=0.5$.}
    \end{figure}
\end{center}

The Jacobi theta functions
satisfy the following periodicity conditions:
\begin{equation}\label{main9}
\begin{aligned}
&\th_1(z+\pi)=-\th_1(z),\qquad \th_1(z+\pi \tau)=-e^{-2iz}q^{-1}\th_1(z),    \\
&\th_2(z+\pi)=-\th_2(z),\qquad \th_2(z+\pi \tau)=e^{-2iz}q^{-1}\th_2(z),\\
&\th_3(z+\pi)=\th_3(z),\qquad \th_3(z+\pi \tau)=e^{-2iz}q^{-1}\th_3(z),    \\
&\th_4(z+\pi)=\th_4(z),\qquad \th_4(z+\pi \tau)=-e^{-2iz}q^{-1}\th_4(z),
\end{aligned}
\end{equation}
where $\tau$ is a pure imaginary number related to $q$ by the equation,
\begin{equation}\label{main10}
q=e^{i\pi\tau}.
\end{equation}
The theta functions also satisfy the symmetry conditions:
\begin{equation}\label{main11}
\th_1(-z)=-\th_1(z),\quad \th_2(-z)=\th_2(z),\quad \th_3(-z)=\th_3(z),\quad \th_4(-z)=\th_4(z),
\end{equation}
and the equations,
\begin{equation}\label{main12}
\th_1(z)=\th_2\left(z-\frac{\pi}{2}\right),\quad \th_3(z)=\th_4\left(z+\frac{\pi}{2}\right),\quad
 \th_1(z)=-ie^{iz+\frac{i\pi\tau }{4}}\th_4\left(z+\frac{\pi\tau}{2}\right).
\end{equation}
The only zeroes of the theta functions are 
\begin{equation}\label{main13}
\th_1(0)=0,\quad \th_2\left(\frac{\pi}{2}\right)=0,\quad
 \th_3\left(\frac{\pi}{2}+\frac{\pi\tau}{2}\right)=0,\quad 
\th_4\left(\frac{\pi\tau}{2}\right)=0,
\end{equation}
and their shifts by $m\pi+n\pi\tau$; $m,n\in\Z$. There are many non-trivial identities satisfied by the theta functions.  A list of those identities used in this paper is given in Appendix \ref{identities}. 

In the antiferroelectric phase region we use parameterization (\ref{pf5}) with two
parameters $t,\ga$ such that $|t|<\ga$. In what follows we will also use the following
parameters:
\begin{equation}\label{main14}
\z=\frac{t}{\ga}\in(-1,1)\,,\qquad \om=\frac{\pi(1+\z)}{2}\in(0,\pi)\,.
\end{equation}
The elliptic nome for all Jacobi theta functions in this paper will be equal to
\begin{equation}\label{main16}
q=e^{-\frac{\pi^2}{2\ga}}.
\end{equation}

Our main result in the present paper is the following  asymptotic formula for $Z_n$:

\begin{theo} \label{thmain1} As $n\to\infty$, 
\begin{equation}\label{main17}
Z_n=C\th_4\left(n\om\right) F^{n^2}(1+O(n^{-1})),
\end{equation}
where $C>0$ is a constant, and
\begin{equation}\label{main18}
F=\frac{\pi \sinh(\ga-t)\sinh(\ga+t)\th'_1(0)}{2\ga\th_1(\om)}\,.
\end{equation}
\end{theo}

The asymptotic formula (\ref{main17}) proves the conjecture of Zinn-Justin
in \cite{Z-J1}.  The proof of Theorem \ref{thmain1} will be based on the Riemann-Hilbert approach to discrete
orthogonal polynomials. An important first step in this approach is a
construction of the equilibrium measure.

\section{Equilibrium measure}\label{equilibrium}

\subsection{Heuristic motivation and definition of the equilibrium measure}
If we scale the variables in (\ref{dph7}) as $\mu_i=\frac{2\ga l_i}{n}$, then we can rewrite 
formula (\ref{dph7}) as
\begin{equation}\label{eq1}
\tau_n=\frac{2^{n^2}}{n!}\sum_{\mu \in \frac{2\ga}{n} \Z^n}e^{-n^2 H_n(\nu_\mu)},
\end{equation}
where
\begin{equation}\label{eq2}
d\nu_\mu(x)=\frac{1}{n}\sum_{j=1}^{n}\delta(x-\mu_j),
\end{equation}
and
\begin{equation}\label{eq3}
H(\nu)=\iint_{x\not= y} \log \frac{1}{|x-y|}d\nu (x)d\nu (y)+\int (|x| -\z x)d\nu(x).
\end{equation}
where all integrals are over $\mathbb{R}$. 

Due to the factor $(-n^2)$ in the exponent of (\ref{eq1}), we expect the sum, in the large $n$ limit, 
to be focused in a neighborhood of a global minimum of the functional $H$.  
Clearly, we have that $\nu_\mu$ is a probability measure and 
\begin{equation}\label{eq3a}
\nu_\mu (a,b) \leq \frac{b-a}{2\ga} \quad \textrm{for any}\quad -\infty <a<b<\infty,  
\end{equation}
because in (\ref{eq2}), $\mu_j\in \frac{2\ga}{n} \Z$ and $\mu_j\not=\mu_k$ if $j\not= k$.
With these constraints in mind, we define
\begin{equation}\label{eq4}
E_0=\inf_\nu H(\nu)
\end{equation}
where the infimum is taken over all probability measures satisfying (\ref{eq3a}).  
It is possible to prove that there exists a unique minimizer $\nu_0$, so that  
\begin{equation}\label{eq4z}
E_0=H(\nu_0),
\end{equation}
see, e.g., the works of Saff and Totik \cite{ST}, Dragnev and Saff \cite{DS} and Kuijlaars \cite{Kui}.
Furthermore, $\nu_0$ has support on a finite number of intervals, and is absolutely continuous with respect to the Lebesgue measure.  The minimizer $\nu_0$ is called the {\it equilibrium measure}.

Denote the density function of the equilibrium measure as $\rho(x)$, and its resolvent as $\omega$, so we have 
\begin{equation}\label{eq6}
\frac{d\nu_0}{dx}=\rho(x), \quad \omega(z)=\int\frac{\rho(x)dx}{z-x}\,,
\end{equation}
and
\begin{equation}\label{eq6a}
\rho(x)=\frac{1}{2\pi i}\big(\omega(x-i 0)-\omega(x+i 0)\big). 
\end{equation}
The structure of the equilibrium measure $\nu_0$ is studied in the paper of Zinn-Justin  \cite{Z-J1}, who shows
 that $\nu_0$ has support on an interval $[\al, \be]$, with a saturated region $[\al', \be']$ 
in which 
\begin{equation}\label{eq4a}
\rho(x)=\frac{1}{2\ga}\,,\qquad x\in [\al', \be'],
\end{equation}
and two unsaturated regions, $[\al, \al']$ and $[\be', \be]$, 
in which     
\begin{equation}\label{eq4b}
0<\rho(x)<\frac{1}{2\ga}\,,\qquad x\in (\al, \al')\cup (\be', \be),
\end{equation}
see Fig.~5. We also have that
\begin{equation}\label{eq4c}
\al<\al'<0<\be'<\be,
\end{equation}
so that the origin, which is a singular point of the potential $V(x)=|x| -\z x$, lies inside the
saturated region $[\al', \be']$ .

\begin{center}
 \begin{figure}[h]\label{rho}
\begin{center}
   \scalebox{0.6}{\includegraphics{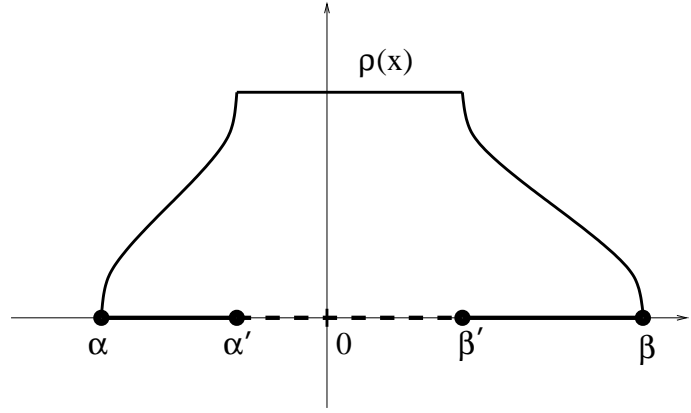}}
\end{center}
        \caption[The density.]{The equilibrium density function $\rho(x)$.}
    \end{figure}
\end{center}

The measure $\nu_0$ is uniquely determined by the Euler-Lagrange variational conditions
\begin{equation}\label{eq5}
2\int \log|x-y| d\nu_0 (y) - (|x| -\z x) \left\{
\begin{aligned}
&=l \quad \textrm{for}\quad x \in [\al,\al'] \cup [\be',\be], \\
&\geq l \quad \textrm{for}\quad x \in [\al',\be'],  \\
&\leq l \quad \textrm{for}\quad x \notin [\al,\be],
\end{aligned}\right.
\end{equation}
where $l$ is the Lagrange multiplier.
The Euler-Lagrange variational conditions imply 
\begin{equation}\label{eq8}
\omega(x-i 0)+\omega(x+i 0)=-\z+\textrm{sgn}(x) \quad \textrm{for} \quad x\in [\al, \al']\cup[\be',\be],
\end{equation}
whereas in the saturated region, we have 
\begin{equation}\label{eq7}
\rho(x)=\frac{1}{2\pi i}\big(\omega(x-i 0)-\omega(x+i 0)\big)=\frac{1}{2\ga} \quad \textrm{for} \quad x\in [\al', \be'].
\end{equation}

Now we will give a detailed description of the equilibrium measure. We begin with explicit formulae
for the end-points of the support of the equilibrium measure.

\subsection{Explicit formulae for the end-points} They are given in the following proposition.

\begin{prop} \label{proposition_end_points}
The end-points of the support of the equilibrium measure $\nu_0$ are equal to
\begin{equation}\label{eq8a}
\begin{aligned}
\al&=-\pi\,\frac{\th_1'(\frac{\om}{2})}{\th_1(\frac{\om}{2})}\,,\qquad 
\al'=-\pi\,\frac{\th_4'(\frac{\om}{2})}{\th_4(\frac{\om}{2})}\,,\\ 
\be'&=-\pi\,\frac{\th_3'(\frac{\om}{2})}{\th_3(\frac{\om}{2})}\,,\qquad 
\be=-\pi\,\frac{\th_2'(\frac{\om}{2})}{\th_2(\frac{\om}{2})}\,.
\end{aligned}
\end{equation}
The differences between the end-points are equal to
\begin{equation}\label{eq8b}
\begin{aligned}
\al'-\al&=\pi\th_4^2(0)\frac{\th_2(\frac{\om}{2})\th_3(\frac{\om}{2})}{\th_1(\frac{\om}{2})\th_4(\frac{\om}{2})}\,  ,
\qquad
\be'-\al'=\pi\th_2^2(0)\frac{\th_1(\frac{\om}{2})\th_2(\frac{\om}{2})}{\th_3(\frac{\om}{2})\th_4(\frac{\om}{2})}\,  ,
\\
\be-\be'&=\pi\th_4(0)^2\frac{\th_1(\frac{\om}{2})\th_4(\frac{\om}{2})}{\th_2(\frac{\om}{2})\th_3(\frac{\om}{2})}\,.
\end{aligned}
\end{equation}
and
\begin{equation}\label{eq8c}
\begin{aligned}
\be-\al &=\pi\th_2^2(0)\frac{\th_3(\frac{\om}{2})\th_4(\frac{\om}{2})}{\th_1(\frac{\om}{2})\th_2(\frac{\om}{2})}\,  , 
\qquad
\be-\al'=\pi\th_3(0)^2\frac{\th_1(\frac{\om}{2})\th_3(\frac{\om}{2})}{\th_2(\frac{\om}{2})\th_4(\frac{\om}{2})}\,,
\\
\be'-\al &=\pi\th_3^2(0)\frac{\th_2(\frac{\om}{2})\th_4(\frac{\om}{2})}{\th_1(\frac{\om}{2})\th_3(\frac{\om}{2})}.
\end{aligned}
\end{equation}
Finally, we have the Zinn-Justin formula for the centroid of the end-points,
\begin{equation}\label{eq8d}
\frac{\al+\al'+\be'+\be}{4}=-\frac{\pi}{2}\frac{\th_2'(\frac{\pi\z}{2})}{\th_2(\frac{\pi\z}{2})}\,.
\end{equation}
\end{prop}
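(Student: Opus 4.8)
The plan is to construct the resolvent $\omega$ of (\ref{eq6}) by uniformizing a genus-one Riemann surface and to read the end-points off the resulting explicit formula. First I would record the scalar Riemann--Hilbert problem that (\ref{eq5})--(\ref{eq8}) impose on $\omega$: it is the unique function analytic on $\C\setminus[\al,\be]$, with $\omega(z)=z^{-1}+O(z^{-2})$ at $\infty$, real on $(-\infty,\al)\cup(\be,\infty)$, satisfying $\omega_++\omega_-=-\z+\sign(x)$ on the unsaturated bands $[\al,\al']\cup[\be',\be]$, $\omega_+-\omega_-=-\pi i/\ga$ on the saturated band $[\al',\be']$, and such that the density $\rho=\frac{1}{2\pi i}(\omega_--\omega_+)$ obeys the bounds (\ref{eq4b}); the four unknowns $\al<\al'<\be'<\be$ are pinned by solvability of this problem. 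Since, after a locally constant shift, the jump on each unsaturated band is of the square-root type $f_+=-f_-$, the natural object is the elliptic curve $\mathcal{R}:\ w^2=(z-\al)(z-\al')(z-\be')(z-\be)$, realised as two sheets of $\C$ glued crosswise along $[\al,\al']\cup[\be',\be]$.

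Next I would uniformize $\mathcal{R}$ by the Abel map $u:\mathcal{R}\to\C/(\pi\Z+\pi\tau\Z)$, $u(P)=C\int^{P}dz/w$, with $C$ normalized so that the $A$-period equals $\pi$, matching the $\pi$-periodicity of the $\th_j$ in (\ref{main9}). The hyperelliptic involution $\iota$ becomes $u\mapsto-u$; the four branch points go to the four half-periods and the two points over $z=\infty$ to $\pm u_\infty$. Two facts must now be extracted from the analytic data of the equilibrium problem: (i) the saturated-band relation $\omega_+-\omega_-=-\pi i/\ga$, together with the mass normalization $\int\rho=1$, fixes the period ratio of $dz/w$, hence $\tau$, giving the nome $q=e^{i\pi\tau}=e^{-\pi^2/(2\ga)}$ as in (\ref{main16}); and (ii) the constant $-\z$ together with the $z^{-1}$ normalization of $\omega$ at $\infty$ pins the correspondence of branch points and half-periods as $\al\leftrightarrow 0$, $\al'\leftrightarrow\tfrac{\pi\tau}{2}$, $\be'\leftrightarrow\tfrac{\pi}{2}+\tfrac{\pi\tau}{2}$, $\be\leftrightarrow\tfrac{\pi}{2}$ (the zeros of $\th_1,\th_4,\th_3,\th_2$) and forces $u_\infty=\om/2$, with $\om=\pi(1+\z)/2$ as in (\ref{main14}). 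This extraction, carried out by period and residue computations on $\mathcal{R}$, is where the real work lies. A logically cleaner alternative is to simply \emph{take} $q$ and $\om$ as defined by (\ref{main14})--(\ref{main16}), build $\omega$ from the theta formula below, verify (\ref{eq5})--(\ref{eq8}) directly, and invoke uniqueness of the equilibrium measure (\cite{ST,DS,Kui}).

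As a meromorphic function on $\mathcal{R}$, $z$ is an elliptic function of $u$ of order two, even in $u$ (since $z\circ\iota=z$), with simple poles exactly at $u=\pm\om/2$; these properties, together with reality on the real $u$-axis and the normalization of $\omega$ at the two points over $\infty$, force
\[
z(u)=\frac{\pi}{2}\left[\frac{\th_1'}{\th_1}\!\left(u-\tfrac{\om}{2}\right)-\frac{\th_1'}{\th_1}\!\left(u+\tfrac{\om}{2}\right)\right].
\]
Evaluating at $u=0,\,\tfrac{\pi\tau}{2},\,\tfrac{\pi}{2}+\tfrac{\pi\tau}{2},\,\tfrac{\pi}{2}$ and using the shift relations $\tfrac{\th_1'}{\th_1}(x+\tfrac{\pi}{2})=\tfrac{\th_2'}{\th_2}(x)$, $\tfrac{\th_1'}{\th_1}(x+\tfrac{\pi\tau}{2})=\tfrac{\th_4'}{\th_4}(x)-i$, and their analogues (Appendix \ref{identities}), the additive constants cancel between the two summands and one obtains exactly (\ref{eq8a}). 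The difference formulas (\ref{eq8b})--(\ref{eq8c}) then follow from the classical identities for differences of logarithmic theta-derivatives, e.g.\ $\tfrac{\th_1'}{\th_1}(x)-\tfrac{\th_4'}{\th_4}(x)=\th_4^2(0)\,\tfrac{\th_2(x)\th_3(x)}{\th_1(x)\th_4(x)}$ (whose residue at $x=0$ reproduces Jacobi's identity $\th_1'(0)=\th_2(0)\th_3(0)\th_4(0)$), taken at $x=\om/2$. For the centroid (\ref{eq8d}), summing $z(u)$ over the four half-periods and using the duplication formula $\th_1(2x)=2\th_1(x)\th_2(x)\th_3(x)\th_4(x)/\big(\th_2(0)\th_3(0)\th_4(0)\big)$ gives $\al+\al'+\be'+\be=-2\pi\,\tfrac{\th_1'}{\th_1}(\om)$, and $\om=\tfrac{\pi}{2}+\tfrac{\pi\z}{2}$ turns this into $-2\pi\,\tfrac{\th_2'}{\th_2}(\tfrac{\pi\z}{2})$.

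The step I expect to be the main obstacle is (i)--(ii): extracting the nome $q=e^{-\pi^2/(2\ga)}$ and the pole location $u_\infty=\om/2$, with the correct pairing of branch points and half-periods, from the equilibrium conditions by period computations on $\mathcal{R}$. Once the curve and the map $z(u)$ are in place, the end-point formulas and their consequences are routine theta-function manipulations, and the reality and inequality checks needed to certify that this $\omega$ is indeed the resolvent of the equilibrium measure are standard.
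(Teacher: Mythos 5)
Your overall strategy---uniformize the elliptic curve, identify the nome and the image $u_\infty$ of infinity, and read the end-points off an explicit elliptic formula for $z(u)$---is essentially the paper's, repackaged through the function $\frac{\th_1'}{\th_1}$ rather than through $\sn$ and the Jacobi $Z$-function; the formula you write for $z(u)$ is indeed correct, and your treatment of (\ref{eq8b})--(\ref{eq8c}) and of the centroid (\ref{eq8d}) is fine. But there is a genuine gap at the decisive step: the properties you invoke do \emph{not} force the additive constant in $z(u)$ to vanish. An even elliptic function of order two with simple poles at $\pm\frac{\om}{2}$ and the prescribed residues there is $\frac{\pi}{2}\bigl[\frac{\th_1'}{\th_1}(u-\frac{\om}{2})-\frac{\th_1'}{\th_1}(u+\frac{\om}{2})\bigr]+B$ for an arbitrary constant $B$; adding a real $B$ translates all four end-points by $B$ and preserves reality, the pairing of branch points with half-periods, the period ratio, the location of $u_\infty$, and the normalization $\omega(z)=z^{-1}+O(z^{-2})$ (the $z^{-2}$ coefficient is the first moment of $\nu_0$, an output rather than an input). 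In other words, your steps (i)--(ii) determine only the differences of the end-points, i.e.\ (\ref{eq8b})--(\ref{eq8c}); the absolute positions in (\ref{eq8a}) require one further equation that your proposal never supplies.

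That missing equation is the requirement that the Euler--Lagrange \emph{equality} in (\ref{eq5}) hold with one and the same Lagrange multiplier $l$ on both bands $[\al,\al']$ and $[\be',\be]$, equivalently $\int_{\al'}^{\be'}(\omega_++\omega_-)\,dz=(1-\z)\be'+(1+\z)\al'$; this is where the kink of $V(x)=|x|-\z x$ at the origin, which sits inside the saturated band, finally enters and breaks translation invariance. It is the content of the paper's (\ref{eq20})--(\ref{eq30}), carried out via the addition formulae for $\sn$ and $Z$, and it is the most laborious part of the proof, yielding $\be=-\pi\th_2'(\frac{\om}{2})/\th_2(\frac{\om}{2})$ and hence $B=0$. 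Your ``cleaner alternative'' (define $\omega$ by the theta formula and invoke uniqueness of the equilibrium measure) would also work, but only if the verification includes this integrated condition and not merely the differentiated relations (\ref{eq8}) and (\ref{eq7}), which are invariant under a common shift of $\al,\al',\be',\be$. As written, the translation degree of freedom is never fixed.
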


For a proof of Proposition \ref{proposition_end_points} see the next section.

\subsection {Equilibrium density function} The equilibrium density function is described in
the following proposition.

\begin{prop} \label{proposition_equilibrium_density}
The equilibrium density function $\rho(x)$ is given by 
the formulae,
\begin{equation}\label{edf3}
\rho(x)=
\left\{
\begin{aligned}
& \frac{1}{\pi }\int_\al^x\frac{dx'}{\sqrt{(x'-\al)(\al'-x')(\be'-x')(\be-x')}},\quad 
\al\le x\le \al',\\
&\frac{1}{2\ga}\,,\quad \al'\le x\le \be',\\
& \frac{1}{\pi }\int_x^\be\frac{dx'}{\sqrt{(x'-\al)(x'-\al')(x'-\be')(\be-x')}},\quad 
\be'\le x\le \be.
\end{aligned}
\right.
\end{equation}
Also,
\begin{equation}\label{edf3a}
\int_0^\be \rho(x)\,dx=\frac{1+\z}{2}\,.
\end{equation}
The resolvent $\om(z)$ of the equilibrium measure is given as 
\begin{equation}\label{edf1}
\om(z)=\int_z^\infty\frac{dz'}{\sqrt{(z'-\al)(z'-\al')(z'-\be')(z'-\be)}},
\end{equation}
where integration takes place on the sheet of $\sqrt{R(z')} \equiv \sqrt{(z'-\al)(z'-\al')(z'-\be')(z'-\be)}$ for which $\sqrt{R(z')} >0$ for $z'>\be$, with cuts on $[\al,\al']$ and $[\be',\be]$.
\end{prop}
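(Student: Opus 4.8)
The plan is to start from the resolvent formula \eqref{edf1} as an ansatz and verify that it produces a function $\om(z)$ satisfying the defining properties of the resolvent of the equilibrium measure, namely: (i) $\om(z)$ is analytic on $\C\setminus([\al,\al']\cup[\be',\be])$; (ii) $\om(z)=z^{-1}+O(z^{-2})$ as $z\to\infty$, which encodes that $\nu_0$ is a probability measure; (iii) the boundary-value relation \eqref{eq8} on the unsaturated bands $[\al,\al']\cup[\be',\be]$; and (iv) the saturation condition \eqref{eq7}, i.e. $\om(x-i0)-\om(x+i0)=2\pi i/(2\ga)=\pi i/\ga$ on $[\al',\be']$. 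Since the equilibrium measure is unique (as quoted from \cite{ST}, \cite{DS}, \cite{Kui}), checking these four properties for the candidate $\om$ — together with the sign/interlacing constraints $0<\rho<1/(2\ga)$ on the bands — identifies it as the true resolvent, and then $\rho(x)$ is recovered from \eqref{eq6a}, giving \eqref{edf3}.

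The verification of (i)–(iii) is essentially bookkeeping with the branch of $\sqrt{R(z')}$: on $[\al,\al']$ and $[\be',\be]$ the radicand $R(z')$ is negative, so $\sqrt{R(z')}$ is pure imaginary there and the two boundary values are negatives of each other, giving $\om(x+i0)+\om(x-i0) = 2\int_x^\infty dz'/\sqrt{R(z')}$ interpreted as a principal-value-type contour that is real; one then checks it equals $-\z+\sign(x)$ by differentiating in $x$ and matching constants of integration using the endpoint positions \eqref{eq8a}. On $[\al',\be']$ instead $R(z')>0$, so $\sqrt{R}$ is real and $\om(x-i0)-\om(x+i0)$ picks up the jump coming only from deforming the contour of integration from $z$ around the cut on the right — here a residue/period computation shows this jump is constant in $x$, and the value of that constant is $\pi i/\ga$. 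This last computation is where the specific elliptic parametrization enters: the period integral $\int_{\be'}^{\be}dz'/\sqrt{R(z')}$ (or equivalently the half-period of the associated elliptic curve) must come out to the right multiple of $\ga$, which is exactly the content of the endpoint formulas already established in Proposition \ref{proposition_end_points}. Property (ii) — the probability normalization — is the statement $\oint dz'/\sqrt{R(z')}$ around all cuts $=2\pi i$, another period identity that should follow from the theta-function expressions for $\be-\al$, etc.

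For the reformulation \eqref{edf3} of $\rho$ on the bands, I would simply take $\rho(x)=\frac{1}{2\pi i}(\om(x-i0)-\om(x+i0))$ with $\om$ as in \eqref{edf1}; collapsing the two boundary integrals gives $\rho(x)=\frac{1}{\pi}\big|\int_x^\infty dz'/\sqrt{R(z')}\big|$ restricted appropriately, and deforming the contour to run along the adjacent cut yields the stated real integrals $\frac1\pi\int_\al^x dx'/\sqrt{(x'-\al)(\al'-x')(\be'-x')(\be-x')}$ on $[\al,\al']$ and symmetrically on $[\be',\be]$; positivity on the open bands is manifest from the integrand, and the upper bound $\rho<1/(2\ga)$ follows from continuity together with $\rho(\al')=\rho(\be')=1/(2\ga)$, the latter being again a period evaluation. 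Finally \eqref{edf3a}, $\int_0^\be\rho\,dx = (1+\z)/2$, should be derived by writing $\int_0^\be \rho\,dx = -\frac{1}{2\pi i}\int_0^\be(\om(x+i0)-\om(x-i0))\,dx$, splitting at $\al'$ and $\be'$, using $\rho=1/(2\ga)$ on $[\al',\be']$ (so that piece contributes $(\be'-\al')/(2\ga)$ plus the part below $\al'$ handled via \eqref{eq8}), and then matching with \eqref{main14}; concretely this reduces to evaluating $\om(0)$, which by \eqref{eq8} satisfies $2\om(0)=-\z$ in the saturated region, i.e. $\om(0)=-\z/2$, and converting the resulting theta-quotient combination via the identities in Appendix \ref{identities}.

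The main obstacle I anticipate is item (iv) and the related period evaluations: showing the jump of $\om$ across the saturated region is \emph{exactly} the constant $\pi i/\ga$, and the normalization $\om\sim 1/z$ at infinity, both require identifying certain complete hyperelliptic (genus-one) period integrals of $1/\sqrt{R}$ with explicit multiples of $\ga$ and $\pi$. This is precisely the point where the uniformization of the curve $y^2=R(z)$ by Jacobi theta functions with nome $q=e^{-\pi^2/(2\ga)}$ is unavoidable, and where Proposition \ref{proposition_end_points} must be invoked as an input (or proved in tandem): the endpoint differences \eqref{eq8b}–\eqref{eq8c} are exactly the data that make these period integrals evaluate correctly. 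Everything else is contour manipulation and elementary estimates on the explicit integrands.
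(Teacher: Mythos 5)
Your route to (\ref{edf1}) and (\ref{edf3}) — posit the resolvent formula, verify analyticity, the $1/z$ normalization, the band relation (\ref{eq8}) and the constant jump $\pi i/\ga$ across $[\al',\be']$, then invoke uniqueness of the constrained equilibrium measure — is legitimate and runs in the opposite direction from the paper, which obtains (\ref{edf1}) constructively: in the $u$-plane the resolvent $\tilde\om$ is forced to be the linear function (\ref{eq17}) (its derivative is a doubly periodic entire function), and undoing the change of variables (\ref{eq9}) with (\ref{eq18b}) gives (\ref{edf1}) at once. The substance is the same either way (the period identities of $1/\sqrt{R}$, i.e.\ Proposition \ref{proposition_end_points}, must be fed in), though note that to conclude via uniqueness you would also need to verify the \emph{inequality} parts of (\ref{eq5}) on $[\al',\be']$ and off $[\al,\be]$ — the differentiated relations (i)--(iv) alone do not single out the minimizer — which is what the paper does later through (\ref{g4b})--(\ref{g3}).

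The genuine gap is in your argument for (\ref{edf3a}). You reduce it to "evaluating $\om(0)$, which by (\ref{eq8}) satisfies $2\om(0)=-\z$." But $0$ lies in the saturated region $(\al',\be')$ (see (\ref{eq4c})), where (\ref{eq8}) does \emph{not} hold: there the Euler--Lagrange condition is only the inequality in (\ref{eq5}), and by (\ref{g4b})--(\ref{g4e}) the function $(\om_++\om_-)-V'$ vanishes at $\al',\be'$ and is strictly monotone on each of $(\al',0)$ and $(0,\be')$, so $\om_+(0)+\om_-(0)\ne -\z+\sign(0^\pm)$. Moreover, even a correct value of $\om(0)$ would not yield $\int_0^\be\rho\,dx$: writing $\int_0^\be\rho=\frac{1}{2\pi i}\int_0^\be(\om_--\om_+)\,dx=\frac{1}{2\pi i}\bigl(g_+(0)-g_-(0)\bigr)$ merely restates the quantity sought. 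The missing ingredient is the evaluation of the incomplete period $\int_{\be'}^\be\rho\,dx$; the paper does this in (\ref{m26})--(\ref{m27d}) by writing $\rho=\frac{1}{iK\pi}u_+$ on $[\be',\be]$, integrating by parts against $r(u)$ from (\ref{eq26g}), and using the explicit value (\ref{eq27}) of $\be$. (A shortcut that avoids this computation: since $\int(\om_++\om_-)\rho\,dx=0$ by antisymmetry and $\om_++\om_-=V'$ on the bands, one gets $\int V'\rho\,dx=-\frac{1}{2\ga}\int_{\al'}^{\be'}\frac{d}{dx}\bigl(g_++g_--V-l\bigr)dx=0$ by (\ref{g4b}), while $\int V'\rho\,dx=2\int_0^\be\rho\,dx-1-\z$, which gives (\ref{edf3a}); but some such argument must replace the appeal to (\ref{eq8}) at $x=0$.)
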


For a proof of this proposition see the next section.

\subsection {$g$-function}
Define the $g$-function on $\C \setminus [-\infty, \be]$ as 
\begin{equation}\label{g0}
g(z)=\int_\al^\be \log(z-x)d\nu_0(x)
\end{equation}
where we take the principal branch for logarithm.

{\bfseries Properties of \( g(z)\):}
\begin{enumerate}
\item \( g(z)\) is analytic in \(\C\setminus
    (-\infty,\be]\).
\item For large $z$,
\begin{equation}\label{g1}
 g(z)=\log z-\sum_{j=1}^\infty \frac{g_j}{z^j}\,,\qquad 
g_j=\int_{\al}^{\be} \frac{x^j}{j}\,d\nu_0(x).
\end{equation}
\item 
\begin{equation}\label{g2}
 g'(z)=\om(z).
\end{equation}
\item From the first relation in (\ref{eq5}) we have that 
\begin{equation}\label{g3a}
g_+(x)+g_-(x)=|x| -\z x+l \quad \textrm{for}\quad x \in [\al,\al'] \cup [\be',\be],
\end{equation}
where $g_+$ and $g_-$ refer to the limiting values of $g$ from the upper and lower half-planes, respectively.
By differentiating this equation we obtain that
\begin{equation}\label{g4}
\om_+(x)+\om_-(x)=g_+'(x)+g_-'(x)=\sign x-\z \quad \textrm{for}\quad x \in [\al,\al'] \cup [\be',\be],
\end{equation}
Consider the function 
\begin{equation}\label{g4a}
f(x)=g_+(x)+g_-(x)-(|x| -\z x+l).
\end{equation}
We have from (\ref{g3a}), (\ref{g4}) that
\begin{equation}\label{g4b}
f(x)=f'(x)=0\quad \textrm{for}\quad x=\al,\al',\be',\be,
\end{equation}
and from (\ref{edf1}) that
\begin{equation}\label{g4c}
f''(x)=-\frac{1}{\sqrt{(x-\al)(x-\al')(x-\be')(x-\be)}}
\quad \textrm{for}\quad x \in (-\infty,\al)\cup(\al',\be') \cup (\be,\infty).
\end{equation}
Since 
\begin{equation}\label{g4d}
f''(x)<0
\quad \textrm{for}\quad x \in (-\infty,\al) \cup (\be,\infty).
\end{equation}
and 
\begin{equation}\label{g4e}
f''(x)>0
\quad \textrm{for}\quad x \in (\al',\be'),\;x\not=0,
\end{equation}
we obtain that 
\begin{equation}\label{g3}
g_+(x)+g_-(x)\,
\left\{
\begin{aligned}
&=|x| -\z x+l \quad \textrm{for}\quad x \in [\al,\al'] \cup [\be',\be], \\
&> |x| -\z x+l \quad \textrm{for}\quad x \in (\al',\be'), \\
&< |x| -\z x+l \quad \textrm{for}\quad x \in \R\setminus[\al,\be].
\end{aligned}
\right.
\end{equation}

\item Equation (\ref{g0}) implies that the function
\begin{equation}\label{g5}
 G(x)\equiv g_+(x)-g_-(x)
\end{equation}
is pure imaginary for all real 
\( x\), and 
\begin{equation}\label{g6}
 G(x)=
\left\{
\begin{aligned}
& 2\pi i \quad\textrm{for} \quad -\infty<x\le \al,\\
& 2\pi i-2\pi i\int_\al^x\rho(s)\,ds \quad\textrm{for} \quad \al\le x\le \al',\\
& 2\pi i \left(\frac{1+\z}{2}-\frac{x}{2\ga}\right) \quad\textrm{for} \quad \al'\le x\le \be',\\
& 2\pi i\int_x^\be\rho(s)\,ds \quad\textrm{for} \quad \be'\le x\le \be,\\
& 0 \quad\textrm{for} \quad \be\le x<\infty.
\end{aligned}
\right.
\end{equation} 
From (\ref{g3}) and (\ref{g6}) we obtain that
\begin{equation}\label{g6a}
 2g_{\pm}(x)=
\left\{
\begin{aligned}
& |x|-\z x+l\pm\left[2\pi i-2\pi i\int_\al^x\rho(s)\,ds\right] \quad\textrm{for} \quad \al\le x\le \al',\\
& |x|-\z x+l\pm 2\pi i\int_x^\be\rho(s)\,ds \quad\textrm{for} \quad \be'\le x\le \be.
\end{aligned}
\right.
\end{equation}
\item 
Also, from (\ref{g6}) 
\begin{equation}\label{g7}
 \left. \frac{dG(x+iy)}{dy}\right|_{y=0}=2\pi \rho(x)>0,
\quad x\in (\al,\be).
\end{equation}
\end{enumerate}
Observe that from (\ref{g3a}) we have that
\begin{equation}\label{g10}
 G(x)=2g_+(x)-V(x)-l=-[2g_-(x)-V(x)-l],\quad x \in [\al,\al'] \cup [\be',\be],
\end{equation}
where $V(x)\equiv |x|-\z x$.

\subsection{Evaluation of the Lagrange multiplier $l$} We have the following proposition.

\begin{prop} \label{proposition_lagrange_multiplier} 
The Lagrange multiplier $l$ solves the equation, 
\begin{equation}\label{lm18}
e^{\frac{l}{2}}=\frac{\pi \th'_1(0)}{2e\th_1(\om)}.
\end{equation}
\end{prop}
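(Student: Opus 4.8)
The plan is to evaluate the Lagrange multiplier $l$ by using the defining Euler-Lagrange condition \eqref{eq5} at a conveniently chosen point and rewriting everything in terms of $g$ and the theta-function formulae already established. The cleanest choice is to work at an endpoint, say $x=\be$, where the first (equality) line of \eqref{eq5} gives $2g(\be)-V(\be)=l$ by \eqref{g3a}, since $g_+(\be)=g_-(\be)=g(\be)$ there. Equivalently, using \eqref{g10}, $l=2g_+(\be)-V(\be)$. Thus the task reduces to computing $g(\be)$, or rather $2g(\be)-\be+\z\be$, explicitly in terms of theta functions evaluated at $\frac{\om}{2}$.

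To get $g(\be)$ I would integrate $g'(z)=\om(z)$ from $\be$ along the real axis toward $+\infty$ and compare with the known large-$z$ behaviour $g(z)=\log z+o(1)$ from \eqref{g1}. Concretely, $g(\be)=\log R-\int_\be^R\om(x)\,dx+o(1)$ as $R\to\infty$, and $\om$ is given by the explicit elliptic integral \eqref{edf1}. The standard device here is the uniformization of the genus-one curve $y^2=R(z)=(z-\al)(z-\al')(z-\be')(z-\be)$ by Jacobi elliptic (or theta) functions: under this substitution the abelian integral defining $\om$ becomes linear in the uniformizing variable, and the endpoint formulae \eqref{eq8a}, the difference formulae \eqref{eq8b}--\eqref{eq8c}, and the logarithmic-derivative identities for the $\th_i$ (listed in the paper's Appendix) convert the resulting expression into $\pi\th_1'(0)/(2e\,\th_1(\om))$. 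The constant $e=\exp(1)$ in the answer is exactly the trace of the $\log R - \int\om$ regularization, i.e. it comes from matching the $\log z$ asymptotics; this is a reassuring internal check that the normalization is being handled correctly.

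An alternative, possibly shorter, route is to exploit the relation $G(x)=2\pi i\int_x^\be\rho(s)\,ds$ on $[\be',\be]$ from \eqref{g6} together with \eqref{g6a}: evaluating $2g_\pm$ at $x=\be'$ gives $l$ in terms of $g_\pm(\be')$ and $\int_{\be'}^\be\rho$, and the latter integral equals $\be-\be'$ over $2\ga$ up to explicit corrections by \eqref{edf3a} and the jump relations. One then again expresses $g(\be')$ via \eqref{g1} and \eqref{edf1}. Either way the computation is an exercise in the elliptic-function calculus that the authors have set up; I would lean on the identities in the Appendix rather than re-deriving them.

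The main obstacle is the bookkeeping in the uniformization: one must fix the correct branch of $\sqrt{R(z)}$ (the paper specifies $\sqrt{R}>0$ for $z>\be$ with cuts on $[\al,\al']$ and $[\be',\be]$), track the half-period shifts so that the four endpoints correspond to the four half-periods $0,\ \pi/2,\ \pi/2+\pi\tau/2,\ \pi\tau/2$ in the right order (this is what pins down which $\th_i$ appears where, matching \eqref{eq8a}), and carry the additive constant from the $\log z$ matching without sign or factor-of-$2$ errors. Once the dictionary between $z$ near $\be$ and the uniformizing variable near the corresponding half-period is in place, the identity \eqref{lm18} should drop out after applying the logarithmic-derivative and quotient identities for theta functions; no genuinely new idea is needed beyond what is already deployed in the proof of Proposition \ref{proposition_end_points}.
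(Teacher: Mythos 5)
Your plan follows essentially the same route as the paper's proof: the authors also set $l=2g(\be)-V(\be)$, write $g(\be)$ as the regularized limit $-\lim_{A\to\infty}\bigl[\int_\be^A\om(z)\,dz-\log A\bigr]$, pass to the uniformizing variable $u$ where $\tilde\om$ is linear, and evaluate the resulting integral of $r(u)$ via the $Z$-function representation and theta identities, with the factor $e^{-1}$ arising exactly from the $\log$-matching as you predict. The proposal is a sound outline of the paper's argument.
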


For a proof of this proposition see the next section.

\section{Proof of Propositions \ref{proposition_end_points},
 \ref{proposition_equilibrium_density}, and \ref{proposition_lagrange_multiplier}}

{\bf Proof of Proposition \ref{proposition_end_points}.} 
Following Zinn-Justin \cite{Z-J1}, we make the following elliptic change of variables:
\begin{equation}\label{eq9}
u(z)=\frac{1}{2}\sqrt{(\be'-\al)(\be-\al')}
\int_{\be}^{z}\frac{dz'}{\sqrt{(z'-\al)(z'-\al')(z'-\be')(z'-\be)}},
\end{equation}
where integration takes place on the sheet on $\sqrt{R(z')}$ specified in Proposition \ref{proposition_equilibrium_density}.
To understand this integral in terms of the usual elliptic integrals, we first make the change of variables
\begin{equation}\label{eq9a}
v(z')=\frac{(\be-z')(\be'-\al)}{(\be'-z')(\be-\al)}\,,
\end{equation}
so that
\begin{equation}\label{eq9b}
 z'=\frac{\be'(\be-\al)v-\be(\be'-\al)}{(\be-\al)v-(\be'-\al)}\,. 
\end{equation}
Note that $v(\be)=0$, $v(\be')=\infty$, and $v(\al)=1$.  When we substitute $v$ into equation (\ref{eq9}), we have 
\begin{equation}\label{eq10}
u(z)=\frac{1}{2k}\int_{0}^{v(z)}\frac{dv}{\sqrt{v(v-1)(v-\frac{1}{k^2})}}\,,
\end{equation}
where
\begin{equation}\label{eq10a}
k=\sqrt{\frac{(\be-\al)(\be'-\al')}{(\be'-\al)(\be-\al')}}\,.
\end{equation}
We next take $v'=\sqrt{v}$, obtaining
\begin{equation}\label{eq11}
u(z)=\int_{0}^{\sqrt{v(z)}}\frac{dv'}{\sqrt{(1-v'^2)(1-k^2 v'^2)}},
\end{equation}
which corresponds to $\sqrt{v(z)}={\rm sn}(u,k)$, so that
\begin{equation}\label{eq11a}
\frac{(\be-z)(\be'-\al)}{(\be'-z)(\be-\al)}=\sn^2(u),\qquad {\rm sn}(u)= \sn(u,k).
\end{equation}
Notice that $u$ maps the upper $z$-plane conformally and bijectively onto the rectangle $[0,K] \times [0,iK']$, and the lower $z$-plane conformally and bijectively onto the rectangle $[0,K] \times [-iK',0]$, where
\begin{equation} \label{eq12}
\begin{aligned}
&K=u(\al)=\int_{0}^{1}\frac{dv'}{\sqrt{(1-v'^2)(1-k^2 v'^2)}} \qquad\textrm{and}\\
&K'=-iu(\be')=\int_{1}^{\frac{1}{k}}\frac{dv'}{\sqrt{(v'^2-1)(1-k^2 v'^2)}} \\
\end{aligned}
\end{equation}
are the usual complete integrals of the first kind.  More specifically (see Fig.~6), 
\begin{enumerate}
  \item The upper (resp. lower) cusp of the interval $[\be', \be]$ is mapped onto the interval $[0, iK']$ (resp. $[0, -iK']$).
  \item  The upper (resp. lower) cusp of the interval $[\al, \al']$ is mapped onto the interval $[K, K+iK']$ (resp. $[K, K-iK']$).
  \item  The interval $[\al', \be']$ is mapped onto the inteval $[iK', K+iK']$ or the inteval $[-iK', K-iK']$, depending on the path of integration.
  \item The remaining part of the real axis, $[-\infty,\al]\cup[\be,\infty]$, is mapped onto the interval $[0,K]$,
with $u(\infty)=u_*=u_\infty$.
\end{enumerate} 
\begin{center}
 \begin{figure}[h]\label{u-plane}
\begin{center}
   \scalebox{0.5}{\includegraphics{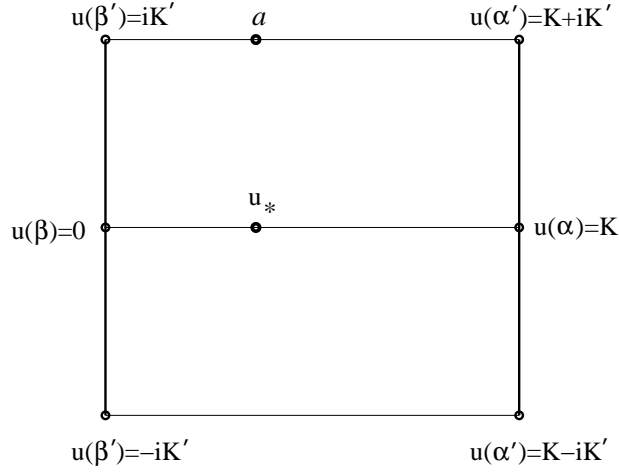}}
\end{center}
        \caption[The $u$-plane.]{The $u$-plane. Here $u_*=u_\infty\equiv u(\infty)$ and $a=u_*+iK'$.}
    \end{figure}
    \end{center}
We will denote the rectangle $[0,K] \times [-iK',iK']$ as $R$, the 
{\it fundamental domain} of the function $z(u)$.  
We can now define
\begin{equation}\label{eq13}
\tilde\omega(u)=\omega(z(u)) \quad \textrm{for} \quad u\in R.
\end{equation} 
The Euler-Lagrange equation (\ref{eq8}) and the equation (\ref{eq7}) then become
\begin{equation}\label{eq14}
\begin{aligned}
\tilde\omega(u)+\tilde\omega(-u)&=1-\z &\ \ \textrm{for}\ u\in [-iK',iK']  \\
\tilde\omega(u)+\tilde\omega(-u+2K)&=-1-\z &\ \ \textrm{for}\ u\in [K-iK',K+iK'] \\
\tilde\omega(u+2iK')-\tilde\omega(u)&=-\frac{i\pi}{\ga} &\ \ \textrm{for}\ u\in [-iK',K-iK']. \\
\end{aligned}
\end{equation}
The function $\omega(z)$ is analytic in $\C \setminus [\al,\be]$, but can be analytically continued from either above or below through any of the cuts $[\al,\al']$, $[\al',\be']$, and $[\be',\be]$. These analytic continuations in the $z$-plane give an analytic continuation of $\tilde\omega$ in the $u$-plane into a neighborhood of $R$, which can then be continued by equations (\ref{eq14}) to the entire $u$-plane.  We therefore have that $\tilde\omega$ is analytic and satisfies equations (\ref{eq14}) throughout the $u$-plane.  The first two equations of (\ref{eq14}) can be combined as
\begin{equation}\label{eq15}
\tilde\omega(u+2K)=\tilde\omega(u)-2.
\end{equation}
It therefore follows that $\tilde\omega$ is a linear function of $u$, as its derivative is a doubly periodic entire function.  We also know from the fact that $\omega(z)\sim \frac{1}{z}$ at infinity that
\begin{equation}\label{eq16}
\tilde\omega(u)=-\frac{2}{\sqrt{(\be'-\al)(\be-\al')}}(u-u_\infty)+O(u-u_\infty)^2
\end{equation} 
in some neighborhood of $u_\infty$, where $u_\infty$ is the image of infinity under the map $u(z)$.  It thus follows from (\ref{eq14}), (\ref{eq15}), and (\ref{eq16}) that
\begin{equation}\label{eq17}
\tilde\omega(u)=-\frac{1}{K}(u-u_\infty)
\end{equation}
and that
\begin{align}
\frac{K'}{K}&=\frac{\pi}{2\ga} \label{eq18a}\\
\sqrt{(\be'-\al)(\be-\al')}&=2K  \label{eq18b}\\
\frac{u_\infty}{K}&=\frac{1-\z}{2}.\label{eq18c}
\end{align}
From (\ref{eq11a}) we obtain that
\begin{equation}\label{eq18e}
\frac{\be'-\al}{\be-\al}=\sn^2(u_\infty).
\end{equation}
This implies that
\begin{equation}\label{eq18f}
\begin{aligned}
\cn^2(u_\infty)&=1-\sn^2(u_\infty)=1-\frac{\be'-\al}{\be-\al}
=\frac{\be-\be'}{\be-\al}\,,\\
\dn^2(u_\infty)&=1-k^2\sn^2(u_\infty)=1-\frac{(\be-\al)(\be'-\al')}{(\be'-\al)(\be-\al')}
\frac{(\be'-\al)}{(\be-\al)}
=\frac{\be-\be'}{\be-\al'}.
\end{aligned}
\end{equation}

From equations (\ref{eq18b}), (\ref{eq18e}), (\ref{eq18f}) we obtain 
the distances between the turning points in terms of $u_\infty$:
\begin{equation}\label{eq19}
\begin{aligned}
\be-\al&=2K\frac{\dn(u_\infty)}{\sn(u_\infty)\cn(u_\infty)}  \\
\be-\al'&=2K\frac{\cn(u_\infty)}{\sn(u_\infty)\dn(u_\infty)} \\
\be-\be'&=2K\frac{\cn(u_\infty)\dn(u_\infty)}{\sn(u_\infty)}.
\end{aligned}
\end{equation}
The functions sn, cn, and dn are expressed in terms of Jacobi theta functions as follows (see, eg. \cite{WW}),
\begin{equation}\label{eq19a}
\sn(u)=\frac{\th_3(0)}{\th_2(0)}\frac{\th_1(\frac{\pi u}{2K})}{\th_4(\frac{\pi u}{2K})}\,, \quad \cn(u)=\frac{\th_4(0)}{\th_2(0)}\frac{\th_2(\frac{\pi u}{2K})}{\th_4(\frac{\pi u}{2K})}\,, \quad \dn(u)=\frac{\th_4(0)}{\th_3(0)}\frac{\th_3(\frac{\pi u}{2K})}{\th_4(\frac{\pi u}{2K})}\,. 
\end{equation}
By (\ref{eq18a}), the half-period ratio $\tau$ and the elliptic nome $q$ of the theta functions are
\begin{equation}\label{eq19b}
\tau=\frac{iK'}{K}=\frac{i\pi}{2\ga} \quad \textrm{and} \quad q=e^{\frac{-\pi K'}{K}}=e^{\frac{-\pi^2}{2\ga}}.
\end{equation}
If we take into account the fact that
\begin{equation}\label{eq19c}
\th_3(0)^2=\frac{2K}{\pi}
\end{equation}
and equation (\ref{eq18c}), we can write equations for the distances between the turning points that involve the original parameters only:
\begin{equation}\label{eq19d}
\begin{aligned}
\be-\al &=\pi\th_2^2(0)\frac{\th_3(\frac{\om}{2})\th_4(\frac{\om}{2})}{\th_1(\frac{\om}{2})\th_2(\frac{\om}{2})}\,  , 
\qquad
\be-\al'=\pi\th_3(0)^2\frac{\th_1(\frac{\om}{2})\th_3(\frac{\om}{2})}{\th_2(\frac{\om}{2})\th_4(\frac{\om}{2})}\,,
\\
\be'-\al &=\pi\th_3^2(0)\frac{\th_2(\frac{\om}{2})\th_4(\frac{\om}{2})}{\th_1(\frac{\om}{2})\th_3(\frac{\om}{2})},
\end{aligned}
\end{equation}
giving (\ref{eq8c}).  These equations determine the end-points $\al,\al',\be',\be$ up to a shift.
To fix the shift we use the equation (\ref{eq5}) at the points $\al'$ and $\be'$ to obtain
\begin{equation}\label{eq20}
\int_{\al '}^{\be '}\big(\om(z+i0)+\om(z-i0)\big)dz=(1-\z)\be'+(1+\z)\al'.
\end{equation}
Writing this integral in terms of $u$ gives
\begin{equation}\label{eq21}
\int_{iK'}^{K+iK'}{\frac{1}{K}(u-u_\infty)r'(u)du}+ \int_{-iK'}^{K-iK'}{\frac{1}{K}(u-u_\infty)r'(u)du}=(1-\z)\be'+(1+\z)\al'
\end{equation}
where
\begin{equation}\label{eq22}
r(u)=\frac{\be'(\be-\al)\sn^2(u)-\be(\be'-\al)}{(\be-\al)\sn^2(u)-(\be'-\al)}= \frac{\be-\frac{\be'\sn^2(u)}{\sn^2(u_\infty)}}{1-\frac{\sn^2(u)}{\sn^2(u_\infty)}}\quad \textrm{and} \quad r'(u)=\frac{d}{du}r(u).
\end{equation}
Note that $r(\pm iK')=\be'$ and $r(K \pm iK')=\al'$. 
Integrating by parts gives
\begin{equation}\label{eq23}
\frac{2}{K}((K-u_\infty)\al'+ \be'u_\infty)-\int_{i K'}^{K+i K'}\frac{r(u)}{K}du-\int_{-i K'}^{K-i K'}\frac{r(u)}{K}du=(1-\z)\be'+(1+\z)\al'
\end{equation}
or equivalently
\begin{equation}\label{eq24}
\int_{i K'}^{K+i K'}r(u)du+\int_{-i K'}^{K-i K'}r(u)du=0.
\end{equation}
We can evaluate these integrals by first writing $r(u)$ in the form
\begin{equation}\label{eq25}
r(u)=\be+\left(\frac{\be-\be'}{\sn^2(u_\infty)}\right)\frac{\sn^2(u)}{1-\frac{\sn^2(u)}{\sn^2(u_\infty)}}.
\end{equation}
and using the functions
\begin{equation}\label{eq25a}
\Theta(u)=\th_4\left(\frac{\pi u}{2K}\right), \quad Z(u)=\frac{\Theta'(u)}{\Theta(u)}.
\end{equation}
The addition formulae for the sn and $Z$ functions are (see \cite{WW}):
\begin{equation}\label{eq25c}
\begin{aligned}
\sn(u \pm a)&=\frac{\sn(u)\cn(a)\dn(a) \pm \sn(a)\cn(u)\dn(u)}{1-k^2 \sn^2(a)\sn^2(u)}\,, \\
Z(u \pm a)&= Z(u) \pm Z(a) \mp k^2 \sn(u)\sn(a)\sn(u \pm a).
\end{aligned}
\end{equation}
Thus we have
\begin{equation}\label{eq25d}
\begin{aligned}
Z(u-a)-Z(u+a)+2Z(a)&=k^2\sn(u)\sn(a)\big(\sn(u + a)+\sn(u - a)\big)  \\
&=\frac{k^2\sn(u)\sn(a)\left[2\sn(u)\cn(a)\dn(a) \right]}{1-k^2 \sn^2(a)\sn^2(u)} \\
&=\frac{2k^2 \sn(a)\cn(a)\dn(a)\sn^2(u)}{1-k^2 \sn^2(a)\sn^2(u)}.
\end{aligned}
\end{equation}
We also have the half- and quarter-period identities
\begin{equation}\label{eq26e}
\sn(u+iK')=\frac{1}{k\sn(u)}\,, \quad \cn(u+iK')=\frac{-i}{k}\frac{\dn(u)}{\sn(u)}\,, \quad \dn(u+iK')=-i\frac{\cn(u)}{\sn(u)}. 
\end{equation}
In particular, notice that $\frac{1}{\sn(u_\infty)}=k\sn(u_\infty+iK')$.  Using the addition formulae (\ref{eq25d}) we can write $r(u)$ as
\begin{equation}\label{eq26}
\begin{aligned}
r(u)&=\be+\left(\frac{\be-\be'}{2k^2\sn(a)\cn(a)\dn(a)\sn^2(u_\infty)}\right)\left(Z(u-a)-Z(u+a)+2Z(a)\right),
\end{aligned}
\end{equation}
where $a=u_\infty+iK'$, (see Fig.~6).
From (\ref{eq26e}) and (\ref{eq19}), it follows that
\begin{equation}\label{eq26f}
\frac{\be-\be'}{2k^2\sn(u_\infty+iK')\cn(u_\infty+iK')\dn(u_\infty+iK')\sn^2(u_\infty)}=-K.
\end{equation}
Thus we can write (\ref{eq26}) as 
\begin{equation}\label{eq26g}
r(u)=\be-K\left[Z(u-u_\infty-iK')-Z(u+u_\infty+iK')+2Z(u_\infty+iK')\right].
\end{equation}

If we write $u=x+iK'$ in the first integral of (\ref{eq24}), and $u=x-iK'$ in the second, we obtain
\begin{equation}\label{eq26c}
\begin{aligned}
\int_0^K \big[2\be-4KZ(u_\infty+iK')-K[Z(x-u_\infty)-Z(x+u_\infty+2iK') \\
+Z(x-u_\infty-2iK')-Z(x+u_\infty)]\big]dx=0
\end{aligned}
\end{equation}
From the periodic properties of $\th_4$, it follows that
\begin{equation}\label{eq26a}
Z(u\pm 2iK')=Z(u)\mp \frac{\pi i}{K},
\end{equation}
so we can write (\ref{eq26c}) as 
\begin{equation}\label{eq26b}
\int_0^K \bigg(2\be-4KZ(u_\infty+iK')-2\pi i +2K\big[Z(x+u_\infty)-Z(x-u_\infty)\big]\bigg)dx=0
\end{equation}
This equation is readily integrated, as $Z$ is the logarithmic derivative of the $\Theta$ function.  Integrating gives
\begin{equation}\label{eq27a}
\begin{aligned}
0&=\left[(2\be-4KZ(u_\infty+iK')-2\pi i)x+2K\log\frac{\Theta(x+u_\infty)}{\Theta(x-u_\infty)}\right]_{x=0}^K \\
&=2K\be-4K^2Z(u_\infty+iK')-2K\pi i + 2K\log\left(\frac{\Theta(K+u_\infty)}{\Theta(K-u_\infty)}\frac{\Theta(-u_\infty)}{\Theta(u_\infty)}\right).
\end{aligned}
\end{equation}
The logarithmic term in this equation is zero due to the evenness and periodicity (period $2K$) of the $\Theta$ function and the fact that the relevant term in the integration is real on the entire contour of integration.  Thus we have that
\begin{equation}\label{eq27}
\be=2KZ(u_\infty+iK')+\pi i.
\end{equation}
From (\ref{main12}), we can deduce that
\begin{equation}\label{eq29}
Z(u_\infty+iK')=-\frac{\pi}{2K}\left(\frac{\th_2'(\frac{\om}{2})}{\th_2(\frac{\om}{2})}+i\right) 
\end{equation}
and write (\ref{eq27}) as
\begin{equation}\label{eq30}
\be=-\pi\,\frac{\th_2'(\frac{\om}{2})}{\th_2(\frac{\om}{2})}\,.
\end{equation}
This equation, together with equations (\ref{eq19d}) and (\ref{eq19b}), determine the end-points,
$\al,\al',\be',\be$. In fact, similar to (\ref{eq30})
we have the following explicit formulae for the other end-points:
\begin{equation}\label{eq34}
\begin{aligned}
\al&=-\pi\,\frac{\th_1'(\frac{\om}{2})}{\th_1(\frac{\om}{2})}\,,\qquad 
\al'=-\pi\,\frac{\th_4'(\frac{\om}{2})}{\th_4(\frac{\om}{2})}\,,\qquad 
\be'&=-\pi\,\frac{\th_3'(\frac{\om}{2})}{\th_3(\frac{\om}{2})}\,.\qquad 
\end{aligned}
\end{equation}
This follows from (\ref{eq19d}), (\ref{eq30}), and the identities (\ref{4}).
Similarly, in addition to the formulae (\ref{eq19d}) for distances between turning points, we get (\ref{eq8b}).

\medskip

{\it Proof of the formula of Zinn-Justin for $\frac{\al+\al'+\be'+\be}{4}$}.       
We immediately have from (\ref{eq34}) that
\begin{equation}
\frac{\al+\al'+\be'+\be}{4}=-\frac{\pi}{4}\left[\frac{\th_1'(\frac{\om}{2})}{\th_1(\frac{\om}{2})}+\frac{\th_2'(\frac{\om}{2})}{\th_2(\frac{\om}{2})}+\frac{\th_3'(\frac{\om}{2})}{\th_3(\frac{\om}{2})}+\frac{\th_4'(\frac{\om}{2})}{\th_4(\frac{\om}{2})}\right].
\end{equation}
From the identity (\ref{2}), we can deduce that
\begin{equation}\label{8}
\frac{\th_1'(2z)}{\th_1(2z)}=\frac{1}{2}\left[\frac{\th_1'(z)}{\th_1(z)}+\frac{\th_2'(z)}{\th_2(z)}
+\frac{\th_3'(z)}{\th_3(z)}+\frac{\th_4'(z)}{\th_4(z)}\right],
\end{equation}
thus we have 
\begin{equation}\label{9}
\frac{\al+\al'+\be'+\be}{4}=-\frac{\pi}{2}\frac{\th_1'(\om)}{\th_1(\om)}
=-\frac{\pi}{2}\frac{\th_1'(\frac{\pi}{2}+\frac{\pi\z}{2})}{\th_1(\frac{\pi}{2}+\frac{\pi\z}{2})}
=-\frac{\pi}{2}\frac{\th_2'(\frac{\pi\z}{2})}{\th_2(\frac{\pi\z}{2})}.
\end{equation}

{\bf Proof of Proposition \ref{proposition_equilibrium_density}.}
From equations (\ref{eq17}), (\ref{eq9}), and (\ref{eq18b}) we obtain formula (\ref{edf1}), 
cf. \cite{Z-J1}. From formula (\ref{edf1})
and equations (\ref{eq6a}), (\ref{eq7}) we obtain that the equilibrium density function $\rho(x)$ is given by 
formulae (\ref{edf3}).   We are left to prove formula (\ref{edf3a}).

By (\ref{eq9}), (\ref{eq18b}), and (\ref{edf3}), on the interval $[\be',\be]$,
\begin{equation}\label{m26}
\rho(x)=\frac{1}{iK\pi}u_+(x) \quad \textrm{for} \quad x\in[\be',\be].
\end{equation}
It follows that
\begin{equation}\label{m26a}
\int_{\be'}^{\be}\rho(x)dx=\frac{1}{iK \pi}\int_{\be'}^{\be}u_+(x)dx=\frac{1}{iK \pi}\int_{iK'}^{0}ur'(u)du,
\end{equation}
where $r(u)$ is defined in (\ref{eq22}). If we use equation (\ref{eq26g}), together with formula (\ref{eq27}), 
we can write $r(u)$ as
\begin{equation}\label{m26b}
r(u)=i\pi - K\big[Z(u-u_\infty-iK')-Z(u+u_\infty+iK')\big]
\end{equation}
Integrating (\ref{m26a}) by parts, we get
\begin{equation}\label{m27}
\int_{iK'}^{0}ur'(u)du=-\be'iK'-\pi K'
-K\left[\log\left(\frac{\Theta(-u_\infty-iK')\Theta(u_\infty+2iK')}{\Theta(-u_\infty)\Theta(u_\infty+iK')}\right)
\right].
\end{equation}
Using the fact that $\Theta$ is an even function and the identity
\begin{equation}\label{m27a}
\Theta(u+2iK')=e^{i\pi}e^{-i\pi \tau}e^{\frac{-i\pi u}{K}} \Theta(u),
\end{equation}
we can write (\ref{m27}) as
\begin{equation}\label{m27b}
\begin{aligned}
\int_{iK'}^{0}ur'(u)du&=-\be'iK'-\pi K'+K\left(i\pi+\pi\frac{K'}{K}-\frac{i\pi u_\infty}{K}\right) \\
&=i(K\pi -\be'K'-\pi u_\infty).
\end{aligned}
\end{equation}

\begin{rem}
There is some question here as to which branch of the logarithm to take, 
but it is clear that we have chosen the correct branch, as it is the only one that gives 
$0<\int_{\be'}^{\be}\rho(x)dx<1$.
\end{rem}

Thus, from (\ref{m26a}) and (\ref{m27b}), we have
\begin{equation}\label{m27c}
\int_{\be'}^{\be}\rho(x)dx=\frac{1}{iK\pi}i(K\pi -\be'K'-\pi u_\infty)
=1-\frac{\be' K'}{\pi K}-\frac{u_\infty}{K} =1-\frac{\be'}{2\ga}-\frac{1-\z}{2},
\end{equation}
hence by (\ref{edf3}),
\begin{equation}\label{m27d}
\int_{0}^{\be}\rho(x)dx=\int_{0}^{\be'}\rho(x)dx+\int_{\be'}^{\be}\rho(x)dx
=\frac{\be'}{2\ga}+1-\frac{\be'}{2\ga}-\frac{1-\z}{2}=\frac{1+\z}{2},
\end{equation}
which proves formula (\ref{edf3a}).

{\bf Proof of Proposition \ref{proposition_lagrange_multiplier}.}
By taking $x=\be$ we obtain from (\ref{g3}) that
\begin{equation}\label{lm2}
l=2g(\be)-V(\be)=2g(\be)-(1-\z)\be.
\end{equation}
We also have that
\begin{equation}\label{lm3}
\lim_{A\to\infty} [g(A)-\log A]=0,
\end{equation}
hence
\begin{equation}\label{lm4}
\begin{aligned}
l&=-2\lim_{A\to\infty}\left[g(A)-g(\be)-\log A\right]-(1-\z)\be\\
&=-2\lim_{A\to\infty}\left[\int_{\be}^A\om(z)\,dz-\log A\right]-(1-\z)\be.
\end{aligned}
\end{equation}
Writing this integral in terms of $u$ (so $z=r(u)$) gives
\begin{equation}\label{lm5}
\begin{aligned}
l&=2\lim_{A\to\infty}\left[\int_{0}^{B}\frac{1}{K}(u-u_{\infty})r'(u)\,du+\log A\right]-(1-\z)\be\\
&=2\lim_{B\to u_\infty}\left[\int_{0}^{B}\frac{1}{K}(u-u_{\infty})r'(u)\,du+\log r(B)\right]-(1-\z)\be,
\end{aligned}
\end{equation}
where $A=r(B)$. Integrating by parts gives
\begin{equation}\label{lm6}
\begin{aligned}
l=2\lim_{B\to u_\infty}\left[\left.\frac{1}{K}(u-u_{\infty})r(u)\right|_{u=0}^{B}
-\frac{1}{K}\int_{0}^{B}r(u)\,du+\log r(B)\right]-(1-\z)\be.
\end{aligned}
\end{equation}
From (\ref{eq22}) we obtain that $r(0)=\be$ and
\begin{equation}\label{lm7}
\begin{aligned}
\lim_{B\to u_\infty}(B-u_{\infty})r(B)=-\frac{(\be-\be')\sn(u_{\infty})}{2\sn'(u_{\infty})}
=-\frac{(\be-\be')\sn(u_{\infty})}{2\cn(u_{\infty})\dn(u_{\infty})}=-K\,,
\end{aligned}
\end{equation}
hence
\begin{equation}\label{lm8}
\begin{aligned}
l&=2\left[-1+\frac{\be(1-\z)}{2}\right]
-2\lim_{B\to u_\infty}\left[\frac{1}{K}\int_{0}^{B}r(u)\,du-\log r(B)\right]-(1-\z)\be\\
&=-2
-2\lim_{B\to u_\infty}\left[\frac{1}{K}\int_{0}^{B}r(u)\,du-\log r(B)\right].
\end{aligned}
\end{equation}
Using equation (\ref{m26b}) for $r(u)$, we immediately get that
\begin{equation}\label{lm9}
\frac{1}{K} \int_0^B r(u)du = \frac{B\pi i}{K}+\log\left[\frac{\Theta(B+u_\infty+iK')}{\Theta(B-u_\infty-iK')}\right]
\end{equation}
Now using equation (\ref{eq22}) for $r(u)$, we have
\begin{equation}\label{lm10}
\begin{aligned}
&\lim_{B\to u_\infty}\left[\frac{1}{K}\int_{0}^{B}r(u)\,du-\log r(B)\right]\\
&=\frac{u_\infty \pi i}{K}+\lim_{B\to u_\infty}\log\left[\frac{\Theta(B+u_\infty+iK')\big(\sn^2(u_\infty)-\sn^2(B)\big)}{\Theta(B-u_\infty-iK')\big(\be\sn^2(u_\infty)-\be'\sn^2(B)\big)}\right] \\
&=\frac{u_\infty \pi i}{K}+\log\left[\frac{\Theta(2u_\infty+iK')2\sn(u_\infty)\sn'(u_\infty)}{\Theta'(iK')(\be-\be')\sn^2(u_\infty)}\right] \\
&=\frac{u_\infty \pi i}{K}+\log\left[\frac{2e^{\frac{-i\pi u_\infty}{K}}\th_1(\frac{\pi u_\infty}{K})}{\pi \th_1'(0)}\right] \\
&=\log\left[\frac{2\th_1(\om)}{\pi \th_1'(0)}\right].
\end{aligned}
\end{equation}
Plugging this into (\ref{lm8}) gives
\begin{equation}\label{lm16}
l=-2+2\log\left(\frac{\pi \th_1'(0)}{2\th_1(\om)}\right),
\end{equation}
and thus we obtain that
\begin{equation}\label{lm18a}
e^{\frac{l}{2}}=\frac{\pi \th'_1(0)}{2e\th_1(\om)}.
\end{equation}

\section{Riemann-Hilbert approach: Interpolation problem}\label{RHA}

The Riemann-Hilbert approach to discrete orthogonal polynomials
 is based on the following Interpolation Problem (IP), which was
introduced in the paper \cite{BoB} of Borodin and Boyarchenko
under the name of the discrete Riemann-Hilbert problem.
See also the monograph  \cite{BKMM}
of Baik, Kriecherbauer, McLaughlin, and Miller, in which it is called the
Interpolation Problem. 
 
We will consider the lattice $L_n$ defined in (\ref{rw3}) and
the weight $w_n(x)$ defined in (\ref{rw1}). 

{\bf Interpolation Problem}. For a given $n=0,1,\ldots$, find a $2\times 2$ matrix-valued function
$\bold P_n(z)=(\bold P_{nij}(z))_{1\le i,j\le 2}$ with the following properties:
\begin{enumerate}
\item
{\it Analyticity}: $\bold P_n(z)$ is an analytic function of $z$ for $z\in\C\setminus L_n$.
\item
{\it Residues at poles}: At each node $x\in L_n$, the elements $\bold P_{n11}(z)$ and
$\bold P_{n21}(z)$ of the matrix $\bold P_n(z)$ are analytic functions of $z$, and the elements $\bold P_{n12}(z)$ and
$\bold P_{n22}(z)$ have a simple pole with the residues,
\begin{equation} \label{IP1}
\underset{z=x}{\rm Res}\; \bold P_{nj2}(z)=w_n(x)\bold P_{nj1}(x),\quad j=1,2.
\end{equation}
\item
{\it Asymptotics at infinity}: There exists a function $r(x)>0$ on  $L_n$ such that 
\begin{equation} \label{IP2a}
\lim_{x\to\infty} r(x)=0,
\end{equation} 
and such that as $z\to\infty$, $\bold P_n(z)$ admits the asymptotic expansion,
\begin{equation} \label{IP2}
\bold P_n(z)\sim \left( \I+\frac {\bold P_1}{z}+\frac {\bold P_2}{z^2}+\ldots\right)
\begin{pmatrix}
z^n & 0 \\
0 & z^{-n}
\end{pmatrix},\qquad z\in \C\setminus \left[\bigcup_{x\in L_n}^\infty D\big(x,r(x)\big)\right],
\end{equation}
where $D(x,r(x))$ denotes a disk of radius $r(x)>0$ centered at $x$ and $\I$ is the identity matrix.
\end{enumerate}

It is not difficult to see (see \cite{BoB} and \cite{BKMM}) that 
the IP has a unique solution, which is
\begin{equation} \label{IP3}
\bold P_n(z)=
\begin{pmatrix}
P_{nn}(z) & C(w_nP_{nn})(z) \\
(h_{n,n-1})^{-1}P_{n,n-1}(z) & (h_{n,n-1})^{-1}C(w_nP_{n,n-1})(z)
\end{pmatrix}
\end{equation}
where the Cauchy transformation $C$ is defined by the formula,
\begin{equation} \label{IP4}
C(f)(z)=\sum_{x\in L_n}\frac{f(x)}{z-x}\,,
\end{equation}
and $P_{nk}(z)=z^k+\ldots$ are monic polynomials orthogonal with the weight $w_n(x)$.
Because of the orthogonality condition, as $z\to\infty$,
\begin{equation} \label{IP5}
C(w_nP_{nk})(z)=\sum_{x\in L_n}^\infty\frac{w_n(x)P_{nk}(x)}{z-x}
\sim \sum_{x\in L_n}^\infty w_n(x)P_{nk}(x)\sum_{j=0}^\infty \frac{x^j}{z^{j+1}}=\frac{h_{nk}}{z^{k+1}}
+\sum_{j=k+2}^\infty \frac{a_j}{z^j}\,,
\end{equation}
which justifies asymptotic expansion (\ref{IP2}).
We have that 
\begin{equation} \label{IP6}
h_{nn}=[\bold P_1]_{12},\qquad h_{n,n-1}^{-1}=[\bold P_1]_{21}.
\end{equation}

\section{Reduction of IP to RHP}
 
\subsection{Preliminary Considerations} We would like to reduce
the Interpolation Problem to a Riemann-Hilbert Problem (RHP). Introduce the function,
\begin{equation} \label{redp1}
\Pi(z)=\frac{2\ga}{n\pi}\sin \frac{n \pi z}{2\ga }\,. 
\end{equation}
Observe that
\begin{equation} \label{redp3}
\Pi(x_k)=0,\quad \Pi'(x_k)=(-1)^k,\quad\exp\left(\frac{in \pi x_k}{2\ga}\right)
=(-1)^k\quad \textrm{for}\quad x_k=\frac{2\ga k}{n}\in L_n\,.
\end{equation}
Introduce the upper triangular matrices,
\begin{equation} \label{redp4}
\bold D^u_{\pm}(z)
=\begin{pmatrix}
1 & -\frac{w_n(z)}{\Pi(z)}e^{\pm \frac{in \pi z}{2\ga}}   \\
0 & 1
\end{pmatrix},
\end{equation}
and the lower triangular matrices,
\begin{equation} \label{redp5}
\bold D^l_{\pm}
=\begin{pmatrix}
\Pi(z)^{-1} & 0   \\
-\frac{1}{w_n(z)}e^{\pm \frac{in\pi z}{2\ga}} & \Pi(z)
\end{pmatrix}
=\begin{pmatrix}
\Pi(z)^{-1} & 0   \\
0 & \Pi(z)
\end{pmatrix}
\begin{pmatrix}
1 & 0   \\
-\frac{1}{\Pi(z)w_n(z)}e^{\pm \frac{in\pi z}{2\ga}} & 1
\end{pmatrix}.
\end{equation}
Define the matrix-valued functions,
\begin{equation} \label{redp6}
\bold R^u_n=\bold P_n(z)\times
\left\{
\begin{aligned}
& \bold D^u_+(z),\quad \textrm{when}\quad \Im z\ge0,\\
& \bold D^u_-(z),\quad \textrm{when}\quad \Im z\le0,
\end{aligned}
\right.
\end{equation}
and 
\begin{equation} \label{redp7}
\bold R^l_n=\bold P_n(z)\times
\left\{
\begin{aligned}
& \bold D^l_+(z),\quad \textrm{when}\quad \Im z\ge0,\\
& \bold D^l_-(z),\quad \textrm{when}\quad \Im z\le0.
\end{aligned}
\right.
\end{equation}
From (\ref{IP3}) we have that
\begin{equation} \label{redp8}
\begin{aligned}
\bold R_n^u(z)&=
\begin{pmatrix}
P_{nn}(z) & -\frac{w_n(z)P_{nn}(z)}{\Pi(z)}e^{\pm \frac{in\pi z}{2\ga}}
+C(w_n P_{nn})(z)   \\
h_{n,n-1}^{-1}P_{n,n-1}(z)  
& -\frac{w_n(z)h_{n,n-1}^{-1}P_{n,n-1}(z)}{\Pi(z)}e^{\pm \frac{in\pi z}{2\ga}}
+h_{n,n-1}^{-1}C(w_n P_{n,n-1})(z) 
\end{pmatrix}, \\
&\quad\textrm{when}\quad \pm \Im z\ge 0,
\end{aligned}
\end{equation}
and
\begin{equation} \label{redp9}
\begin{aligned}
\bold R_n^l(z)&=
\begin{pmatrix}
\frac{P_{nn}(z)}{\Pi(z)}-\frac{C(w_n P_{nn})(z)}{w_n(z)}e^{\pm \frac{in\pi z}{2 \ga}} 
  & \Pi(z)C(w_n P_{nn})(z)   \\
\frac{h_{n-1}^{-1}P_{n,n-1}(z)}{\Pi(z)}-\frac{h_{n,n-1}^{-1}C(w_n P_{n,n-1})(z)}{w_n(z)}e^{\pm \frac{in\pi z}{2\ga}} 
  & \Pi(z)h_{n,n-1}^{-1}C(w_n P_{n,n-1})(z)  
\end{pmatrix}, \\
&\quad\textrm{when}\quad \pm \Im z\ge 0.
\end{aligned}
\end{equation}
Observe that the functions $\bold R_n^u(z)$, $\bold R_n^l(z)$ are meromorphic on the closed quadrants of
the complex plane and they are two-valued on the real and imaginary axes. Their possible poles
are located on the lattice $L_n$. An important result is that, in fact, 
due to some cancellations they do not have any poles at all.
We have the following proposition.

\begin{prop}
The matrix-valued functions $\bold R_n^u(z)$ and $\bold R_n^l(z)$ have no poles and
on the real line they satisfy the following jump conditions at $x\in\R$:
\begin{equation} \label{redp9a}
\bold R_{n+}^u(x)=\bold R_{n-}^u(x) j_R^u(x),\qquad
j_R^u(x)=
\begin{pmatrix}
1 & -\frac{n\pi iw_n(x)}{\ga}\, \\
0 & 1
\end{pmatrix},
\end{equation}
and
\begin{equation} \label{redp9b}
\bold R_{n+}^l(x)=\bold R_{n-}^l(x) j_R^l(x),\qquad
j_R^l(x)=
\begin{pmatrix}
1 & 0 \\
-\frac{n\pi i}{\ga w_n(x)} & 1
\end{pmatrix},
\end{equation}
\end{prop}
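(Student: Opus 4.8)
The plan is to prove both assertions by local inspection, the two ingredients being the coincidences recorded in \eqref{redp3} and the residue condition \eqref{IP1}; the jump is then just the ratio of the two triangular factors used to define $\bold R_n^u$ and $\bold R_n^l$.

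First I would dispose of the poles. From the explicit formulas \eqref{redp8}--\eqref{redp9}, and the facts that $P_{nn},P_{n,n-1}$ are entire, that $w_n(z)^{\pm1}=e^{\mp nV(z)}$ is entire, and that $\Pi(z)$, $C(w_nP_{nn})(z)$, $C(w_nP_{n,n-1})(z)$ are holomorphic off $L_n$, the only candidate poles of $\bold R_n^u,\bold R_n^l$ lie at the nodes $x_k=\tfrac{2\ga k}{n}$. Fixing such a node I would work in a single open quadrant abutting it, so that there $w_n$ is given by one analytic branch — this matters only at $x_0=0$, where in any case $w_n(0)=1$ and the two branches agree — and then compare the two singular contributions. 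By \eqref{IP4}, near $x_k$ the Cauchy transform $C(w_nP_{nn})(z)$ has polar part $\frac{w_n(x_k)P_{nn}(x_k)}{z-x_k}$ (equivalently this is \eqref{IP1}); on the other hand, using $\Pi(x_k)=0$, $\Pi'(x_k)=(-1)^k$, $e^{\pm in\pi x_k/2\ga}=(-1)^k$, the term $-\frac{w_n(z)P_{nn}(z)}{\Pi(z)}e^{\pm in\pi z/2\ga}$ in the $(1,2)$ entry of $\bold R_n^u$ in \eqref{redp8} has polar part $-\frac{w_n(x_k)P_{nn}(x_k)}{z-x_k}$, so the two cancel. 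The $(2,2)$ entry is the same with $P_{nn}$ replaced by $h_{n,n-1}^{-1}P_{n,n-1}$, and the first column of $\bold R_n^u$ is the first column of $\bold P_n$, hence entire. For $\bold R_n^l$ the situation is mirrored: its second column is $\Pi(z)$ times the second column of $\bold P_n$, which is regular at $x_k$ because $\Pi$ has a simple zero there, while in each first-column entry of \eqref{redp9} the piece $\frac{P_{nn}(z)}{\Pi(z)}$ has polar part $\frac{(-1)^kP_{nn}(x_k)}{z-x_k}$ and $-\frac{C(w_nP_{nn})(z)}{w_n(z)}e^{\pm in\pi z/2\ga}$ has polar part $-\frac{(-1)^kP_{nn}(x_k)}{z-x_k}$, again cancelling. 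Hence all candidate poles are removable and $\bold R_n^u,\bold R_n^l$ are holomorphic on the closed quadrants.

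Next I would read off the jump on $\R$. Since the Cauchy transforms $C(w_nP_{nk})$ are holomorphic across $\R\setminus L_n$ and $P_{nn},P_{n,n-1}$ are entire, $\bold P_n$ is holomorphic on $\R\setminus L_n$, so $\bold P_{n+}=\bold P_{n-}$ there; moreover $V$, and hence $w_n$, is single-valued on the real axis by \eqref{rw7}. Therefore for $x\in\R\setminus L_n$ one has $\bold R^u_{n\pm}(x)=\bold P_n(x)\,\bold D^u_\pm(x)$, so $j^u_R(x)=\bigl(\bold D^u_-(x)\bigr)^{-1}\bold D^u_+(x)$; inserting the explicit $\bold D^u_\pm$ of \eqref{redp4}, this is upper triangular with $(1,2)$ entry $\frac{w_n(x)}{\Pi(x)}\bigl(e^{-in\pi x/2\ga}-e^{in\pi x/2\ga}\bigr)=-\frac{2iw_n(x)\sin(n\pi x/2\ga)}{\Pi(x)}=-\frac{n\pi iw_n(x)}{\ga}$, the last equality from \eqref{redp1}; this is \eqref{redp9a}, and it extends to all of $\R$ by continuity since $\bold R_n^u$ has no poles. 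The computation for $\bold R_n^l$ is the exact analogue, using $\det\bold D^l_\pm=1$ and the inverse of $\bold D^l_-$ read off from \eqref{redp5}, and yields the lower-triangular $j^l_R(x)$ of \eqref{redp9b}.

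I expect no genuine obstacle; the only points demanding care are the sign bookkeeping in the residue cancellation — everything rests on $\Pi'(x_k)=e^{\pm in\pi x_k/2\ga}=(-1)^k$ — and the corner $x_0=0$, where $w_n$ is continuous but not analytic, so the one-quadrant argument must be invoked there. I would also be explicit that $\bold P_n$ has no jump across $\R\setminus L_n$ (true because $C(f)(z)=\sum_{x\in L_n}f(x)/(z-x)$ is meromorphic with poles only at $L_n$), since that is what legitimizes reducing $j^u_R$ and $j^l_R$ to $(\bold D_-)^{-1}\bold D_+$.
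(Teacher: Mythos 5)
Your proposal is correct and follows essentially the same route as the paper: residue cancellation at the nodes via $\Pi(x_k)=0$, $\Pi'(x_k)=(-1)^k$, $e^{\pm in\pi x_k/2\ga}=(-1)^k$ together with the residue condition on the Cauchy transform, and then $j_R^{u,l}=(\bold D_\mp)^{-1}\bold D_\pm$ evaluated on $\R$. The extra care you take at $z=0$ and the explicit remark that $\bold P_n$ has no jump off $L_n$ are sensible refinements of the same argument, not a different method.
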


\begin{proof} It follows from the definition of $\bold R_n^u(z)$ 
that all possible poles of $\bold R_n^u(z)$ are located on the lattice $L_n$.
Let us show that the residue of all these poles is equal to zero.
Consider any $x_k\in L_n$. The residue of the matrix element $\bold R_{n,12}^u(z)$ at $x_k$
is  equal to
\begin{equation} \label{redp10}
\underset{z=x_k}{\rm Res}\;\bold R_{n,12}^u(z)=
-\frac{w_n(x_k)P_{nn}(x_k)}{(-1)^k}(-1)^k+w_n(x_k)P_{nn}(x_k)=0.
\end{equation}
Similarly we get that
\begin{equation} \label{redp11}
\underset{z=x_k}{\rm Res}\;\bold R_{n,22}(z)=0,
\end{equation}
hence $\bold R_n^u(z)$ has no pole at $x_k$.

Similarly, the residue of the matrix element $\bold R_{n,11}^l(z)$ at $x_k$ is  equal to
\begin{equation} \label{redp12}
\underset{z=x_k}{\rm Res}\;\bold R_{n,11}^l(z)=\frac{P_{nn}(x_k)}{(-1)^k}
-\frac{w_n(x_k)P_{nn}(x_k)(-1)^k}{w_n(x_k)}=0.
\end{equation}
In the same way we obtain that
\begin{equation} \label{redp13}
\underset{z=x_k}{\rm Res}\;\bold R_{n,21}(z)=0.
\end{equation}
In the entry $\bold R_{n,21}^l(z)$, the pole of the function $C(w_n P_n)(z)$ at $z=x_k$ 
is cancelled by the zero of the function $\Pi(z)$,
hence $\bold R_{n,21}^l(z)$ has no pole at $x_k$. Similarly, $\bold R_{n,22}^l(z)$ has no pole at $x_k$
as well, hence $\bold R_n^l(z)$ has no pole at $x_k$. 

Let us evaluate to jump matrices at $x\in\R$. From (\ref{redp6}) we have that
\begin{equation} \label{redp14}
j_R^u(x)=\bold D^u_-(x)^{-1}\bold D^u_+(x)=
\begin{pmatrix}
1 & -\frac{w_n(x)}{\Pi(x)} 2i\sin \frac{n\pi x}{2\ga}   \\
0 & 1
\end{pmatrix}
=\begin{pmatrix}
1 & -\frac{n\pi iw_n(x)}{\ga}\, \\
0 & 1
\end{pmatrix},
\end{equation}
which proves (\ref{redp9a}). Similarly,
\begin{equation} \label{redp15}
j_R^l(x)=\bold D^l_-(x)^{-1}\bold D^l_+(x)=
\begin{pmatrix}
1 &  0  \\
-\frac{1}{\Pi(x)w_n(x)} 2i\sin \frac{n\pi x}{2\ga} & 1
\end{pmatrix}
=\begin{pmatrix}
1 & 0 \\
-\frac{n\pi i}{\ga w_n(x)} & 1
\end{pmatrix},
\end{equation}
which proves (\ref{redp9b}).\end{proof}

\subsection{Reduction of IP to RHP}
Let us discuss how to reduce the Interpolation Problem to a Riemann-Hilbert Problem.
We follow the work \cite{BKMM} with some modifications. Denote
\begin{equation} \label{red2}
\Delta=L_n\cap [\al',\be'],\qquad \nabla=L_n\setminus\Delta.
\end{equation}
Consider the oriented contour $\Sigma$ on the complex plane depicted on Fig.~7,
in which the horizontal lines are $\Im z=\ep, 0,-\ep$, where $\ep >0$ is a small positive constant 
which will be determined later, and the vertical segments pass through
the points $z=\al'$ and $z=\be'$. Consider the regions $\Om^\De_{\pm}$ and $\Om^\nabla_{\pm}$ 
bounded by the contour $\Sigma$, see Fig.~7. Observe that the regions $\Om^\nabla_{\pm}$ consist
of two connected components, to the left and to the right of $\Om^\Delta_{\pm}$. 
\begin{center}
 \begin{figure}[h]\label{sigma}
\begin{center}
   \scalebox{0.7}{\includegraphics{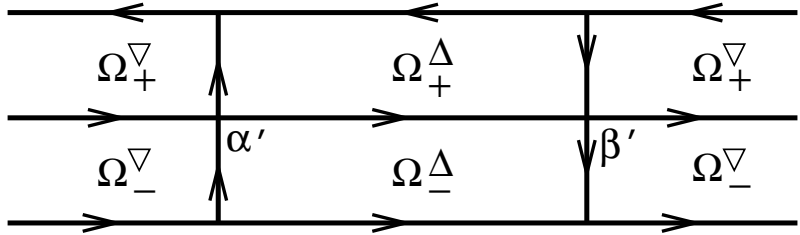}}
\end{center}
        \caption[The contour $\Sigma$.]{The contour $\Sigma$.}
    \end{figure}
\end{center}

Define 
\begin{equation} \label{red3}
\bold R_n(z)=
\left\{
\begin{aligned}
&\bold K_n\bold R_n^u(z)\bold K_n^{-1},\quad \textrm{for}\quad z\in\Om_{\pm}^\nabla,\\
&\bold K_n\bold R_n^l(z)\bold K_n^{-1},\quad \textrm{for}\quad z\in\Om_{\pm}^\De,\\
&\bold K_n \bold P_n(z)\bold K_n^{-1}, \quad \textrm{otherwise}.
\end{aligned}
\right.
\end{equation}
where $\bold K_n=\begin{pmatrix} 1 & 0 \\ 0 & -\frac{n\pi i}{\ga} \end{pmatrix}$.

\begin{center}
 \begin{figure}[h]\label{sigma_R}
\begin{center}
   \scalebox{0.7}{\includegraphics{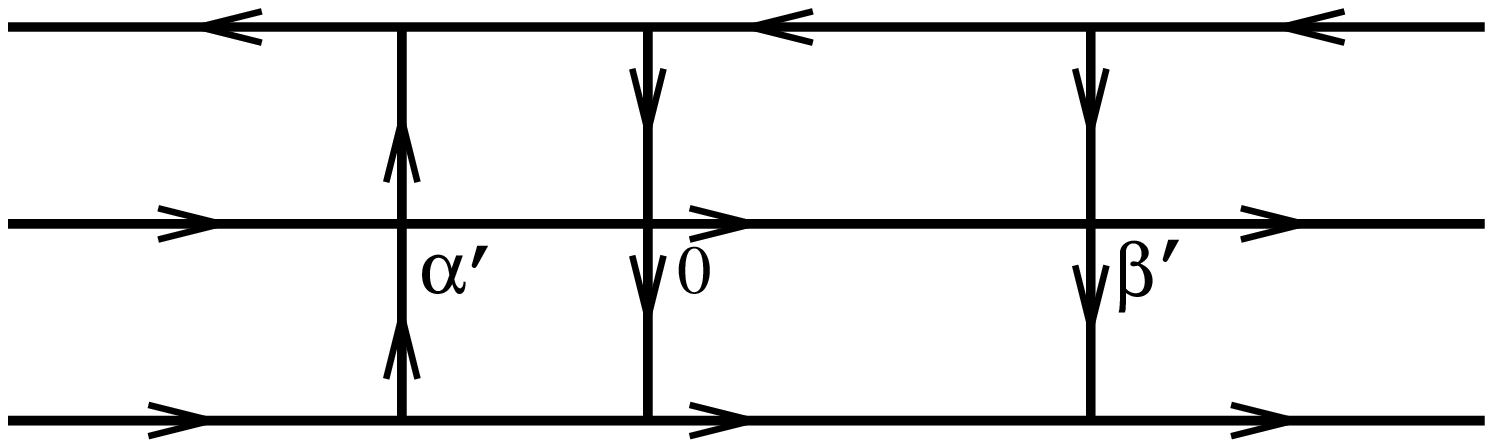}}
\end{center}
        \caption[The contour $\Sigma$.]{The contour $\Sigma_R$.}
    \end{figure}
\end{center}

Define a contour $\Sigma_R$ by adding to the contour $\Sigma$ a vertical segment
$[i\ep,-i\ep]$, see Fig.~8. If $A\subset \C$ is a set on the complex plane and $b\in\C$
then, as usual, we denote
\begin{equation} \label{red3a}
A+b=\{z=a+b,\; a\in A\}.
\end{equation}

\begin{prop}
The matrix-valued function $\bold R_n(z)$ has the following jumps on the contour $\Sigma_R$:
\begin{equation} \label{red4}
\bold R_{n+}(z)=\bold R_{n-}(z)j_R(z),
\end{equation}
where
\begin{equation} \label{red5}
j_R(z)=
\left\{
\begin{aligned}
& \begin{pmatrix} 1 & w(z) \\ 0 & 1 \end{pmatrix} \quad \textrm{when}\quad z\in (-\infty,\al')\cup (\be',\infty),\\
& \begin{pmatrix} 1 & 0 \\ -(\frac{n\pi}{\ga})^2w(z)^{-1} & 1 \end{pmatrix} \quad \textrm{when}\quad z\in  [\al',\be'],\\
& \bold K_n\bold D_{\pm}^u(z)\bold K_n^{-1}=\begin{pmatrix} 1 & -\frac{i\ga}{n\pi}
\frac{w_n(z)e^{\pm\frac{in \pi z}{2\ga}}}{\Pi(z)} \\ 0 & 1 \end{pmatrix} \\ 
&\hskip 3cm\textrm{when }\quad z\in  (-\infty,\al')\cup (\be',\infty)\pm i\ep,\\
&\bold K_n\bold D_{\pm}^l(z)\bold K_n^{-1}=
\begin{pmatrix}
\Pi(z)^{-1} & 0 \\ 
\frac{in\pi}{\ga}\frac{e^{\pm\frac{in\pi z}{2\ga}}}{w_n(z)} & \Pi(z) 
\end{pmatrix}
\quad \textrm{when }\quad z\in  (\al',\be')\pm i\ep\\
& \bold K_n\bold D_{\pm}^l(z)^{-1}\bold D_{\pm}^u(z)\bold K_n^{-1}=
\begin{pmatrix} 
\Pi(z) & \frac{\ga}{n\pi i}w_n(z)e^{\pm\frac{in\pi z}{2\ga}} \\ 
-\frac{n\pi i}{\ga}w_n(z)^{-1}e^{\pm\frac{in\pi z}{2\ga}} 
& \mp \frac{n\pi i}{\ga}e^{\pm\frac{in\pi z}{2\ga}} 
\end{pmatrix}  \\
&\hspace{3 cm} \textrm{when }\quad z\in  (0,\pm i\ep)+\al' \quad
\textrm{or}\quad z\in  (0,\pm i\ep)+ \be',\\
& \bold K_n\bold D^0_{\pm}(z)\bold K_n^{-1} \quad \textrm{when }\quad z\in  (0,\pm i\ep),
\end{aligned}
\right.
\end{equation}
and
\begin{equation} \label{red6}
\bold D^0_{\pm}(z)=
\begin{pmatrix}
1 &  0  \\
-\frac{2\sinh(nz)e^{-n\z z}e^{\pm \frac{in \pi z}{2\ga}}}{\Pi(z)}  & 1
\end{pmatrix}.
\end{equation}
\end{prop}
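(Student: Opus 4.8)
The plan is to compute the jump matrix for $\bold R_n(z)$ on each piece of $\Sigma_R$ directly from the defining formula \eqref{red3}, using the already-established fact (the previous proposition) that $\bold R_n^u$ and $\bold R_n^l$ have no poles and the jump relations \eqref{redp9a}, \eqref{redp9b} on the real line. Since $\bold R_n$ is defined as $\bold K_n \bold R_n^{u}\bold K_n^{-1}$, $\bold K_n \bold R_n^{l}\bold K_n^{-1}$, or $\bold K_n \bold P_n \bold K_n^{-1}$ depending on the region, on each arc of $\Sigma_R$ the jump $j_R = \bold R_{n-}^{-1}\bold R_{n+}$ is a conjugation by $\bold K_n$ of a product of the elementary matrices $\bold D^u_\pm$, $\bold D^l_\pm$ (or their inverses) picked up when crossing from one region to the next. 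So the proof is essentially bookkeeping: enumerate the arcs of $\Sigma_R$, identify which two regions they separate, write down the corresponding transition matrix, and conjugate by $\bold K_n = \diag(1, -\tfrac{n\pi i}{\ga})$.

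First I would treat the three horizontal arcs on the real axis. On $(-\infty,\al')\cup(\be',\infty)$ the contour lies in $\nabla$, and since the real-line jump of $\bold R_n^u$ is $j_R^u$ from \eqref{redp9a}, conjugating by $\bold K_n$ turns $-\tfrac{n\pi i w_n}{\ga}$ in the $(1,2)$ entry into $w_n(x)$, giving the first line of \eqref{red5}; symmetrically, on $[\al',\be']$ the region is $\De$, the relevant jump is $j_R^l$ from \eqref{redp9b}, and conjugation sends $-\tfrac{n\pi i}{\ga w_n}$ to $-(\tfrac{n\pi}{\ga})^2 w_n^{-1}$, the second line. Next I would handle the two horizontal arcs at height $\pm i\ep$ in $\nabla$: crossing such an arc takes one between $\bold K_n\bold P_n \bold K_n^{-1}$ and $\bold K_n \bold R_n^u \bold K_n^{-1} = \bold K_n \bold P_n \bold D^u_\pm \bold K_n^{-1}$, so the jump is $\bold K_n \bold D^u_\pm(z)\bold K_n^{-1}$; plugging in \eqref{redp4} and conjugating gives the third line of \eqref{red5} (with the $e^{\pm in\pi z/2\ga}/\Pi(z)$ factor and the $-i\ga/n\pi$ from the conjugation). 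The horizontal arc at $\pm i\ep$ over $(\al',\be')$ instead separates $\bold P_n$ from $\bold R_n^l = \bold P_n \bold D^l_\pm$, so the jump is $\bold K_n \bold D^l_\pm \bold K_n^{-1}$, giving the fourth line. The vertical segments through $\al'$ and $\be'$ separate $\Om^\nabla_\pm$ from $\Om^\De_\pm$, hence separate $\bold R_n^u$ from $\bold R_n^l$; the transition is $(\bold D^l_\pm)^{-1}\bold D^u_\pm$, and conjugation by $\bold K_n$ yields the fifth line — here I would carefully compute the product of the $2\times 2$ matrices \eqref{redp4}, \eqref{redp5} to get all four entries, including the $\mp\tfrac{n\pi i}{\ga}e^{\pm in\pi z/2\ga}$ in the $(2,2)$ slot.

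The one genuinely new piece is the last line, the jump across the short vertical segment $(0,\pm i\ep)$ on the imaginary axis. This segment lies inside $\Om^\De_\pm$, where $\bold R_n = \bold K_n \bold R_n^l \bold K_n^{-1}$, but $\bold R_n^l$ is itself two-valued across the imaginary axis because $w_n(z)$ (equivalently $V(z)$ in \eqref{rw7}) is two-valued there. So the jump is $\bold K_n \bold D^0_\pm \bold K_n^{-1}$ where $\bold D^0_\pm = (\bold D^l_\mp)^{-1}\bold D^l_\pm$ evaluated with the two determinations of $w_n$ on the two sides; I would compute this explicitly. The key point is that on $\Re z = 0$ we have $V(z) = z - \z z$ from the right and $V(z) = -z - \z z$ from the left, so the two values of $w_n(z) = e^{-nV(z)}$ differ by the factor $e^{2nz}$ (at $\Re z = 0$), and $1/w_n^{\text{right}} - 1/w_n^{\text{left}} \propto e^{n\z z}(e^{nz} - e^{-nz}) = 2\sinh(nz)e^{n\z z}$; keeping track of the $e^{\pm in\pi z/2\ga}$ factor carried along from $\bold D^l_\pm$ and the conjugation by $\bold K_n$ then produces exactly \eqref{red6}.

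I expect the main obstacle to be purely organizational rather than conceptual: making sure the orientation of $\Sigma_R$ is used consistently so that "$+$" and "$-$" sides match the claimed $j_R = \bold R_{n-}^{-1}\bold R_{n+}$, and correctly identifying which elementary matrix (or its inverse) is gained when crossing each arc — in particular on the corner segments through $\al'$ and $\be'$ where both a $\bold D^u$/$\bold D^l$ switch and the region labels interact. Once the region-adjacency graph of $\Sigma_R$ is drawn carefully (following Fig.~7 and Fig.~8), each jump reduces to a $2\times 2$ matrix multiplication and a conjugation by the diagonal $\bold K_n$, both routine. The only subtlety requiring care is the two-valuedness of $w_n$ on the imaginary axis feeding into $\bold D^0_\pm$, and the no-poles statement from the preceding proposition, which guarantees $\bold R_n$ is analytic off $\Sigma_R$ so that these jump relations are the complete description.
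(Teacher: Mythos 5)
Your proposal is correct, and it is precisely the bookkeeping verification that the paper leaves implicit (the proposition is stated there without proof): on each arc of $\Sigma_R$ the jump is the $\bold K_n$-conjugate of the transition matrix between the definitions of $\bold R_n$ in the two adjacent regions, and your identification of those transitions --- $j_R^u$, $j_R^l$ on the real axis, $\bold D^u_\pm$ and $\bold D^l_\pm$ on the horizontal lens boundaries, $(\bold D^l_\pm)^{-1}\bold D^u_\pm$ on the vertical segments through $\al'$, $\be'$, and the two determinations of $w_n$ across the imaginary axis producing $\bold D^0_\pm$ --- is exactly right. The only blemish is a typo in your last computation: since $1/w_n=e^{nV}$ with $V=\pm z-\z z$ on the two sides, the common factor is $e^{-n\z z}$ rather than $e^{n\z z}$, which is consistent with (\ref{red6}).
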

Notice that the jumps on vertical contours close to the origin, $\bold D^0_{\pm}(z)$, are exponentially close to the identity matrix.

\section{First transformation of the RHP}

Define the matrix function $\bold T_n(z)$ as follows from the equation,
\begin{equation}\label{ft1}
\bold R_n(z)=e^{\frac{nl}{2}\sigma_3}\bold T_n(z)e^{n(g(z)-\frac{l}{2})\sigma_3},
\end{equation}
where the Lagrange multiplier $l$ and the function $g(z)$ are as described 
in Section \ref{equilibrium} and $\sigma_3=\begin{pmatrix} 1 & 0 \\ 0 &-1 \end{pmatrix}$ is the third Pauli matrix.  
Then $\bold T_n(z)$ satisfies the following Riemann-Hilbert Problem:
\begin{enumerate}
  \item 
  $\bold T_n(z)$ is analytic in $\C \setminus \Sigma_R$. 
  \item $\bold T_{n+}(z)=\bold T_{n-}(z)j_T(z)$ for $z\in\Sg_R$, where
  \begin{equation}\label{ft2}
   j_T(z)=
  \left\{
  \begin{aligned}
 & e^{n(g_-(z)-\frac{l}{2})\sigma_3}j_R(z)e^{-n(g_+(z)-\frac{l}{2})\sigma_3} \quad \textrm{for} \quad z \in \R \\
  &e^{n(g(z)-\frac{l}{2})\sigma_3}j_R(z)e^{-n(g(z)-\frac{l}{2})\sigma_3} \quad \textrm{for} \quad z \in \Sigma_R \setminus \R.
  \end{aligned}\right.
  \end{equation}
  \item \begin{equation} \label{ft3}
\bold T_n(z)\sim  \I+\frac {\bold T_1}{z}+\frac {\bold T_2}{z^2}+\ldots,
\qquad \textrm{as} \quad z \to \infty.
\end{equation}
\end{enumerate}
From (\ref{g1}) we have that
\begin{equation} \label{ft3a}
g(z)=\log z+O(z^{-1}),
\qquad \textrm{as} \quad z \to \infty.
\end{equation}
This implies that
\begin{equation} \label{ft3b}
[\bold T_1]_{12}=e^{-nl}[\bold R_1]_{12}.
\end{equation}
Let's take a closer look at the behavior of the jump matrix $j_T$ described in (\ref{ft2}) on the horizontal segments of $\Sigma_R$.  We have that 
\begin{equation} \label{ft4}
j_T(z)=
\left\{
\begin{aligned}
&\begin{pmatrix} e^{-nG(z)} & e^{n(g_+(z)+g_-(z)-V(z)-l)} \\ 0 & e^{nG(z)} \end{pmatrix}
\quad \textrm{when}\quad z\in (-\infty,\al')\cup (\be',\infty),\\
&\begin{pmatrix} e^{-nG(z)} & 0 \\ -(\frac{n\pi}{\ga})^2 e^{-n(g_+(z)+g_-(z)-V(z)-l)} & e^{nG(z)} \end{pmatrix}\quad \textrm{when}\quad z\in  (\al',\be'),\\
&\begin{pmatrix} 1 & \pm\frac{e^{\pm nG(z)}}{1-e^{\frac{\mp i n\pi}{\ga}}e^{\frac{\ep n \pi x}{\ga}}} \\ 0 & 1 \end{pmatrix}\quad \textrm{when}\quad z= x \pm i \ep \in  (\al, \al')\cup (\be',\be)\pm i\ep,\\
&\begin{pmatrix} 1 & \pm\frac{e^{ n(2g(z)-l-V(z))}}{1-e^{\frac{\mp i n\pi x}{\ga}}e^{\frac{\ep n \pi}{\ga}}} \\ 0 & 1 \end{pmatrix}\quad \textrm{when}\quad z= x \pm i \ep \in  (-\infty, \al)\cup (\be,\infty)\pm i\ep,\\
&\begin{pmatrix} \Pi(z)^{-1} & 0 \\ \frac{in\pi}{\ga}e^{\pm\frac{in \pi x}{2\ga}}e^{-n(2g(z)-V(z)-l)} & \Pi(z)\end{pmatrix}\quad \textrm{when}\quad z\in  (\al',\be')\pm i\ep.
\end{aligned}\right.
\end{equation}
According to the properties of the $g$-function, we have the following proposition:
\begin{prop}
The jump function $j_T$ has the following large $n$ asymptotics:
\begin{equation}\label{ft6}
j_T(z)=
\left\{
\begin{aligned}
&\begin{pmatrix} e^{-nG(z)} & 0 \\ O(e^{-nC(z)}) & e^{nG(z)} \end{pmatrix} 
\quad &\textrm{for} \quad &z\in  (\al', \be'), \\
&\begin{pmatrix} e^{-nG(z)} & 1 \\ 0 & e^{nG(z)} \end{pmatrix}
\quad &\textrm{for} \quad &z\in (\al, \al')\cup(\be',\be), \\
&\begin{pmatrix}1 & O(e^{-nC(z)}) \\ 0 & 1 \end{pmatrix} 
\quad &\textrm{for} \quad &z\in  (-\infty,\al)\cup(\be,\infty), \\
&\begin{pmatrix} 1 & e^{\pm nG(z)}O(e^{-\frac{\ep n\pi}{\ga}}) \\ 0 & 1 \end{pmatrix}
 \quad &\textrm{for}\quad &z \in  (\al,\al')\cup (\be',\be)\pm i\ep,
\end{aligned} \right. 
\end{equation}
where $C(z)$ is a positive continuous function on any subset of the given interval which is bounded away from the endpoints of each interval and satisfies
\begin{equation}\label{ft7}
C(z)>c|z+1| \ \textrm{for some} \ c>0.
\end{equation}
\end{prop}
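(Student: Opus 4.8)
The plan is to read off (\ref{ft6}) from the exact formulae (\ref{ft4}) for $j_T$ by inserting the properties of the $g$-function proved in Section~\ref{equilibrium} and estimating the exponentials that appear. The three inputs are: from (\ref{g3}), the quantity $g_+(x)+g_-(x)-V(x)-l$ vanishes on $[\al,\al']\cup[\be',\be]$, is strictly positive on $(\al',\be')$, and strictly negative on $\R\setminus[\al,\be]$; from (\ref{g10}), $2g_\pm(x)-V(x)-l=\pm G(x)$ on the bands, so that the function $2g(z)-V(z)-l$ occurring off the axis is precisely the analytic continuation of $G$ from the corresponding lip; and from (\ref{g1}), $g(z)=\log z+O(z^{-1})$ at infinity.

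On the real segments the argument is immediate. On $(\al,\al')\cup(\be',\be)$ the $(1,2)$ entry in the first line of (\ref{ft4}) is $e^{n(g_++g_--V-l)}=1$, which is the second line of (\ref{ft6}) with no error. On $(\al',\be')$ the $(2,1)$ entry is $-(n\pi/\ga)^2e^{-n(g_++g_--V-l)}$; since $g_++g_--V-l$ is continuous and strictly positive on any subset bounded away from $\al'$ and $\be'$ (it is positive at the origin too, by (\ref{g3})), the prefactor $(n\pi/\ga)^2$ is absorbed into the exponential and this entry is $O(e^{-nC(z)})$ for $C$ any positive continuous minorant of $g_++g_--V-l$; on a bounded set $C(z)>c|z+1|$ is automatic. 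On $(-\infty,\al)\cup(\be,\infty)$ the $(1,2)$ entry is $e^{n(g_++g_--V-l)}$ with negative exponent; using $\Re(g_++g_-)=2\log|z|+O(1)$ together with $V(z)=(1-\z)z$ as $z\to+\infty$ and $V(z)=(1+\z)|z|$ as $z\to-\infty$, and $1\pm\z>0$, the linear term beats the logarithm, so $g_++g_--V-l\le-c|z+1|$ throughout; hence the entry is $O(e^{-nC(z)})$ with $C(z)>c|z+1|$, giving the third line of (\ref{ft6}). The parts of $\Sigma_R$ at height $\pm i\ep$ over $(-\infty,\al)\cup(\be,\infty)$ obey the same bound, since the extra denominator in the fourth line of (\ref{ft4}) only helps.

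For the lens boundaries over the bands I would start from the jump matrix $\bold K_n\bold D^u_\pm(z)\bold K_n^{-1}$ in (\ref{red5}): conjugation by $e^{n(g(z)-l/2)\sigma_3}$ multiplies its $(1,2)$ entry by $e^{n(2g(z)-l)}$, and writing $\Pi(z)=\frac{\ga}{in\pi}(e^{in\pi z/2\ga}-e^{-in\pi z/2\ga})$ and $w_n(z)=e^{-nV(z)}$ collapses that entry to $\pm e^{n(2g(z)-V(z)-l)}(1-e^{\mp in\pi z/\ga})^{-1}$. By (\ref{g10}) the numerator is $e^{\pm nG(z)}$, and on $z=x\pm i\ep$ one has $|e^{\mp in\pi z/\ga}|=e^{\ep n\pi/\ga}$, so the denominator has modulus at least $e^{\ep n\pi/\ga}-1\ge\tfrac12 e^{\ep n\pi/\ga}$ for large $n$. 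This yields $[j_T(z)]_{12}=e^{\pm nG(z)}O(e^{-\ep n\pi/\ga})$, which is the last line of (\ref{ft6}).

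I expect the difficulty to be bookkeeping rather than substance: confirming that the polynomial factors $(n\pi/\ga)^2$ and $\Pi(z)^{\pm1}$ are genuinely swallowed, tracking the correct branch of $G$ on each lens boundary, and, above all, stating the uniformity honestly near the endpoints $\al,\al',\be',\be$, where the band quantities $g_++g_--V-l$ and $G$ degenerate. This is exactly why $C(z)$ and $c$ in the statement are taken relative to sets bounded away from the endpoints; the restriction is harmless because the endpoint parametrices will be built separately in the next sections.
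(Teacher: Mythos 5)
Your proposal is correct and follows the same route the paper intends: the paper offers no separate argument beyond asserting that (\ref{ft6}) follows from (\ref{ft4}) together with the properties (\ref{g3}), (\ref{g6}), (\ref{g10}) of the $g$-function, which is precisely the reading-off and estimation you carry out. Your explicit treatment of the lens-boundary denominator $1-e^{\mp in\pi z/\ga}$ and of the linear growth of $V$ versus the logarithmic growth of $g$ at infinity supplies exactly the details the paper leaves implicit.
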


\section{Second transformation of the RHP}

The second transformation is based on two observations.  The first is the well known 
``opening of the lenses'' in a neighborhood of the unconstrained support of the equilibrium measure.  
Namely, notice that, for $x\in (\al,\al')\cup(\be',\be)$, the jump matrix $j_T(x)$ factorizes as 
\begin{equation}\label{st1}
\begin{aligned}
j_T(x)=\begin{pmatrix} e^{-nG(z)} & 1 \\ 0 & e^{nG(z)} \end{pmatrix}=&\begin{pmatrix} 1 & 0 \\ e^{nG(x)} & 1 \end{pmatrix}
\begin{pmatrix} 0 & 1 \\ -1 & 0 \end{pmatrix} \begin{pmatrix} 1 & 0 \\ e^{-nG(x)} & 1 \end{pmatrix} \\
&=j_-(x)j_M j_+(x),
\end{aligned}
\end{equation}
which allows us to reduce the jump matrix $j_T$ to the one $j_M$ plus asymptotically small 
jumps on the lens boundaries.
The second observation consists of two facts.  Firstly, the jumps on the segments 
$ [\al', \be'] \pm i\ep$ behave, for large $n$, as $\pm e^{\pm\frac{in\pi z}{2\ga}}$.  
Secondly, note that, for $x\in [\al', \be']$, $G(x)$ is a linear function with slope $-\frac{\pi i}{\ga}$.  
With these facts in mind, we make the second transformation of the RHP.  Let
\begin{equation}\label{st2}
\bold S_n(z)=\left\{
\begin{aligned}
&\bold T_n(z)j_+(z)^{-1} \quad \textrm{for} \quad z\in \{(\al, \al')\cup (\be',\be)\}\times (0,i\ep) \\
&\bold T_n(z)j_-(z)\quad \textrm{for} \quad z\in \{(\al, \al')\cup (\be',\be)\}\times (0,-i\ep) \\
&\bold T_n(z)\begin{pmatrix} -\frac{\ga}{n\pi i}e^{-\frac{in\pi z}{2\ga}} & 0 \\ 0 & -\frac{n\pi i}{\ga}e^{\frac{in\pi z}{2\ga}} \end{pmatrix} \quad \textrm{for} \quad z\in (\al', \be') \times (0,i\ep) \\
&\bold T_n(z)\begin{pmatrix} \frac{\ga}{n\pi i}e^{\frac{in\pi z}{2\ga}} & 0 \\ 0 & \frac{n\pi i}{\ga}e^{-\frac{in\pi z}{2\ga}} \end{pmatrix} \quad \textrm{for} \quad z\in (\al', \be') \times (0,-i\ep) \\
&\bold T_n(z)  \quad \textrm{otherwise}.
\end{aligned}\right.
\end{equation}

This function satisfies a similar RHP to $\bold T$, but jumps now occur on a new contour, $\Sigma_S$, 
which is obtained from $\Sigma_R$ by adding the two segments $(\al-i\ep, \al+i\ep)$ and $(\be-i\ep, \be+i\ep)$, see Fig.~9.

\begin{center}
 \begin{figure}[h]\label{sigma_S}
\begin{center}
   \scalebox{0.7}{\includegraphics{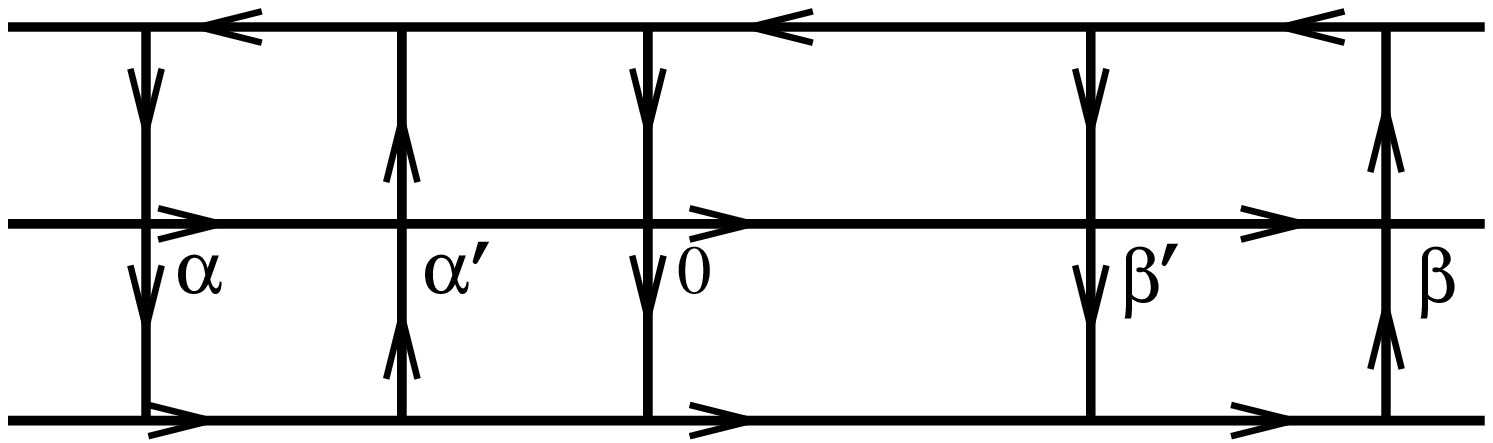}}
\end{center}
        \caption[The contour $\Sigma_S$.]{The contour $\Sigma_S$.}
    \end{figure}
\end{center}

On the horizontal segments for which the jump function $j_S$ differs from $j_T$ , we have that, as $n\to \infty$,
\begin{equation}\label{st3}
j_S(z)=\left\{
\begin{aligned}
& \begin{pmatrix} 0 & 1 \\ -1 & 0 \end{pmatrix}
 \quad &\textrm{for} \quad z\in (\al, \al')\cup (\be',\be) \\
& \begin{pmatrix} 1+O(e^{-\ep n\pi/\ga}) & O(e^{n[G(z)-\ep \pi/\ga]})  \\ -e^{-nG(z)} & 1\end{pmatrix} \quad &\textrm{for} \quad z-i\ep\in (\al, \al')\cup (\be',\be)\\
&\begin{pmatrix} 1+O(e^{-\ep n\pi/\ga}) & O(e^{n[-G(z)-\ep n\pi/\ga]}) \\ e^{nG(z)} & 1\end{pmatrix} \quad &\textrm{for} \quad z+i\ep\in (\al, \al')\cup (\be',\be) \\
&\begin{pmatrix} 1+O(e^{-\ep n\pi / \ga}) & 0 \\ \frac{n\pi i}{\ga}e^{-n(2g(z)-l-V(z))} & 1+O(e^{-\ep n\pi / \ga}) \end{pmatrix} \quad &\textrm{for} \quad z \in [\al',\be'] \pm i\ep \\
&\begin{pmatrix} -e^{-n\pi i(1+\z)} & 0 \\ e^{-n(g_+(z)+g_-(z)-l-V(z))} &  -e^{n\pi i(1+\z)}\end{pmatrix} \quad &\textrm{for} \quad z\in [\al',\be'].
\end{aligned}\right.
\end{equation}
By formula (\ref{g6}) for the $G$-function and the upper constraint (\ref{eq4b}) on the density $\rho$, we obtain that, for sufficiently small $\ep > 0$,
\begin{equation}
0 < \mp \Re G(x\pm i\ep) = 2\pi \rho(x) +O(\ep^2) < \frac{\pi \ep}{\ga} + O(\ep^2). 
\end{equation}
This, combined with property (\ref{g3}) of the $g$-function, imply that all jumps on horizontal segments are exponentially close to the identity matrix, provided that they are bounded away from the interval $[\al,\be]$.
For what follows we denote
\begin{equation} \label{st4}
\Om_n=\pi+n2\pi\int_0^\be \rho(x)\,dx=\pi+n\pi(1+\z),
\end{equation}
so that
\begin{equation} \label{st5}
-e^{-n\pi i(1+\z)}=e^{-i\Om_n}.
\end{equation}

\section{Model RHP}

The model RHP appears when we drop in the jump matrix $j_S(z)$ the terms that vanish as $n\to\infty$:
\begin{enumerate}
  \item 
  $\bold M(z)$ is analytic in $\C \setminus [\al,\be]$. 
  \item $\bold M_{+}(z)=\bold M_{-}(z)j_M(z)$ for $z\in[\al,\be]$, where
  \begin{equation}\label{m1}
   j_M(z)=
  \left\{
  \begin{aligned}
 & \begin{pmatrix} 0 & 1 \\ -1 & 0 \end{pmatrix} \quad \textrm{for} \quad z \in [\al,\al']\cup[\be,\be'] \\
  &\; e^{-i\Omega_n\sigma_3} \quad \textrm{for} \quad z \in [\al',\be'].
  \end{aligned}\right.
  \end{equation}
 \item 
\begin{equation} \label{m2}
\bold M(z)\sim \I+\frac {\bold M_1}{z}+\frac {\bold M_2}{z^2}+\ldots
\qquad \textrm{as} \quad z \to \infty.
\end{equation}
\end{enumerate}

This model problem was first solved, in the general multi-cut case, in \cite{DKMVZ}, and is solved in two steps.  In the first step, we solve the following auxiliary RHP:
\begin{enumerate}
\item $\bold Q(z)$ is analytic in $\C\setminus [\al,\al']\cup[\be',\be]$, 
\item $\bold Q_+(z)= \bold Q_-(z)\begin{pmatrix} 0&1 \\ -1&0 \end{pmatrix}$ for $z\in [\al,\al']\cup[\be',\be]$ 
\item $\bold Q(z)=\I+O(z^{-1})$ as $z\to\infty$.
\end{enumerate}
This RHP has the unique solution (see \cite{DKMVZ})
\begin{equation}\label{m2a}
\bold Q(z) = \begin{pmatrix} \frac{\ga(z)+\ga^{-1}(z)}{2} &  \frac{\ga(z)-\ga^{-1}(z)}{-2i} \\ \frac{\ga(z)-\ga^{-1}(z)}{2i} & \frac{\ga(z)+\ga^{-1}(z)}{2} \end{pmatrix}
\end{equation}
where
\begin{equation}\label{m2b}
\ga(z)=\left(\frac{(z-\al)(z-\be')}{(z-\al')(z-\be)}\right)^{1/4}
\end{equation}
with cuts on $[\al,\al']\cup[\be',\be]$.
 
To solve the model RHP described in (\ref{m1}) and (\ref{m2}), we again use elliptic functions.  Define the function
\begin{equation}\label{m3}
f(s)=\frac{\th_3(s+d+c)}{\th_3(s+d)}
\end{equation}
where $\th_3$ is as defined in (\ref{main8}) with elliptic nome $q=e^{i\pi\tau}=e^{\frac{-\pi^2}{2\ga}}$ ( $\tau=\frac{i\pi}{2\ga}$), and $d$ and $c$ are arbitrary complex numbers.   Notice that $f$ has the following periodic properties:
\begin{equation}\label{m4}
f(s+\pi)=f(s); \quad f(s+\pi\tau)=e^{-2ic}f(s)
\end{equation}
and that $f$ is an even function.  Now let 
\begin{equation}\label{m5}
\tilde{u}(z)=\frac{\pi}{2K}u(z)=\frac{\pi}{2}\int_\be^z\frac{dz'}{\sqrt{R(z')}}
\end{equation}
where $u$ is as defined in (\ref{eq9}).  Then $\tilde{u}$ is two-valued on $[\al, \be]$ and satisfies 
\begin{equation}\label{m6}
\tilde{u}_+(x)-\tilde{u}_-(x)=\pi\tau \quad \textrm{for} \quad x\in[\al',\be']. 
\end{equation}
Also,
\begin{equation}\label{m6a}
\tilde u_\pm(\al)=\frac{\pi}{2}\,,\quad 
\tilde u_\pm(\al')=\frac{\pi}{2}\pm\frac{\pi\tau}{2}\,,\quad 
\tilde u_\pm(\be')=\pm\frac{\pi\tau}{2}\,,\quad 
\tilde u_\pm(\be)=0\,, 
\end{equation}
cf. Fig.~6.
Because $\sqrt{R(x)}_+=-\sqrt{R(x)}_-$ for $x\in[\al,\al']\cup[\be',\be]$, it immediately follows that 
\begin{equation}\label{m7}
\tilde{u}_+(x)+\tilde{u}_-(x)=0 \quad \textrm{for} \quad x\in[\be',\be], 
\end{equation}
and that
\begin{equation}\label{m8}
\begin{aligned}
\tilde{u}_+(x)+\tilde{u}_-(x)&=\tilde{u}_+(\al')-\tilde{u}_+(\be')+\tilde{u}_-(\al')-\tilde{u}_-(\be')=\pi \quad \textrm{for} \quad x\in[\al,\al'].
\end{aligned}
\end{equation}

We now define
\begin{equation}\label{m9}
f_1(z)=\frac{\th_3(\tilde{u}(z)+d+\frac{\Omega_n}{2})}{\th_3(\tilde{u}(z)+d)}\,, \quad f_2(z)=\frac{\th_3(-\tilde{u}(z)+d+\frac{\Omega_n}{2})}{\th_3(-\tilde{u}(z)+d)}\quad \textrm{for}\quad z\in \C\setminus [\al,\be]
\end{equation}
where $d$ is an arbitrary complex number.
It then follows from (\ref{m4}) and(\ref{m6}) that
\begin{equation}\label{m10}
f_{1+}(x)=e^{-i\Omega_n}f_{1-}(x) \quad \textrm{and} \quad f_{2+}(x)=e^{i\Omega_n}f_{2-}(x) \quad \textrm{for} \quad x\in [\al',\be'],
\end{equation}
and from (\ref{m4}), (\ref{m7}), and (\ref{m8}) that
\begin{equation}\label{m11}
f_{1+}(x)=f_{2-}(x) \quad \textrm{and} \quad f_{2+}(x)=f_{1-}(x) \quad \textrm{for} \quad x\in[\al,\al']\cup[\be'\be].
\end{equation}
Define the matrix valued function
\begin{equation}\label{m12}
\bold F(z)=\begin{pmatrix} \frac{\th_3(\tilde{u}(z)+d_1+\frac{\Omega_n}{2})}{\th_3(\tilde{u}(z)+d_1)} & \frac{\th_3(-\tilde{u}(z)+d_1+\frac{\Omega_n}{2})}{\th_3(-\tilde{u}(z)+d_1)} \\ \frac{\th_3(\tilde{u}(z)+d_2+\frac{\Omega_n}{2})}{\th_3(\tilde{u}(z)+d_2)} & \frac{\th_3(-\tilde{u}(z)+d_2+\frac{\Omega_n}{2})}{\th_3(-\tilde{u}(z)+d_2)}\end{pmatrix}
\end{equation}
where $d_1$ and $d_2$ are yet undetermined complex constants.  Then, from (\ref{m10}) and(\ref{m11}) we have that
\begin{equation}\label{m13}
\begin{aligned}
\bold F_+(x)&=\bold F_-(x)\begin{pmatrix} e^{-i\Omega_n} & 0 \\ 0 & e^{i\Omega_n} \end{pmatrix} \quad &\textrm{for} \quad &x\in[\al',\be'] \\
\bold F_+(x)&=\bold F_-(x)\begin{pmatrix} 0 & 1 \\ 1 & 0 \end{pmatrix} \quad &\textrm{for} \quad &x\in[\al,\al']\cup[\be',\be].
\end{aligned}
\end{equation}

We can now combine (\ref{m2a}) and (\ref{m12}) to obtain
\begin{equation}\label{m14}
\bold M(z)= \bold F(\infty)^{-1}\begin{pmatrix} \frac{\ga(z)+\ga^{-1}(z)}{2} \frac{\th_3(\tilde{u}(z)+d_1+\frac{\Omega_n}{2})}{\th_3(\tilde{u}(z)+d_1)} & \frac{\ga(z)-\ga^{-1}(z)}{-2i} \frac{\th_3(-\tilde{u}(z)+d_1+\frac{\Omega_n}{2})}{\th_3(-\tilde{u}(z)+d_1)} \\ \frac{\ga(z)-\ga^{-1}(z)}{2i} \frac{\th_3(\tilde{u}(z)+d_2+\frac{\Omega_n}{2})}{\th_3(\tilde{u}(z)+d_2)} &  \frac{\ga(z)+\ga^{-1}(z)}{2} \frac{\th_3(-\tilde{u}(z)+d_2+\frac{\Omega_n}{2})}{\th_3(-\tilde{u}(z)+d_2)}\end{pmatrix}
\end{equation}
where
\begin{equation}\label{m15}
\bold F(\infty) = \begin{pmatrix} \frac{\th_3(\tilde{u}_\infty+d_1+\frac{\Omega_n}{2})}{\th_3(\tilde{u}_\infty+d_1)} & 0 \\ 0 &  \frac{\th_3(-\tilde{u}_\infty+d_2+\frac{\Omega_n}{2})}{\th_3(-\tilde{u}_\infty+d_2)}\end{pmatrix}.
\end{equation}
and $\tilde{u}_\infty \equiv \tilde{u}(\infty)$.
This matrix satisfies conditions (\ref{m1}) and (\ref{m2}) of the model RHP, but may not be analytic on $\C\setminus [\al,\be]$, as it may have some poles at the zeroes of $\th_3(\pm \tilde{u}(z) + d_{1,2})$.  However, we can choose the constants $d_1$ and $d_2$ such that these zeroes coincide with the zeroes of $\ga(z)\pm\ga^{-1}(z)$ and are thus cancelled in the product.  

First consider the zeroes of $\ga(z)\pm\ga^{-1}(z)$  These are the zeroes of $\ga^2(z)\pm1$ and thus of  $\ga^4(z)-1$, thus there is only one zero, which uniquely solves the equation
\begin{equation}\label{m16}
p(z)\equiv \frac{(z-\al)(z-\be')}{(z-\al')(z-\be)}=1, 
\end{equation}
which is 
\begin{equation}\label{m17}
x_0=\frac{\be\al'-\al\be'}{(\al'-\al)+(\be-\be')} \in (\al', \be').
\end{equation}
It is easy to check that $\ga(x_0)=1$, thus $x_0$ is the unique zero of $\ga(z)-\ga^{-1}(z)$, whereas there are no zeroes of $\ga(z)+\ga^{-1}(z)$ on the specified sheet.  We use here the change of variables $v$ defined in (\ref{eq9a}).  Notice that, by (\ref{eq19}),
\begin{equation}\label{m17a}
v(x_0)=\frac{\be'-\al}{\be'-\al'}=\frac{\dn^2(u_\infty)}{k^2\cn^2(u_\infty)}
\end{equation}
implying that
\begin{equation}\label{m17b}
\sn^2\big(u(x_0)\big)=\frac{\dn^2(u_\infty)}{k^2\cn^2(u_\infty)}.
\end{equation}
Since $x_0\in (\al', \be')$, we must have $u(x_0)\in(iK', K+iK')$ (if we choose to take $u_+$).  Since $\sn^2$ is a one-to-one function on this interval there is a unique point $u_0 \in (iK', K+iK')$ such that $\sn^2(u_0)=\frac{\dn^2(u_\infty)}{k^2\cn^2(u_\infty)}$.  The simple period identity
\begin{equation}\label{m17c}
\sn(u+K+iK')=\frac{\dn(u)}{k\cn(u)}
\end{equation} 
along with (\ref{m17b}) gives that we must have
\begin{equation}\label{m17d}
u_0=u(x_0)=K-u_\infty+iK'
\end{equation}
thus
\begin{equation}\label{m17e}
\tilde{u}(x_0)=\frac{\pi}{2K}(K-u_\infty+iK')=\frac{\tau\pi}{2}+\frac{\pi}{2}-\tilde{u}_\infty.
\end{equation}

We now consider zeroes of the function $\th_3(\tilde{u}(z)-d)\equiv \th_3(-\tilde{u}(z)+d)$.  The zeroes of this function are the solutions to the equation
\begin{equation}\label{m18}
\tilde{u}(z)-d=(2m+1)\frac{\pi}{2}+(2k+1)\frac{\tau\pi}{2}
\end{equation}
for any $m,k\in\Z$.  Because $\tilde{u}$ maps the first sheet of $X$ to the rectangular domain $[0,\frac{\pi}{2}]\times[-\frac{\tau\pi}{2},\frac{\tau\pi}{2}]$, it is clear that this equation can have at most one solution, and without any loss of generality we may take $m=k=0$.  Then, if we want the solution of this equation to be $x_0$, we need to let
\begin{equation}\label{m19}
d=\tilde{u}(x_0)-\frac{\pi}{2}(1+\tau)=-\tilde{u}_\infty.
\end{equation}
This choice of $d$ also ensures that $\th_3(\tilde{u}(z)+d)\equiv \th_3(-\tilde{u}(z)-d)$ has no zeroes on the first sheet of $X$.  We can then let 
\begin{equation}\label{m20}
d_1=d, \quad d_2=-d
\end{equation}
so that (\ref{m14}) and (\ref{m15}) become
\begin{equation}\label{m21}
\begin{aligned}
\bold M(z)&= \bold F(\infty)^{-1}\begin{pmatrix} \frac{\ga(z)+\ga^{-1}(z)}{2} \frac{\th_3(\tilde{u}(z)+d+\frac{\Omega_n}{2})}{\th_3(\tilde{u}(z)+d)} & \frac{\ga(z)-\ga^{-1}(z)}{-2i} \frac{\th_3(-\tilde{u}(z)+d+\frac{\Omega_n}{2})}{\th_3(-\tilde{u}(z)+d)} \\ \frac{\ga(z)-\ga^{-1}(z)}{2i} \frac{\th_3(\tilde{u}(z)-d+\frac{\Omega_n}{2})}{\th_3(\tilde{u}(z)-d)} &  \frac{\ga(z)+\ga^{-1}(z)}{2} \frac{\th_3(-\tilde{u}(z)-d+\frac{\Omega_n}{2})}{\th_3(-\tilde{u}(z)-d)}\end{pmatrix} \\
&=\bold F(\infty)^{-1}\begin{pmatrix} \frac{\ga(z)+\ga^{-1}(z)}{2} \frac{\th_3(\tilde{u}(z)+(n+\frac{1}{2})\om)}{\th_4(\tilde{u}(z)+\frac{\om}{2})} & \frac{\ga(z)-\ga^{-1}(z)}{-2i} \frac{\th_3(\tilde{u}(z)-(n+\frac{1}{2})\om)}{\th_4(\tilde{u}(z)-\frac{\om}{2})} \\ \frac{\ga(z)-\ga^{-1}(z)}{2i} \frac{\th_3(\tilde{u}(z)+(n-\frac{1}{2})\om)}{\th_4(\tilde{u}(z)-\frac{\om}{2})} &  \frac{\ga(z)+\ga^{-1}(z)}{2} \frac{\th_3(\tilde{u}(z)-(n-\frac{1}{2})\om)}{\th_4(\tilde{u}(z)+\frac{\om}{2})}\end{pmatrix}
\end{aligned}
\end{equation}
where
\begin{equation}\label{m22}
\bold F(\infty) = \begin{pmatrix} \frac{\th_3(\frac{\Omega_n}{2})}{\th_3(0)} & 0 \\ 0 &  \frac{\th_3(\frac{\Omega_n}{2})}{\th_3(0)}\end{pmatrix}=\begin{pmatrix} \frac{\th_4(n\om)}{\th_3(0)} & 0 \\ 0 &  \frac{\th_4(n\om)}{\th_3(0)}\end{pmatrix},
\end{equation}
solving the model RHP.  The asymptotics at infinity are
\begin{equation}\label{m23}
\bold M(z)=I + \frac{\bold M_1}{z} + O(z^{-2})
\end{equation}
where the matrix $\bold M_1$ has the form
\begin{equation}\label{m24}
\bold M_1=\begin{pmatrix} * & \frac{\th_3(-\tilde{u}_\infty+d+\frac{\Omega_n}{2})\th_3(\tilde{u}_\infty+d)}{\th_3(\tilde{u}_\infty+d+\frac{\Omega_n}{2})\th_3(-\tilde{u}_\infty+d)}\frac{(\be-\be')+(\al'-\al)}{-4i} \\  \frac{\th_3(\tilde{u}_\infty-d+\frac{\Omega_n}{2})\th_3(-\tilde{u}_\infty-d)}{\th_3(-\tilde{u}_\infty-d+\frac{\Omega_n}{2})\th_3(\tilde{u}_\infty-d)}\frac{(\be-\be')+(\al'-\al)}{4i} & * \end{pmatrix}.
\end{equation}
The matrix $M_1$ can be written in a cleaner fashion and in terms of the original parameters as follows:  

\begin{prop} \label{proposition_model}We have that
\begin{equation}\label{m36}
 \begin{aligned}
\left[\bold M_1\right]_{12}=\frac{i A(\om) \th_4\big((n+1)\om\big)}{\th_4(n\om)}\,,\qquad
 \left[\bold M_1\right]_{21}=\frac{A(\om) \th_4(n\om)}{i\th_4\big((n-1)\om\big)}\,.
 \end{aligned}
 \end{equation}
where
\begin{equation}\label{m37}
\om=\frac{\pi(1+\z)}{2}\,,\qquad  A(\om)=\frac{\pi \th_1'(0)}{2\th_1(\om)}\,.
\end{equation}
\end{prop}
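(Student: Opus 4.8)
The strategy is to read $\bold M_1$ directly off the large-$z$ expansion of the explicit solution $\bold M(z)$ in (\ref{m21})--(\ref{m22}), and then to collapse the resulting ratios of $\th_3$'s to the compact form (\ref{m36}) using the shift/periodicity relations (\ref{main9})--(\ref{main12}) together with a closed-form evaluation of the combined gap $c_1:=(\al'-\al)+(\be-\be')$.

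First I would expand the two factors in (\ref{m21}). From (\ref{m2b}), $\ga^4(z)=\frac{(z-\al)(z-\be')}{(z-\al')(z-\be)}=1+\frac{(\al'+\be)-(\al+\be')}{z}+O(z^{-2})$, so on the prescribed sheet $\ga^{\pm 1}(z)=1\pm\frac{c_1}{4z}+O(z^{-2})$ and therefore $\frac{\ga(z)-\ga^{-1}(z)}{\pm 2i}=\pm\frac{c_1}{4iz}+O(z^{-2})$, while $\tilde u(z)\to\tilde u_\infty$ by (\ref{m5}). Feeding these into (\ref{m21}) and using $\bold F(\infty)^{-1}=\frac{\th_3(0)}{\th_4(n\om)}\,\I$ from (\ref{m22}) reproduces (\ref{m24}); concretely, the $1/z$-coefficients of the off-diagonal entries are $[\bold M_1]_{12}=-\frac{c_1}{4i}\cdot\frac{\th_3(0)}{\th_4(n\om)}\cdot\frac{\th_3(\tilde u_\infty-(n+\half)\om)}{\th_4(\tilde u_\infty-\frac{\om}{2})}$ and $[\bold M_1]_{21}=\frac{c_1}{4i}\cdot\frac{\th_3(0)}{\th_4(n\om)}\cdot\frac{\th_3(\tilde u_\infty+(n-\half)\om)}{\th_4(\tilde u_\infty-\frac{\om}{2})}$. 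Now by (\ref{eq18c}) and (\ref{main14}), $\tilde u_\infty=\frac{\pi}{2}\cdot\frac{1-\z}{2}=\frac{\pi-\om}{2}$, hence $\tilde u_\infty-\frac{\om}{2}=\frac{\pi}{2}-\om$, $\tilde u_\infty-(n+\half)\om=\frac{\pi}{2}-(n+1)\om$, and $\tilde u_\infty+(n-\half)\om=\frac{\pi}{2}+(n-1)\om$. Using the periodicity $\th_3(w+\pi)=\th_3(w)$, evenness $\th_3(-w)=\th_3(w)$, and the shift $\th_3(w\pm\frac{\pi}{2})=\th_4(w)$ (contained in (\ref{main9})--(\ref{main12})), this turns the numerators into $\th_4((n+1)\om)$ and $\th_4((n-1)\om)$, and the denominator into $\th_3(\om)$.

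It remains to evaluate $c_1=(\al'-\al)+(\be-\be')$. From (\ref{eq8b}), $c_1=\pi\th_4^2(0)\,\frac{\th_2^2(\frac{\om}{2})\th_3^2(\frac{\om}{2})+\th_1^2(\frac{\om}{2})\th_4^2(\frac{\om}{2})}{\th_1(\frac{\om}{2})\th_2(\frac{\om}{2})\th_3(\frac{\om}{2})\th_4(\frac{\om}{2})}$. Clearing the denominator with the duplication formula $\th_1(\om)=\frac{2\th_1(\frac{\om}{2})\th_2(\frac{\om}{2})\th_3(\frac{\om}{2})\th_4(\frac{\om}{2})}{\th_2(0)\th_3(0)\th_4(0)}$, rewriting the numerator with the theta identity $\th_2^2(\frac{\om}{2})\th_3^2(\frac{\om}{2})+\th_1^2(\frac{\om}{2})\th_4^2(\frac{\om}{2})=\th_2^2(0)\th_3(0)\th_3(\om)$ (both collected in Appendix \ref{identities}), and using $\th_1'(0)=\th_2(0)\th_3(0)\th_4(0)$, one gets $c_1=\frac{2\pi\th_1'(0)\th_3(\om)}{\th_1(\om)\th_3(0)}$. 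Alternatively, $c_1$ can be obtained on the $u$-plane: near $u_\infty$ one has $z=r(u)\sim -K/(u-u_\infty)$ by (\ref{lm7}), the $\sn,\cn,\dn$-representation of $\ga^4$ gives $\ga(z)-\ga^{-1}(z)\sim\frac{c_1}{2z}$ with $c_1=2K\frac{\dn^2 u_\infty-k^2\sn^2 u_\infty\cn^2 u_\infty}{\sn u_\infty\cn u_\infty\dn u_\infty}$, and the duplication formulas for $\sn$ and $\dn$, the representations (\ref{eq19a}) at $2u_\infty$ (with $2\tilde u_\infty=\pi-\om$), and $\th_3^2(0)=\frac{2K}{\pi}$ from (\ref{eq19c}) give the same value. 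Substituting $c_1$ into the two formulas above, the factors $\th_3(\om)$ and $\th_3(0)$ cancel against the prefactors, and one is left precisely with (\ref{m36})--(\ref{m37}), with $A(\om)=\frac{\pi\th_1'(0)}{2\th_1(\om)}$.

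The step I expect to be the main obstacle is the closed-form evaluation of the gap $(\al'-\al)+(\be-\be')$ --- namely locating and verifying the precise theta identity for $\th_2^2(\frac{\om}{2})\th_3^2(\frac{\om}{2})+\th_1^2(\frac{\om}{2})\th_4^2(\frac{\om}{2})$ (or, on the $u$-plane route, the $\sn/\dn$-duplication bookkeeping) --- together with keeping careful track of which sheet of $\sqrt{R(z)}$ is used, since the signs in the $O(1/z)$ expansion of $\ga(z)$, and hence of $\bold M_1$, depend on it.
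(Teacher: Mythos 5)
Your proposal follows essentially the same route as the paper's own proof in Appendix \ref{proof_model}: read $\bold M_1$ off the $1/z$-expansion of the explicit solution (\ref{m21})--(\ref{m22}), reduce the resulting theta ratios using $\tilde u_\infty=\frac{\pi-\om}{2}$, $\frac{\Om_n}{2}=n\om+\frac{\pi}{2}$ and the shift/periodicity identities, and evaluate the gap $c_1=(\al'-\al)+(\be-\be')$ from (\ref{eq8b}) via the duplication formula (\ref{2}) and the identity (\ref{me4}); your value $c_1=\frac{2\pi\th_1'(0)\th_3(\om)}{\th_1(\om)\th_3(0)}$ agrees with the paper's (\ref{me5}), and the $[12]$ entry then comes out exactly as in (\ref{m36}).

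One caveat concerning the $[21]$ entry: your own (correct) intermediate expression $[\bold M_1]_{21}=\frac{c_1}{4i}\,\frac{\th_3(0)}{\th_4(n\om)}\,\frac{\th_3(\tilde u_\infty+(n-\half)\om)}{\th_4(\tilde u_\infty-\frac{\om}{2})}$ evaluates, since $\tilde u_\infty+(n-\half)\om=\frac{\pi}{2}+(n-1)\om$, to $\frac{A(\om)\,\th_4\big((n-1)\om\big)}{i\,\th_4(n\om)}$, which is the \emph{reciprocal} $\th_4$-ratio of the one asserted in (\ref{m36}); so the claim that your computation lands ``precisely'' on (\ref{m36}) does not hold for that entry. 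The mismatch appears to lie in the statement (\ref{m36})/(\ref{me7}) itself rather than in your calculation, and it is harmless downstream because only $[\bold M_1]_{12}$ is used in the evaluation of $h_{nn}$ and hence of $Z_n$; still, you should either reconcile the two expressions or flag the inversion explicitly rather than assert agreement.
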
 

For a proof of this proposition see Appendix \ref{proof_model} in the end of the paper.
Notice that since $\bold M$ solves the model RHP, we have that
\begin{equation}\label{m38}
\det\bold M(z)=1,\qquad z\in\C.
\end{equation}

 \medskip
 
\section{Parametrix at outer turning points}

We now consider small disks $D(\al,\ep)$ and $D(\be,\ep)$ centered at the outer turning points.  Denote $D=D(\al,\ep) \cup D(\be,\ep)$.  We will seek a local parametrix $\bold U_n(z)$ defined on $D$ such that\begin{enumerate}
\item 
\begin{equation}\label{pm0}
\bold U_n(z) \ \textrm{is analytic on} \ D \setminus \Sigma_S.
\end{equation}
\item
\begin{equation}\label{pm1}
\bold U_{n+}(z)=\bold U_{n-}(z)j_S(z) \quad \textrm{for} \quad z\in D \cap \Sigma_S.
\end{equation}
\item
\begin{equation}\label{pm2}
\bold U_n(z)=\bold M(z) \big(I+O(n^{-1})\big) \quad \textrm{uniformly for} \  z\in \partial D.
\end{equation}
\end{enumerate}
We first construct the parametrix near $\be$.  The jumps $j_S$ are given by
\begin{equation}\label{pm3}
j_S(z)=
\left\{
\begin{aligned}
&\begin{pmatrix} 0 & 1 \\ -1 & 0 \end{pmatrix} \quad \textrm{for} \ z\in (\be-\ep,\be), \\
&\begin{pmatrix} 1& 0 \\- e^{-nG(z)} & 1 \end{pmatrix}\quad \textrm{for} \ z\in (\be,\be+i\ep), \\
&\begin{pmatrix} 1& 0 \\ e^{nG(z)} & 1 \end{pmatrix}\quad \textrm{for} \ z\in (\be,\be-i\ep), \\
&\begin{pmatrix} e^{-nG(z)} & e^{n(g_+(z)+g_-(z)-V(z)-l)} \\ 0 &  e^{nG(z)} \end{pmatrix} \quad \textrm{for} \ z\in (\be,\be+\ep).
\end{aligned}\right.
\end{equation}
If we let
\begin{equation}\label{pm4}
\bold U_n(z)=\bold Q_n(z)e^{-n(g(z)-\frac{V(z)}{2}-\frac{l}{2})\sigma_3},
\end{equation}
then the jump conditions on $\bold Q_n$ become
\begin{equation}\label{pm5}
\bold Q_{n+}(z)=\bold Q_{n-}(z)j_Q(z)
\end{equation}
where
\begin{equation}\label{pm6}
j_Q(z)=
\left\{
\begin{aligned}
&\begin{pmatrix} 0 & 1 \\ -1 & 0 \end{pmatrix} \quad \textrm{for} \ z\in (\be-\ep,\be), \\
&\begin{pmatrix} 1& 0 \\ -1 & 1 \end{pmatrix}\quad \textrm{for} \ z\in (\be,\be+i\ep), \\
&\begin{pmatrix} 1& 0 \\ 1 & 1 \end{pmatrix}\quad \textrm{for} \ z\in (\be,\be-i\ep), \\
&\begin{pmatrix} 1 & 1 \\ 0 &  1 \end{pmatrix} \quad \textrm{for} \ z\in (\be,\be+\ep).
\end{aligned}\right.
\end{equation}
where orientation is from left to right on horizontal contours, and down to up on vertical contours, according to Figure \ref{sigma_S}.

$\bold Q_n$ can be constructed using Airy functions.  The Airy function solves the differential equation $y''=zy$, and has the following asymptotics at infinity:
\begin{equation}\label{pm6a}
\begin{aligned}
&\Ai (z) = \frac{1}{2\sqrt{\pi}z^{1/4}}e^{-\frac{2}{3}z^{3/2}}\left(1-\frac{5}{48}z^{-3/2}+O(z^{-3})\right) \\
&\Ai' (z) = -\frac{1}{2\sqrt{\pi}}z^{1/4}e^{-\frac{2}{3}z^{3/2}}\left(1+\frac{7}{48}z^{-3/2}+O(z^{-3})\right) \\
\end{aligned}
\end{equation}
as $\ z\to \infty$ with $\arg z \in (-\pi+\ep, \pi-\ep)$ for any $\ep > 0$.
If we let
\begin{equation}\label{pm7}
y_0(z)=\Ai (z), \quad y_1(z)=\omega \Ai (\omega z), \quad y_2(z)=\omega^2 \Ai (\omega^2 z)
\end{equation}
where $\omega = e^{\frac{2\pi i}{3}}$, then the functions $y_0, y_1,$ and $y_2$ satisfy the relation
\begin{equation}\label{pm8}
y_0(z)+y_1(z)+y_2(z)=0.
\end{equation}
If we take
\begin{equation}\label{pm9}
\Phi_\be(z)=\left\{
\begin{aligned}
&\begin{pmatrix} y_0(z) & -y_2(z) \\ y_0'(z) & -y_2'(z) \end{pmatrix} \quad \textrm{for} \quad  \arg z \in \left(0,\frac{\pi}{2}\right), \\
&\begin{pmatrix} -y_1(z) & -y_2(z) \\ -y_1'(z) & -y_2'(z) \end{pmatrix} \quad \textrm{for} \quad  \arg z \in \left(\frac{\pi}{2},\pi\right), \\
&\begin{pmatrix} -y_2(z) & y_1(z) \\ -y_2'(z) & y_1'(z) \end{pmatrix} \quad \textrm{for} \quad  \arg z \in \left(-\pi,-\frac{\pi}{2}\right), \\
&\begin{pmatrix} y_0(z) & y_1(z) \\ y_0'(z) & y_1'(z) \end{pmatrix} \quad \textrm{for} \quad  \arg z \in \left(-\frac{\pi}{2},0\right),
\end{aligned}
\right.
\end{equation}
then $\Phi_\be$ satisfies jump conditions similar to (\ref{pm6}), but for jumps on rays emanating from the origin rather than from $\be$.  We thus need to map the disk $D(\be, \ep)$ onto some convex neighborhood of the origin in order to take advantage of the function $\Phi_\be$.  Our mapping should match the asymptotics of the Airy function in order to have the matching property (\ref{pm2}).

To this end notice that, by (\ref{edf3}), for $t \in [\be',\be]$, as $t\to\be$,
\begin{equation}\label{pm9a}
\rho(t)=C(\be-t)^{1/2}+O\big((\be-t)^{3/2}\big), \quad C>0.
\end{equation}
It follows that, as $z\to \be$ for $z \in (\be',\be)$,
\begin{equation}\label{pm10}
\int_z^{\be}\rho(t)dt=C_0(\be-z)^{3/2}+O\big((\be-z)^{5/2}\big)\,, \quad C_0=\frac{2}{3} C.
\end{equation}
Thus
\begin{equation}\label{pm10a}
\psi_\be(z)=-\left\{\frac{3\pi}{2}\int_z^{\be}\rho(t)dt\right\}^{2/3}
\end{equation}
is analytic at $\be$, and so extends to a conformal map from $D(\be, \ep)$ (for small enough $\ep$) onto a convex neighborhood of the origin.  Furthermore, 
\begin{equation}\label{pm10b}
\psi_\be(\be)=0\,,  \quad \psi_\be'(\be)>0\,;
\end{equation}
therefore $\psi_\be$ is real negative on $(\be-\ep, \be)$, and real positive on $(\be, \be+\ep)$.  Also, we can slightly deform the vertical pieces of the contour $\Sigma_S$ close to $\be$, so that
\begin{equation}\label{pm10bb}
\psi_\be\{D(\be,\ep) \cap \Sigma_S\}=(-\ep,\ep) \cup (-i\ep,i\ep).
\end{equation}
We now set
\begin{equation}\label{pm10c}
\bold Q_n(z)=\bold E_n^\be(z)\Phi_\be\big(n^{2/3}\psi_\be(z)\big)
\end{equation}
so that 
\begin{equation}\label{pm11}
\bold U_n(z)=\bold E_n^\be(z)\Phi_\be\big(n^{2/3}\psi_\be(z)\big)e^{-n(g(z)-\frac{V(z)}{2}-\frac{l}{2})\sigma_3}
\end{equation}
where
\begin{equation}\label{pm12}
\bold E_n^\be(z)=\bold M(z)\bold L_n^\be(z)^{-1}\,, \quad \bold L_n^\be(z)=\frac{1}{2\sqrt{\pi}}\begin{pmatrix} n^{-1/6} \psi_\be^{-1/4}(z) & 0 \\ 0 & n^{1/6} \psi_\be^{1/4}(z)\end{pmatrix}\begin{pmatrix} 1 & i \\ -1 & i\end{pmatrix}\,,
\end{equation}
where we take the branch of $ \psi_\be^{1/4}$ which is positive on $(\be, \be+\ep)$ and has a cut on  $(\be-\ep,\be)$.  We claim that $\bold E_n^\be(z)$ is analytic in $D(\be, \ep)$, thus $\bold U(z)$ has the jump conditions of $j_S$.  This is clear, as both $\bold M$ and $\bold L_n^\be$ have jump the same constant jump, $\begin{pmatrix} 0 & 1 \\ -1 & 0 \end{pmatrix}$, on the interval $(\be-\ep, \be]$, and are analytic elsewhere.  The only other possible singularity for either $\bold M$ or $\bold L_n^\be$ is the isolated singularity at $\be$, and this is at most a fourth-root singularity, thus removable.  It follows that $\bold E_n^\be(z)=\bold M(z)\bold L_n^\be(z)^{-1}$ is analytic on $D(\be, \ep)$, thus $\bold U_n$ has the prescribed jumps in $D(\be, \ep)$.

We are left only to prove the matching condition (\ref{pm2}).  Using (\ref{pm6a}), one can check that, for $z$ in each of the sectors of analyticity, $\Phi_\be(n^{2/3}\psi_\be(z))$ satisfies the following asymptotics as $n\to\infty$:
\begin{equation}\label{pm13}
\begin{aligned}
\Phi_\be\big(n^{2/3}\psi_\be(z)\big)&=\frac{1}{2\sqrt{\pi}}n^{\frac{1}{6}\sigma_3}\psi_\be(z)^{-\frac{1}{4}\sigma_3}
\left[ \begin{pmatrix} 1& i \\ -1 & i \end{pmatrix} + \frac{\psi_\be(z)^{-3/2}}{48n}
\begin{pmatrix} -5 & 5i \\ -7 & -7i \end{pmatrix}+O(n^{-2})\right]\\
&\qquad \times e^{-\frac{2}{3}n\psi_\be(z)^{3/2}\sigma_3}
\end{aligned}
\end{equation}
where we always take the principal branch of $\psi_\be(z)^{3/2}$.  As such, $\psi_\be(z)^{3/2}$ is two-valued for $z\in (\be-\ep, \be)$, so that 
\begin{equation}\label{pm14}
\left[\frac{2}{3}\psi_\be(x)^{3/2}\right]_{\pm}=\mp \pi i\int_x^\be \rho(t)dt.
\end{equation}
Notice that, by (\ref{g3}) and (\ref{g6a}), for $x \in (\be-\ep, \be)$, 
\begin{equation}\label{pm15}
2g_{\pm}(x)-V(x)=l\pm2\pi i\int_x^\be\rho(t)dt.
\end{equation}
This implies that
\begin{equation}\label{pm16}
[2g_\pm(\be)-V(\be)]-[2g_\pm(x)-V(x)]=\mp 2\pi i \int_x^\be \rho(t)dt\,. 
\end{equation}
Combining these equations with (\ref{pm14}) gives
\begin{equation}\label{pm17}
\left[\frac{2}{3}\psi_\be(x)^{3/2}\right]_{\pm}=\frac{1}{2}\bigg[\big(2g_{\pm}(\be)-V(\be)\big)-\big(2g_{\pm}(x)-V(x)\big)\bigg].
\end{equation}
This equation can be extended into the upper and lower planes, respectively, giving
\begin{equation}\label{pm18}
\frac{2}{3}\psi_\be(z)^{3/2}=\frac{1}{2}\bigg[\big(2g_{\pm}(\be)-V(\be)\big)-\big(2g(z)-V(z)\big)\bigg] \quad \textrm{for} \ \pm \Im z >0.
\end{equation}
Since, by (\ref{pm15}), $2g_{\pm}(\be)-V(\be)=l$, we get that
\begin{equation}\label{pm19}
\frac{2}{3}\psi_\be(z)^{3/2}=-g(z)+\frac{V(z)}{2}+\frac{l}{2}
\end{equation}
for $z$ throughout $D(\be,\ep)$.
Plugging (\ref{pm13}) and (\ref{pm19}) into (\ref{pm11}), we get
\begin{equation}\label{pm20}
\begin{aligned}
\bold U_n(z)&=\bold M(z)\bold L_n^\be(z)^{-1}\frac{1}{2\sqrt{\pi}}n^{-\frac{1}{6}\sigma_3}\psi_\be(z)^{-\frac{1}{4}\sigma_3}
\bigg[ \begin{pmatrix} 1& i \\ -1 & i \end{pmatrix} + \frac{\psi_\be(z)^{-3/2}}{48n}
\begin{pmatrix} -5 & 5i \\ -7 & -7i \end{pmatrix}\\
&\quad +O(n^{-2})\bigg] \ e^{n(g(z)-\frac{V(z)}{2}-\frac{l}{2})\sigma_3}e^{-n(g(z)-\frac{V(z)}{2}-\frac{l}{2})\sigma_3} \\
&=\bold M(z)\left[ I + \frac{\psi_\be(z)^{-3/2}}{48n}\begin{pmatrix} 1 & 6i \\  6i & -1 \end{pmatrix}+O(n^{-2})\right] .
\end{aligned}
\end{equation}
Thus we have that $\bold U_n$ satisfies conditions (\ref{pm0}), (\ref{pm1}), and (\ref{pm2}).

A similar construction gives the parametrix at the $\al$.  Namely, if we let
\begin{equation}\label{pm21}
\psi_\al(z)=-\left\{\frac{3\pi}{2}\int_\al^z \rho(t)dt\right\}^{2/3},
\end{equation}
then $\psi_\al$ is analytic throughout $D(\al,\ep)$, real valued on the real line, and has negative derivative at $\al$.  Close to $\al$, the jumps $j_Q$ become
\begin{equation}\label{pm21a}
j_Q(z)=
\left\{
\begin{aligned}
&\begin{pmatrix} 1 & 1 \\ 0 & 1 \end{pmatrix} \quad \textrm{for} \ z\in (\al-\ep,\al), \\
&\begin{pmatrix} 1& 0 \\ -1 & 1 \end{pmatrix}\quad \textrm{for} \ z\in (\al,\al+i\ep), \\
&\begin{pmatrix} 1& 0 \\ 1 & 1 \end{pmatrix}\quad \textrm{for} \ z\in (\al,\al-i\ep), \\
&\begin{pmatrix} 0 & 1 \\ -1 & 0 \end{pmatrix} \quad \textrm{for} \ z\in (\al,\al+\ep),
\end{aligned}\right.
\end{equation}
where orientation is taken left to right on horizontal contours, and up to down on vertical contours according to Figure \ref{sigma_S}.  After the change of variables $\psi_\al$ (and a slight deformation of vertical contours), these jumps become the following jumps close to the origin:
\begin{equation}\label{pm21b}
j_Q\big(\psi_\al(z)\big)=\left\{
\begin{aligned}
&\begin{pmatrix} 0 & 1 \\ -1 & 0 \end{pmatrix} \quad \textrm{for} \ \psi_\al(z)\in (-\ep,0), \\
&\begin{pmatrix} 1 & 0 \\ 1 & 1 \end{pmatrix}\quad \textrm{for} \ \psi_\al(z)\in (0,i\ep), \\
&\begin{pmatrix} 1 & 0 \\ -1 & 1 \end{pmatrix}\quad \textrm{for} \ \psi_\al(z)\in (0,-i\ep), \\
&\begin{pmatrix} 1 & 1 \\ 0 & 1 \end{pmatrix} \quad \textrm{for} \ \psi_\al(z)\in (0,\ep),
\end{aligned}\right.
\end{equation}
where orientation is taken right to left on horizontal contours, and down to up on vertical contours.
These jump conditions are satisfied by the function
\begin{equation}\label{pm21c}
\Phi_\al(z)=\Phi_\be(z)\begin{pmatrix} 1 & 0 \\ 0 & -1 \end{pmatrix}.
\end{equation}
Then we can take 
\begin{equation}\label{pm22}
\bold U_n(z)=\bold E_n^\al(z)\Phi_\al\big(n^{2/3}\psi_\al(z)\big)e^{-n(g(z)-\frac{V(z)}{2}-\frac{l}{2})\sigma_3}
\end{equation}
for $z\in D(\al,\ep)$, where
\begin{equation}\label{pm23}
\begin{aligned}
&\bold E_n^\al(z)=\bold M(z)\bold L_n^\al(z)^{-1}, \\
& \bold L_n^\al(z)=\frac{1}{2\sqrt{\pi}}
\begin{pmatrix} n^{-1/6} \psi_\al^{-1/4}(z) & 0 \\ 0 & n^{1/6} \psi_\al^{1/4}(z)\end{pmatrix}
\begin{pmatrix} -1 & i \\ 1 & i\end{pmatrix}.
 \end{aligned}
\end{equation}
Similar to (\ref{pm13}), we have that in each sector of analyticity, $\Phi_\al(n^{2/3}\psi_\al(z))$ satisfies
\begin{equation}\label{pm24}
\begin{aligned}
\Phi_\al\big(n^{2/3}\psi_\al(z)\big)&=\frac{1}{2\sqrt{\pi}}
n^{-\frac{1}{6}\sigma_3}\psi_\al(z)^{-\frac{1}{4}\sigma_3}
\bigg[ \begin{pmatrix} 1& -i \\ -1 & -i \end{pmatrix} + \frac{\psi_\al(z)^{-3/2}}{48n}
\begin{pmatrix} -5 & -5i \\ -7 & 7i \end{pmatrix}+O(n^{-2})\bigg]\\
&\qquad \times e^{-\frac{2}{3}n\psi_\al(z)^{3/2}\sigma_3}.
\end{aligned}
\end{equation}
Once again, we have that, for $x\in (\al,\al+\ep)$, $\psi_\al(x)^{3/2}$ takes limiting values from above and below, so that
\begin{equation}\label{pm25}
\left[\frac{2}{3}\psi_\al(x)^{3/2}\right]_{\pm}=\pm \pi i\int_\al^x \rho(t)dt.
\end{equation}
In analogue to (\ref{pm17}), we have
\begin{equation}\label{pm25a}
\frac{2}{3}\psi_\al(z)^{3/2}=\frac{1}{2}\bigg[\big(2g_{\pm}(\al)-V(\al)\big)-\big(2g(z)-V(z)\big)\bigg]
 \quad \textrm{for} \ \pm \Im z >0.
\end{equation}
Since, by (\ref{pm15}), $2g_{\pm}(\al)-V(\al)=l\pm\pi i$, we get that
\begin{equation}\label{pm26}
\frac{2}{3}\psi_\al(z)^{3/2}=-g(z)+\frac{V(z)}{2}+\frac{l}{2}\pm\pi i \quad \textrm{for} \quad \pm \Im z >0.
\end{equation}
Plugging (\ref{pm26}) into (\ref{pm22}) and (\ref{pm24}) gives, as $n\to \infty$,
\begin{equation}\label{pm27}
\begin{aligned}
\bold U_n(z)&=\bold M(z)\bold L_n^\al(z)^{-1}\frac{1}{2\sqrt{\pi}}
n^{\frac{1}{6}\sigma_3}\psi_\al(z)^{-\frac{1}{4}\sigma_3}
\bigg[ \begin{pmatrix} 1& -i \\ -1 & -i \end{pmatrix} + \frac{\psi_\al(z)^{-3/2}}{48n}
\begin{pmatrix} -5 & -5i \\ -7 & 7i \end{pmatrix}\\
&\quad +O(n^{-2})\bigg]  e^{n(g(z)-\frac{V(z)}{2}-\frac{l}{2})\sigma_3}
\begin{pmatrix} -1 & 0 \\ 0 & -1 \end{pmatrix}e^{-n(g(z)-\frac{V(z)}{2}-\frac{l}{2})\sigma_3} \\
&=\bold M(z)\begin{pmatrix} -1 & i \\ 1 & i \end{pmatrix}^{-1}
\left[ \begin{pmatrix} -1& i \\ 1 & i \end{pmatrix}+ \frac{\psi_\al(z)^{-3/2}}{48n}
\begin{pmatrix} 5 & 5i \\ 7 & -7i \end{pmatrix}+O(n^{-2})\right] \\
&=\bold M(z)\left[ I + \frac{\psi_\al(z)^{-3/2}}{48n}
\begin{pmatrix} 1 & -6i \\  -6i & -1 \end{pmatrix}+O(n^{-2})\right] .
\end{aligned}
\end{equation}

\medskip

\section{Parametrix at the inner turning points}

We now consider small disks $D(\al',\ep)$ and $D(\be',\ep)$ centered at the inner turning points.  Denote $\tilde{D}=D(\al',\ep) \cup D(\be',\ep)$.  We will seek a local parametrix $\bold U_n(z)$ defined on $\tilde{D}$ such that
\begin{enumerate}
\item 
\begin{equation}\label{pmi0}
\bold U_n(z) \ \textrm{is analytic on} \ \tilde{D} \setminus \Sigma_S.
\end{equation}
\item
\begin{equation}\label{pmi1}
\bold U_{n+}(z)=\bold U_{n-}(z)j_S(z) \quad \textrm{for} \quad z\in \tilde{D} \cap \Sigma_S.
\end{equation}
\item
\begin{equation}\label{pmi2}
\bold U_n(z)=\bold M(z) \big(I+O(n^{-1})\big) \quad \textrm{uniformly for} \  z\in \partial \tilde{D}.
\end{equation}
\end{enumerate}
We first construct the parametrix near $\al'$.  Let 
\begin{equation}\label{pmi3}
\bold U_{n}(z)=
\tilde{\bold Q}_{n}(z)e^{\mp \frac{in\pi z}{2\ga}\sigma_3}e^{-n(g(z)-\frac{V(z)}{2}-\frac{l}{2})\sigma_3} \quad \textrm{for} \quad \pm \Im z > 0\,.
\end{equation}
Then the jumps for $\tilde{\bold Q}_n$ are
\begin{equation}\label{pmi4}
j_{\tilde{Q}}(z)=\left\{
\begin{aligned}
&\begin{pmatrix} 0 & 1 \\ -1 & 0 \end{pmatrix} \quad &\textrm{for}\quad z \in (\al'-\ep,\al'), \\
&\begin{pmatrix} -1 & 0 \\ -1 & -1 \end{pmatrix}\quad & \textrm{for}\quad z \in (\al',\al'+\ep), \\
&\begin{pmatrix} 1 & -1 \\ 0 & 1 \end{pmatrix} \quad & \textrm{for}\quad z \in (\al',\al'+i \ep), \\
&\begin{pmatrix} 1 & 1 \\ 0 & 1 \end{pmatrix}  \quad & \textrm{for}\quad z \in (\al',\al'-i \ep), \\
\end{aligned}\right.
\end{equation}
where orientation is taken from left to right on horizontal contours, and down to up on vertical contours according to Figure \ref{sigma_S}.
A proof of this statement is given in Appendix \ref{proof_of_jump}.
We now take
\begin{equation}\label{pmi5}
\Phi_{\al'}(z)=\left\{
\begin{aligned}
&\begin{pmatrix}y_2(z) & -y_0(z) \\ y_2'(z) & -y_0'(z) \end{pmatrix} 
\quad \textrm{for} \quad \arg z \in \left(0,\frac{\pi}{2}\right), \\
&\begin{pmatrix}y_2(z) & y_1(z) \\ y_2'(z) & y_1'(z) \end{pmatrix} 
\quad \textrm{for} \quad \arg z \in \left(\frac{\pi}{2},\pi \right), \\
&\begin{pmatrix}y_1(z) & -y_2(z) \\ y_1'(z) & -y_2'(z) \end{pmatrix} 
\quad \textrm{for} \quad \arg z \in \left(-\pi,-\frac{\pi}{2}\right), \\
&\begin{pmatrix}y_1(z) & y_0(z) \\ y_1'(z) & y_0'(z) \end{pmatrix} 
\quad \textrm{for} \quad \arg z \in \left(-\frac{\pi}{2},0 \right).
\end{aligned}
\right.
\end{equation}
Then $\Phi_{\al'}(z)$ solves a RHP similar to that of $\tilde{\bold Q}_n$, 
but for jumps emanating from the origin rather than from $\al'$.

Notice that, by (\ref{edf3}), for $t \in [\al,\al']$, as $t\to\al'$,
\begin{equation}\label{pmi6}
\rho(t)=\frac{1}{2\ga}-C(\al'-t)^{1/2}+O\big((\al'-t)^{3/2}\big), \quad C>0.
\end{equation}
It follows that, as $z\to \al'$, for $z \in (\al,\al')$,
\begin{equation}\label{pmi7}
\int_z^{\al'}\left(\frac{1}{2\ga}-\rho(t)\right)dt=C(\al'-z)^{3/2}+O((\al'-z)^{5/2})\,, \quad C_0=\frac{2}{3}C 
\end{equation}
Thus,
\begin{equation}\label{pmi8}
\psi_{\al'}(z)=-\left\{\frac{3\pi}{2}\int_z^{\al'}\left(\frac{1}{2\ga}-\rho(t)\right)dt\right\}^{2/3}
\end{equation}
is analytic at $\al'$, and so extends to a conformal map from $D(\al', \ep)$ onto a convex neighborhood of the origin.  Furthermore, 
\begin{equation}\label{pmi9}
\psi_{\al'}(\al')=0\,, \quad \psi_{\al'}'(\al')>0;
\end{equation}
consequently, $\psi_{\al'}$ is real negative on $(\al'-\ep, \al')$, and real positive on $(\al', \al'+\ep)$.  Again, we can slightly deform the vertical pieces of the contour $\Sigma_S$ close to $\al'$, so that
\begin{equation}\label{pm10bbb}
\psi_{\al'}\big\{D(\al',\ep) \cap \Sigma_S\big\}=(-\ep,\ep) \cup (-i\ep,i\ep)
\end{equation}
We now take 
\begin{equation}\label{pmi11}
\tilde{\bold Q}_n(z)=\bold E_n^{\al'}(z)\Phi_{\al'}\big(n^{2/3}\psi_{\al'}(z)\big)
\end{equation}
where
\begin{equation}\label{pmi12}
\begin{aligned}
&\bold E^{\al'}_n(z)=\bold M(z)e^{\pm\frac{i\Omega_n}{2}\sigma_3}\tilde{\bold L}_n(z)^{-1} \quad \textrm{for} \quad \pm \Im z \geq 0, \\
&\bold L^{\al'}_n(z)=\frac{1}{2\sqrt{\pi}}\begin{pmatrix} n^{-1/6} \psi_{\al'}^{-1/4}(z) & 0 \\ 0 & n^{1/6} \psi_{\al'}^{1/4}(z)\end{pmatrix}\begin{pmatrix} 1 & i \\ 1 & -i \end{pmatrix},
\end{aligned}
\end{equation}
and we take the branch of $ \psi_{\al'}^{1/4}$ which is positive on $(\al', \al'+\ep)$ and has a cut on  $(\al'-\ep,\al')$.  $\bold U_n$ then becomes
\begin{equation}\label{pmi13}
\begin{aligned}
\bold U_{n}(z)=
\bold M(z)e^{\pm \frac{i\Omega_n}{2}\sigma_3}\bold L^{\al'}_n(z)^{-1}\Phi_{\al'}\big(n^{2/3}\psi_{\al'}(z)\big)e^{\mp \frac{in\pi z}{2\ga}\sigma_3}e^{-n(g(z)-\frac{V(z)}{2}-\frac{l}{2})\sigma_3} \\ \hskip 5cm \textrm{for} \quad \pm \Im z > 0 .
\end{aligned}
\end{equation}
The function $\Phi_{\al'}(n^{2/3}\psi_{\al'}(z))$ has the jumps $j_S$, and we claim that the prefactor $\bold E_n^{\al'}$ is analytic in $D(\al', \ep)$, thus does not change these jumps.  This can be seen, as 
\begin{equation}\label{pmi14}
\bold M_+(z)e^{\frac{i\Omega_n}{2}\sigma_3}=\bold M_-(z)e^{-\frac{i\Omega_n}{2}\sigma_3}e^{\frac{i\Omega_n}{2}\sigma_3}j_Me^{\frac{i\Omega_n}{2}\sigma_3}
\end{equation}
thus the jump for the function $\bold M(z)e^{\pm\frac{i\Omega_n}{2}\sigma_3}$ is
\begin{equation}\label{pmi15}
e^{\frac{i\Omega_n}{2}\sigma_3}j_Me^{\frac{i\Omega_n}{2}\sigma_3}=\left\{
\begin{aligned}
& e^{\frac{i\Omega_n}{2}\sigma_3}\begin{pmatrix} 0 & 1 \\ -1 & 0 \end{pmatrix} e^{\frac{i\Omega_n}{2}\sigma_3}\quad \textrm{for} \quad z \in (\al'-\ep,\al'), \\
 &e^{\frac{i\Omega_n}{2}\sigma_3} e^{-i\Omega_n\sigma_3} e^{\frac{i\Omega_n}{2}\sigma_3}\quad \textrm{for} \quad z \in (\al',\al'+\ep),
\end{aligned}\right.
\end{equation}
or equivalently, 
\begin{equation}\label{pmi16}
e^{\frac{i\Omega_n}{2}\sigma_3}j_Me^{\frac{i\Omega_n}{2}\sigma_3}=\left\{
\begin{aligned}
&\begin{pmatrix} 0 & 1 \\ -1 & 0 \end{pmatrix} \quad \textrm{for} \quad z \in (\al'-\ep,\al'), \\
 &\begin{pmatrix} 1 & 0 \\ 0 &1 \end{pmatrix}\quad \textrm{for} \quad z \in (\al',\al'+\ep),
\end{aligned}\right.
\end{equation}
which is exactly the same as the jump conditions for $\bold L^{\al'}_n$.  Thus $\bold E^{\al'}_n(z)=\bold M(z)e^{\pm\frac{i\Omega_n}{2}\sigma_3}\bold L^{\al'}_n(z)^{-1}$ has no jumps in $D(\al',\ep)$.  The only other possible singularity for $\bold E^{\al'}_n$ is at $\al'$, and this singularity is at most a fourth root singularity, thus removable.  Thus, $\bold E^{\al'}_n$ is analytic in $D(\al', \ep)$, and $\tilde{\bold Q}_n$ has the prescribed jumps.

We left check that $\bold U_n$ satisfies the matching condition (\ref{pmi2}).  The large $n$ asymptotics of $\Phi_{\al'}(n^{2/3}\psi_{\al'}(z))$ are given in the different regions of analyticity as follows:
\begin{equation}\label{pmi17}
\begin{aligned}
\Phi_{\al'}\big(n^{2/3}\psi_{\al'}(z)\big)&=\frac{1}{2\sqrt{\pi}}
n^{-\frac{1}{6}\sigma_3}\psi_{\al'}(z)^{-\frac{1}{4}\sigma_3}
\bigg[\mp \begin{pmatrix} i & 1 \\ i & -1 \end{pmatrix}
\pm \frac{\psi_{\al'}(z)^{-3/2}}{48n}\begin{pmatrix} -5i & 5 \\ 7i & 7 \end{pmatrix}\\
&\quad +O(n^{-2})\bigg]
e^{\frac{2}{3}n\psi_{\al'}(z)^{3/2}\sigma_3} 
\hspace{1 cm} \textrm{for}  \quad \pm \Im z>0, \\
\end{aligned}
\end{equation}
where we always take the principal branch of $\psi_{\al'}(z)^{3/2}$.  As such, $\psi_{\al'}(z)^{3/2}$ is
two-valued for $x \in (\al'-\ep, \al)$, so that
\begin{equation}\label{pmi18}
\left[\frac{2}{3}\psi_{\al'}(x)^{3/2}\right]_{\pm}=\mp\pi i\int_x^{\al'}\left(\frac{1}{2\ga}-\rho(t)\right)dt
=\mp\frac{\pi i}{2\ga}(\al'-x)\pm\pi i\int_x^{\al'}\rho(t)dt.
\end{equation}
From (\ref{g3}) and (\ref{g6a}), we have that
\begin{equation}\label{pmi19}
2g_+(x)-V(x)=l+2\pi i \int_x^\be \rho(t)dt\,,\quad 2g_-(x)-V(x)=l-2\pi i \int_x^\be \rho(t)dt 
\end{equation}
for $x\in (\al'-\ep,\al')$.  These equations imply that
\begin{equation}\label{pmi19a}
\big(2g_\pm(x)-V(x)\big)-\big(2g_\pm(\al')-V(\al')\big)=\pm 2\pi i \int_x^{\al'} \rho(t)dt. 
\end{equation}
We can therefore write (\ref{pmi18}) as
\begin{equation}\label{pmi19b}
\left[\frac{2}{3}\psi_{\al'}(x)^{3/2}\right]_{\pm}=\mp\frac{\pi i}{2\ga}(\al'-x)+\frac{1}{2}\bigg[\big(2g_{\pm}(x)-V(x)\big)-\big(2g_{\pm}(\al')-V(\al')\big)\bigg].
\end{equation}
We can extend these equations into the upper and lower half-plane, respectively, obtaining
\begin{equation}\label{pmi20}
\frac{2}{3}\psi_{\al'}(z)^{3/2}=
\mp \frac{\pi i}{2\ga}(\al'-z)+\frac{1}{2}\bigg[\big(2g(z)-V(z)\big)-\big(2g_\pm (\al')-V(\al')\big)\bigg] \quad \textrm{for} \quad \pm \Im z>0\,.
\end{equation}
Using (\ref{pmi19}) at $x=\al'$, we can write
\begin{equation}\label{pmi21}
\frac{2}{3}\psi_{\al'}(z)^{3/2}=
\mp \frac{\pi i}{2\ga}(\al'-z)+g(z)-\frac{V(z)}{2}-\frac{l}{2} \mp \pi i \int_{\al'}^\be \rho(t)dt 
\quad \textrm{for} \quad \pm \Im z>0\,,
\end{equation}
or equivalently,
\begin{equation}\label{pmi22}
\frac{2}{3}\psi_{\al'}(z)^{3/2}=
g(z)-\frac{V(z)}{2}-\frac{l}{2}\pm \frac{\pi iz}{2\ga} \mp \frac{i(\Omega_n-\pi)}{2n} \quad \textrm{for} \quad \pm \Im z>0\,.
\end{equation}
Plugging (\ref{pmi17}) and (\ref{pmi21}) into (\ref{pmi13}) gives
\begin{equation}\label{pmi22a}
\begin{aligned}
\bold U_{n}(z)&=\bold M(z)e^{\pm \frac{i\Omega_n}{2}\sigma_3}\bold L_n^{\al'}(z)^{-1}\frac{1}{2\sqrt{\pi}}
n^{-\frac{1}{6}\sigma_3}\psi_{\al'}(z)^{-\frac{1}{4}\sigma_3}\\
&\quad \times \bigg[\mp \begin{pmatrix} i & 1 \\ i & -1 \end{pmatrix} \pm \frac{\psi_{\al'}(z)^{-3/2}}{48n}
\begin{pmatrix} -5i & 5 \\ 7i & 7 \end{pmatrix}+O(n^{-2})\bigg] \\
&\quad \times e^{n(g(z)-\frac{l}{2}-\frac{V(z)}{2})\sigma_3}e^{\mp \frac{i\Omega_n}{2}\sigma_3}e^{\pm \frac{i\pi}{2}\sigma_3}
e^{\pm \frac{in\pi z}{2\ga}\sigma_3}e^{\mp\frac{in\pi z}{2\ga}\sigma_3}e^{-n(g(z)-\frac{V(z)}{2}-\frac{l}{2})\sigma_3} \\
&=\bold M(z)e^{\pm \frac{i\Omega_n}{2}\sigma_3}\bold L_n^{\al'}(z)^{-1}\frac{1}{2\sqrt{\pi}}
n^{-\frac{1}{6}\sigma_3}\psi_{\al'}(z)^{-\frac{1}{4}\sigma_3} \\
& \quad \times \ \left[\begin{pmatrix} 1 & i \\ 1 & -i \end{pmatrix}+\frac{\psi_{\al'}(z)^{-3/2}}{48n}
\begin{pmatrix} 5 & -5i \\ -7 & -7i \end{pmatrix}+O(n^{-2})\right]e^{\mp \frac{i\Omega_n}{2}\sigma_3} \\
&=\bold M(z)\left[I+\frac{\psi_{\al'}(z)^{-3/2}}{48n}e^{\pm i\frac{\Omega_n}{2}\sigma_3}
\begin{pmatrix} -1 & -6i \\ -6i & 1 \end{pmatrix}e^{\mp i\frac{\Omega_n}{2}\sigma_3}+O(n^{-2})\right] \\
&=\bold M(z)\left[I+\frac{\psi_{\al'}(z)^{-3/2}}{48n}
\begin{pmatrix} -1 & -6ie^{\pm i\Omega_n} \\ -6ie^{\mp i\Omega_n} & 1 \end{pmatrix}+O(n^{-2})\right] 
\quad \textrm{for} \ \pm \Im (z) >0.
\end{aligned}
\end{equation}

We can make a similar construction near $\be'$.  Let
\begin{equation}\label{pmi24}
\psi_{\be'}(z)=-\left\{\frac{3\pi}{2}\int_{\be'}^z\left(\frac{1}{2\ga}-\rho(t)dt\right)\right\}^{2/3}.
\end{equation}
This function is analytic in $D(\be',\ep)$ and has negative derivative at $\be'$, thus $\Im z$ and $\Im \psi_{\be'}(z)$ have opposite signs for $z \in D(\be',\ep)$.  
Then the jumps for $\tilde{\bold Q}_n$ are
\begin{equation}\label{pmi24a}
j_{\tilde{Q}}(z)=\left\{
\begin{aligned}
&\begin{pmatrix} 0 & 1 \\ -1 & 0 \end{pmatrix} \quad &\textrm{for}\quad z \in (\be',\be'+\ep), \\
&\begin{pmatrix} -1 & 0 \\ -1 & -1 \end{pmatrix}\quad & \textrm{for}\quad z \in (\be'-\ep,\be'), \\
&\begin{pmatrix} 1 & -1 \\ 0 & 1 \end{pmatrix} \quad & \textrm{for}\quad z \in (\be',\be'+i\ep), \\
&\begin{pmatrix} 1 & 1 \\ 0 & 1 \end{pmatrix}  \quad & \textrm{for}\quad z \in (\be',\be'-i \ep),\\
\end{aligned}\right.
\end{equation}
where the contour is oriented from left to right on horizontal segments and up to down on vertical segments according to Figure \ref{sigma_S}.  After a slight deformation of the vertical contours and the change of variables $\psi_{\be'}$, these jumps become the following jumps close to the origin:
\begin{equation}\label{pmi24b}
j_{\tilde{Q}}(\psi_{\be'}(z))=\left\{
\begin{aligned}
&\begin{pmatrix} 0 & 1 \\ -1 & 0 \end{pmatrix} \quad &\textrm{for}\quad \psi_{\be'}(z) \in (-\ep,0), \\
&\begin{pmatrix} -1 & 0 \\ -1 & -1 \end{pmatrix}\quad & \textrm{for}\quad \psi_{\be'}(z) \in (0,\ep), \\
&\begin{pmatrix} 1 & -1 \\ 0 & 1 \end{pmatrix} \quad & \textrm{for}\quad \psi_{\be'}(z) \in (-i \ep,0), \\
&\begin{pmatrix} 1 & 1 \\ 0 & 1 \end{pmatrix}  \quad & \textrm{for}\quad \psi_{\be'}(z) \in (0,i \ep), \\
\end{aligned}\right.
\end{equation}
where the contour is oriented from right to left on horizontal segments and down to up on vertical segments.  These jump conditions are satisfied by the function 
\begin{equation}\label{pmi25c}
\Phi_{\be'}(z)=\Phi_{\al'}(z)\begin{pmatrix} 1 & 0 \\ 0 & -1 \end{pmatrix}.
\end{equation}
Then we can take for $z\in D(\be',\ep)$, 
\begin{equation}\label{pmi25}
\bold U_{n}(z)=
\begin{aligned}
&\bold M(z)e^{\frac{\pm i\Omega_n}{2}\sigma_3}\bold L_n^{\be'}(z)^{-1}\Phi_{\be'}(n^{2/3}\psi_{\be'}(z))e^{\mp \frac{in\pi z}{2\ga}\sigma_3}e^{-n(g(z)-\frac{V(z)}{2}-\frac{l}{2})\sigma_3} \\
&\hspace{5cm}\textrm{for} \quad \pm \Im z > 0,\\
\end{aligned}
\end{equation}
where
\begin{equation}\label{pmi26}
\bold L_n^{\be'}(z)=\frac{1}{2\sqrt{\pi}}
\begin{pmatrix} n^{-1/6} \psi_{\be'}^{-1/4}(z) & 0 \\ 0 & n^{1/6} \psi_{\be'}^{1/4}(z)\end{pmatrix}
\begin{pmatrix} -1 & i \\ -1 & -i \end{pmatrix}.
\end{equation}
We once again have 
\begin{equation}\label{pmi27}
\begin{aligned}
\Phi_{\be'}\big(n^{2/3}\psi_{\be'}(z)\big)&=\frac{1}{2\sqrt{\pi}}n^{-\frac{1}{6}\sigma_3}\psi_{\be'}(z)^{-\frac{1}{4}\sigma_3}
\bigg[\mp \begin{pmatrix} i & -1 \\ i & 1 \end{pmatrix} \mp \frac{\psi_{\be'}(z)^{-3/2}}{48n}
\begin{pmatrix} 5i & 5 \\ -7i & 7 \end{pmatrix}\\
&\quad +O(n^{-2})\bigg]
e^{\frac{2}{3}n\psi_{\be'}(z)^{3/2}\sigma_3} 
\hspace{5mm} \textrm{for} \; \pm \Im \psi_{\be'}(z)>0 \; (\textrm{so} \; \pm \Im z < 0), 
\end{aligned}
\end{equation}
and for $z \in D(\be',\ep)$, 
\begin{equation}\label{pmi28}
\frac{2}{3} \psi_{\be'}^{3/2}(z)=
\pm \frac{\pi i z}{2\ga} +g(z)-\frac{V(z)}{2}-\frac{l}{2} \mp \frac{i(\Omega_n-\pi)}{2n} \quad \textrm{for} \quad \pm \Im z >0.
\end{equation}
Combining (\ref{pmi25}), (\ref{pmi27}), and (\ref{pmi28}) gives
\begin{equation}\label{pmi29}
\begin{aligned}
\bold U_{n}(z)&=\bold M(z)e^{\pm \frac{i\Omega_n}{2}\sigma_3}
\bold L_n^{\be'}(z)^{-1}\frac{1}{2\sqrt{\pi}}n^{-\frac{1}{6}\sigma_3}
\psi_{\be'}(z)^{-\frac{1}{4}\sigma_3}\\
&\qquad \times\left[\pm \begin{pmatrix} i & -1 \\ i & 1 \end{pmatrix} \pm \frac{\psi_{\be'}(z)^{-3/2}}{48n}
\begin{pmatrix} 5i & 5 \\ -7i & 7 \end{pmatrix} +O(n^{-2})\right] \\
&\qquad \times e^{\frac{\pm i n \pi z}{2\ga}\sigma_3} e^{n(g(z)-\frac{V(z)}{2}-\frac{l}{2})\sigma_3}
e^{\mp \frac{i\Omega_n}{2}\sigma_3}e^{\pm \frac{i\pi}{2}\sigma_3}e^{\mp \frac{in\pi z}{2\ga}\sigma_3}
e^{-n(g(z)-\frac{V(z)}{2}-\frac{l}{2})\sigma_3} \\
&=\bold M(z)e^{\pm \frac{i\Omega_n}{2}\sigma_3}
\left[I+\frac{\psi_{\be'}(z)^{-3/2}}{48n}\begin{pmatrix} -1 & 6i \\ 6i & 1 \end{pmatrix} +O(n^{-2})\right]
e^{\mp \frac{i\Omega_n}{2}\sigma_3} \\
&=\bold M(z)\left[I+\frac{\psi_{\be'}(z)^{-3/2}}{48n}
\begin{pmatrix} -1 & 6ie^{\pm i\Omega_n} \\ 6ie^{\mp i\Omega_n} & 1 \end{pmatrix} +O(n^{-2})\right] 
\quad \textrm{for} \ \pm \Im z >0. \\
\end{aligned}
\end{equation}

\medskip

\section{The third and final transformation of the RHP}

We now consider the contour $\Sigma_X$, which consists of the circles $\partial D(\al, \ep), \ \partial D(\al', \ep), \ \partial D(\be', \ep),$ and $\partial D(\be, \ep)$, all oriented counterclockwise, together with the parts of 
$\Sigma_S\setminus\big([\al,\al']\cup[\be',\be]\big)$ 
which lie outside of the disks $D(\al, \ep), \ D(\al', \ep), \ D(\be', \ep),$ and $D(\be, \ep)$, see Fig.~10.

\begin{center}
 \begin{figure}[h]\label{sigma_X}
\begin{center}
   \scalebox{0.7}{\includegraphics{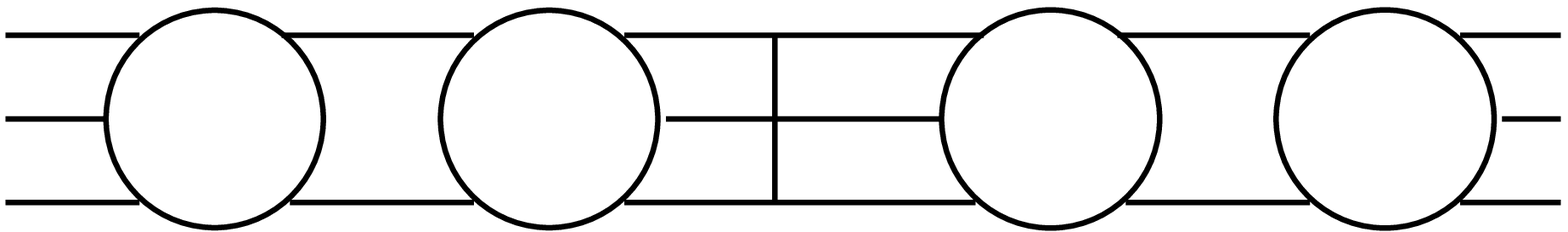}}
\end{center}
        \caption[The contour $\Sigma_X$.]{The contour $\Sigma_X$.}
    \end{figure}
\end{center}

We let
\begin{equation}\label{tt1}
\bold X_n(z)=\left\{
\begin{aligned}
&\bold S_n(z) \bold M(z)^{-1} \quad \textrm{for} \ z \ \textrm{outside the disks }  D(\al, \ep), \ D(\al', \ep), \ D(\be', \ep), \ D(\be, \ep), \\
&\bold S_n(z) \bold U_n(z)^{-1} \quad \textrm{for} \ z \ \textrm{inside the disks }  D(\al, \ep), \ D(\al', \ep), \ D(\be', \ep), \ D(\be, \ep). \\
\end{aligned}\right.
\end{equation}
Then $\bold X_n(z)$ solves the following RHP:
\begin{enumerate}
\item   
$\bold X_n(z)$ is analytic on $\C \setminus \Sigma_X$.
\item
$\bold X_n(z)$ has the jump properties
\begin{equation}\label{tt2}
\bold X_{n+}(x)=\bold X_{n-}(z)j_X(z)
\end{equation}
where
\begin{equation}\label{tt3}
j_X(z)=\left\{
\begin{aligned}
&\bold M(z)\bold U_n(z)^{-1} \quad \textrm{for} \ z \ \textrm{on the circles} \\
&\bold M(z)j_S\bold M(z)^{-1} \quad \textrm{otherwise}.
\end{aligned}\right.
\end{equation}
\item
As $z\to\infty$, 
\begin{equation}\label{tt4}
\bold X_n(z)\sim \I+\frac {\bold X_1}{z}+\frac {\bold X_2}{z^2}+\ldots
\end{equation}
\end{enumerate}
Additionally, we have that $j_X(z)$ is uniformly close to the identity in the following sense:
\begin{equation}\label{tt5}
j_X(z)=\left\{
\begin{aligned}
&I+O(n^{-1}) \quad \textrm{uniformly on the circles} \\
&I+O(e^{-C(z)n}) \quad \textrm{on the rest of} \ \Sigma_X, 
\end{aligned}\right.
\end{equation}
where $C(z)$ is a positive, continuous function satisfying (\ref{ft7}).
If we set
\begin{equation}\label{tt6}
j_X^0(z)=j_X(z)-I,
\end{equation}
then (\ref{tt5}) becomes
\begin{equation}\label{tt7}
j_X^0(z)=\left\{
\begin{aligned}
&O(n^{-1}) \quad \textrm{uniformly on the circles} \\
&O(e^{-C(z)n}) \quad \textrm{on the rest of} \ \Sigma_X.
\end{aligned}\right.
\end{equation}
The solution to the RHP for $\bold X_n$ is based on the following lemma:
\begin{lem}
Suppose v(z) is a function on $\Sigma_X$ solving the equation
\begin{equation}\label{tt8}
v(z)=I-\frac{1}{2\pi i} \int_{\Sigma_X} \frac{v(u)j_X^0(u)}{z_--u}du \quad \textrm{for} \ z\in\Sigma_X
\end{equation}
where $z_-$ means the value of the integral on the minus side of $\Sigma_X$.  Then 
\begin{equation}\label{tt9}
\bold X_n(z)=I-\frac{1}{2\pi i} \int_{\Sigma_X} \frac{v(u)j_X^0(u)}{z-u}du \quad \textrm{for} \ z\in\C\setminus\Sigma_X
\end{equation}
solves the RHP for $\bold X_n$.
\end{lem}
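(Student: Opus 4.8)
The plan is to recognize (\ref{tt9}) as the standard Cauchy-integral reconstruction of the solution of a small-norm Riemann--Hilbert problem, and to verify directly, via the Sokhotski--Plemelj formula on $\Sigma_X$, the three defining properties (1)--(3) of the RHP for $\bold X_n(z)$.

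First I would fix notation. Let $\ccal$ denote the Cauchy operator on $\Sigma_X$,
\begin{equation*}
(\ccal f)(z)=\frac{1}{2\pi i}\int_{\Sigma_X}\frac{f(u)}{u-z}\,du,\qquad z\in\C\setminus\Sigma_X,
\end{equation*}
with boundary values $\ccal_{\pm}$ from the $+$ (left) and $-$ (right) side of $\Sigma_X$ in the orientation indicated above. Because $\Sigma_X$ is a finite union of smooth arcs meeting at finitely many points, $\ccal_{\pm}$ are bounded on $L^2(\Sigma_X)$ and satisfy $\ccal_+-\ccal_-=\mathrm{id}$; and by (\ref{tt7}) the density $j_X^0$ lies in $L^2(\Sigma_X)\cap L^\infty(\Sigma_X)$, so all integrals below converge absolutely. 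In this notation (\ref{tt9}) reads $\bold X_n=I+\ccal(vj_X^0)$, while the integral equation (\ref{tt8}), taken from the minus side of $\Sigma_X$, reads $v=I+\ccal_-(vj_X^0)$; hence $v=\bold X_{n-}$.

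Next I would verify (1)--(3). Analyticity of $\bold X_n$ off $\Sigma_X$ is immediate from the Cauchy-integral form. For the behaviour at infinity, the expansion $\frac{1}{u-z}=-\sum_{k\ge 0}u^{k}z^{-k-1}$ gives
\begin{equation*}
\bold X_n(z)=I+\sum_{k\ge 1}\frac{\bold X_k}{z^{k}},\qquad \bold X_k=-\frac{1}{2\pi i}\int_{\Sigma_X}v(u)j_X^0(u)\,u^{k-1}\,du,
\end{equation*}
which is (\ref{tt4}). For the jump relation, the boundary values are $\bold X_{n\pm}=I+\ccal_{\pm}(vj_X^0)$, so $\bold X_{n+}-\bold X_{n-}=(\ccal_+-\ccal_-)(vj_X^0)=vj_X^0$; substituting $v=\bold X_{n-}$ and $j_X=I+j_X^0$ from (\ref{tt6}), this is precisely $\bold X_{n+}=\bold X_{n-}(I+j_X^0)=\bold X_{n-}j_X$, i.e.\ (\ref{tt2}). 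Thus $\bold X_n$ defined by (\ref{tt9}) satisfies (1)--(3), and by the usual uniqueness argument for normalized RHPs with unimodular jumps it coincides with the function of (\ref{tt1}).

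The one point requiring care is the behaviour of $\ccal$ at the self-intersection points of $\Sigma_X$, where the circles $\partial D(\al,\ep),\ \partial D(\al',\ep),\ \partial D(\be',\ep),\ \partial D(\be,\ep)$ meet the lens boundaries and the horizontal lines $\Im z=\pm\ep$: there the $L^2$-boundedness of $\ccal_{\pm}$ and the Plemelj jump identity rest on the general theory of singular Cauchy integrals on piecewise-smooth contours (see \cite{BKMM}), which I would invoke together with the fact that, by (\ref{tt7}), $j_X^0$ is continuous and exponentially small away from the circles, so that $vj_X^0$ introduces no new singularities at the junctions; this is the main (if mild) obstacle. Finally, I would note that the hypothesis of the lemma is non-vacuous: since $\|j_X^0\|_{L^2(\Sigma_X)\cap L^\infty(\Sigma_X)}=O(n^{-1})$ by (\ref{tt7}), the operator $f\mapsto\ccal_-(fj_X^0)$ has norm $O(n^{-1})$ on $L^2(\Sigma_X)$, so $\mathrm{id}-\ccal_-(\,\cdot\,j_X^0)$ is invertible for all large $n$ by a Neumann series, and (\ref{tt8}) has a unique solution $v=I+O(n^{-1})$ in $I+L^2(\Sigma_X)$.
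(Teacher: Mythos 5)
Your argument is correct and is essentially the paper's own proof: both rest on the additive (Sokhotski--Plemelj) jump of the Cauchy transform, the identification $v=\bold X_{n-}$ from the integral equation, and the Laurent expansion of the Cauchy kernel at infinity. The extra remarks you add on $L^2$-boundedness at the self-intersection points and on solvability of (\ref{tt8}) by a Neumann series are sound and match what the paper handles separately in (\ref{tt14})--(\ref{tt16}).
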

The proof of this lemma is immediate from the jump property of the Cauchy transform.  By assumption
\begin{equation}\label{tt12}
\bold X_{n-}(z)=v(z) 
\end{equation}
and the additive jump of the Cauchy transform gives
\begin{equation}\label{tt13}
\bold X_{n+}(z)-\bold X_{n-}(z)=v(z)j_X^0(z)=\bold X_{n-}(z)j_X^0(z),
\end{equation}
thus $\bold X_{n+}(z)=\bold X_{n-}(z)j_X(z)$.  Asymptotics at infinity are given by (\ref{tt9}).

The solution to equation (\ref{tt8}) is given by a series of perturbation theory.  Namely, the solution is
\begin{equation}\label{tt14}
v(z)=I+\sum_{k=1}^\infty v_k(z)
\end{equation}
where 
\begin{equation}\label{tt15}
v_k(z)=-\frac{1}{2\pi i} \int_{\Sigma_X} \frac{v_{k-1}(u)j_X^0(u)}{z-u}du\,, \quad v_0(z)=I.
\end{equation}
This function clearly solves (\ref{tt8}) provided the series converges, which is does, for sufficiently large $n$.  Indeed, by (\ref{tt5}), 
\begin{equation}\label{tt16}
|v_k(z)| \leq \left(\frac{C}{n}\right)^k\frac{1}{1+|z|} \quad \textrm{for some constant} \ C>0
\end{equation}
thus the series (\ref{tt14}) is dominated by a convergent geometric series and thus converges absolutely.  This in turn gives
\begin{equation}\label{tt17}
\bold X_n(z)=I+\sum_{k=1}^\infty\bold X_{n,k}(z)
\end{equation}
where
\begin{equation}\label{tt18}
\bold X_{n,k}(z)=-\frac{1}{2\pi i}\int_{\Sigma_X}\frac{v_{k-1}(u)j_X^0(u)}{z-u}du.
\end{equation}
We will need to compute
\begin{equation}\label{tt19}
\bold X_{n,1}(z)=-\frac{1}{2\pi i}\int_{\Sigma_X}\frac{j_X^0(u)}{z-u}du.
\end{equation}

\medskip

\section{Evaluation of $\bold X_{1}$}

We are interested in the matrix $\bold X_1$, which gives the $\frac{1}{z}$-term of $\bold X_n(z)$
at infinity, see (\ref{tt4}). By (\ref{tt9}),
\begin{equation}\label{eval1}
\bold X_1 =-\frac{1}{2\pi i} \int_{\Sigma_X} v(u)j_X^0(u)\,du, 
\end{equation}
hence by (\ref{tt14}), (\ref{tt16}),
\begin{equation}\label{eval1a}
\bold X_1 =-\frac{1}{2\pi i} \int_{\Sigma_X} j_X^0(u)\,du +O(n^{-2}). 
\end{equation}
We would like to evaluate the integral,
\begin{equation}\label{eval2}
-\frac{1}{2\pi i}\int_{\Sigma_X}j_X^0(u)du
\end{equation}
with an error of the order of $n^{-2}$. By (\ref {tt7}), it is enough to 
evaluate this integral over the circles $\partial D(\al,\ep)$, $\partial D(\al',\ep)$, 
$\partial D(\be',\ep)$, and $\partial D(\be,\ep)$. As we will see in the next
section, the matrix-valued function
$j_X^0(z)$ is analytic in the punctured disks, hence
\begin{equation}\label{eval2a}
\bold X_1 =-\left(\underset{z=\al}{\Res}+\underset{z=\al'}{\Res}+\underset{z=\be'}{\Res}
+\underset{z=\be}{\Res}\right)j_X^0(z)+O(n^{-2}).
\end{equation}
We will be especially interested
in evaluation of the [12] element of the matrix $\bold X_1$, and we will prove the following
asymptotic formula. Introduce the numbers,
\begin{equation}\label{eval12f}
\begin{aligned}
\eta_\al&=\left[5\frac{\th_4^{''}(n\om+\frac{\om}{2})}{\th_4(n\om+\frac{\om}{2})}
-5\frac{\th_3^{''}(\frac{\om}{2})}{\th_3(\frac{\om}{2})}
+7\left(\frac{\th_4^{'}(n\om+\frac{\om}{2})}{\th_4(n\om+\frac{\om}{2})}\right)^2
+17\left(\frac{\th_3^{'}(\frac{\om}{2})}{\th_3(\frac{\om}{2})}\right)^2
-24\frac{\th_4^{'}(n\om+\frac{\om}{2})\th_3^{'}(\frac{\om}{2})}{\th_4(n\om+\frac{\om}{2})\th_3(\frac{\om}{2})}\right]\,, \\
\eta_{\al'}&=-\left[5\frac{\th_1^{''}(n\om+\frac{\om}{2})}{\th_1(n\om+\frac{\om}{2})}
-5\frac{\th_2^{''}(\frac{\om}{2})}{\th_2(\frac{\om}{2})}
+7\left(\frac{\th_1^{'}(n\om+\frac{\om}{2})}{\th_1(n\om+\frac{\om}{2})}\right)^2
+17\left(\frac{\th_2^{'}(\frac{\om}{2})}{\th_2(\frac{\om}{2})}\right)^2
-24\frac{\th_1^{'}(n\om+\frac{\om}{2})\th_2^{'}(\frac{\om}{2})}{\th_1(n\om+\frac{\om}{2})\th_2(\frac{\om}{2})}\right]\,, \\
\eta_{\be'}&=-\left[5\frac{\th_2^{''}(n\om+\frac{\om}{2})}{\th_2(n\om+\frac{\om}{2})}
-5\frac{\th_1^{''}(\frac{\om}{2})}{\th_1(\frac{\om}{2})}
+7\left(\frac{\th_2^{'}(n\om+\frac{\om}{2})}{\th_2(n\om+\frac{\om}{2})}\right)^2
+17\left(\frac{\th_1^{'}(\frac{\om}{2})}{\th_1(\frac{\om}{2})}\right)^2
-24\frac{\th_2^{'}(n\om+\frac{\om}{2})\th_1^{'}(\frac{\om}{2})}{\th_2(n\om+\frac{\om}{2})\th_1(\frac{\om}{2})}\right]\,,\\
\eta_\be&=\left[5\frac{\th_3^{''}(n\om+\frac{\om}{2})}{\th_3(n\om+\frac{\om}{2})}
-5\frac{\th_4^{''}(\frac{\om}{2})}{\th_4(\frac{\om}{2})}
+7\left(\frac{\th_3^{'}(n\om+\frac{\om}{2})}{\th_3(n\om+\frac{\om}{2})}\right)^2
+17\left(\frac{\th_4^{'}(\frac{\om}{2})}{\th_4(\frac{\om}{2})}\right)^2
-24\frac{\th_3^{'}(n\om+\frac{\om}{2})\th_4^{'}(\frac{\om}{2})}{\th_3(n\om+\frac{\om}{2})\th_4(\frac{\om}{2})}\right]\,,\\
\end{aligned}
\end{equation}
and
\begin{equation}\label{eval12i}
\begin{aligned}
C_\al&=\frac{7}{2}\,(\be'-\al)+\frac{3}{2}\,(\be-\al)+\frac{3}{2}\,(\al'-\al)
-\frac{(\al'-\al)(\be-\al)}{(\be'-\al)}, \\
C_{\al'}&=-\frac{7}{2}\,(\be-\al')-\frac{3}{2}\,(\be'-\al')+\frac{3}{2}\,(\al'-\al)
-\frac{(\al'-\al)(\be'-\al')}{(\be-\al')}, \\
C_{\be'}&=-\frac{7}{2}\,(\be'-\al)-\frac{3}{2}(\be'-\al')+\frac{3}{2}\,(\be-\be')
-\frac{(\be-\be')(\be'-\al')}{(\be'-\al)}, \\
C_\be&=\frac{7}{2}\,(\be-\al')+\frac{3}{2}\,(\be-\al)+\frac{3}{2}\,(\be-\be')
-\frac{(\be-\be')(\be-\al)}{(\be-\al')}\,.
\end{aligned}
\end{equation}
Introduce also the numbers,
\begin{equation}\label{eval12b}
\begin{aligned}
\Xi_\al&=\frac{\th_3^2(0)\th_4^2(n\om+\frac{\om}{2})}{\th_3^2(\frac{\om}{2})\th_4^2(n\om)}\,, \qquad
\Xi_{\al'}=\frac{\th_3^2(0)\th_1^2(n\om+\frac{\om}{2})}{\th_2^2(\frac{\om}{2})\th_4^2(n\om)}\,, \\
\Xi_{\be'}&=\frac{\th_3^2(0)\th_2^2(n\om+\frac{\om}{2})}{\th_1^2(\frac{\om}{2})\th_4^2(n\om)}\,,\qquad
\Xi_\be=\frac{\th_3^2(0)\th_3^2(n\om+\frac{\om}{2})}{\th_4^2(\frac{\om}{2})\th_4^2(n\om)}\,,
\end{aligned}
\end{equation}
and
\begin{equation}\label{eval12e}
\begin{aligned}
\xi_\al&=\frac{\th'_3(\frac{\om}{2})}{\th_3(\frac{\om}{2})}-\frac{\th'_4(n\om+\frac{\om}{2})}{\th_4(n\om+\frac{\om}{2})}
\,,\qquad
\xi_{\al'}=-\left(\frac{\th'_2(\frac{\om}{2})}{\th_2(\frac{\om}{2})}-\frac{\th'_1(n\om+\frac{\om}{2})}{\th_1(n\om+\frac{\om}{2})}\right)
\,,\\
\xi_{\be'}&=\frac{\th'_1(\frac{\om}{2})}{\th_1(\frac{\om}{2})}-\frac{\th'_2(n\om+\frac{\om}{2})}{\th_2(n\om+\frac{\om}{2})}\,,\qquad
\xi_\be=-\left(\frac{\th'_4(\frac{\om}{2})}{\th_4(\frac{\om}{2})}-\frac{\th'_3(n\om+\frac{\om}{2})}{\th_3(n\om+\frac{\om}{2})}\right)
\,.
\end{aligned}
\end{equation}

\begin{lem} \label{X1} As $n\to\infty$,
\begin{equation}\label{x_1}
[\bold X_1]_{12} =\frac{1}{n}\left(X_{\al}+X_{\al'}+X_{\be'}+X_{\be}\right)+O(n^{-2}),
\end{equation}
where
\begin{equation}\label{x_2}
\begin{aligned}
X_\al&=\frac{i\Xi_\al}{96}
 \left(C_\al+12\pi \xi_\al+\frac{\pi^2 \eta_\al}{2(\be'-\al)}\right)\,,\\
X_{\al'}&=\frac{i\Xi_\al'}{96}
\left(C_{\al'}+12\pi \xi_{\al'}+\frac{\pi^2\eta_{\al'}}{2(\be-\al')}\right)\,,\\
X_{\be'}&=\frac{i\Xi_{\be'}}{96}
\left(C_{\be'}+12\pi \xi_{\be'}+\frac{\pi^2\eta_{\be'}}{2({\be'}-\al)}\right)\,,\\
X_\be&=\frac{i\Xi_\be}{96}
\left(C_\be+12 \pi\xi_\be+\frac{\pi^2\eta_\be}{2(\be-\al')}\right)\,.
\end{aligned}
\end{equation}
\end{lem}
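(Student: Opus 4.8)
\medskip\noindent\emph{Overall plan.} The plan is to extract $[\bold X_1]_{12}$ from the residue formula (\ref{eval2a}) by computing, for each turning point $p\in\{\al,\al',\be',\be\}$, the residue at $p$ of $j_X^0(z)=j_X(z)-I$ and summing. On the circle $\partial D(p,\ep)$ one has $j_X(z)=\bold M(z)\bold U_n(z)^{-1}$, so I would invoke the parametrix matchings (\ref{pm20}), (\ref{pm27}), (\ref{pmi22a}), (\ref{pmi29}), which give, uniformly on $\partial D(p,\ep)$,
\[
j_X^0(z)=-\frac{\psi_p(z)^{-3/2}}{48n}\,\bold M(z)A_p\,\bold M(z)^{-1}+O(n^{-2}),
\]
with $A_p$ the explicit constant matrix found there (e.g.\ $A_\be=\bigl(\begin{smallmatrix}1&6i\\6i&-1\end{smallmatrix}\bigr)$). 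As noted before (\ref{eval2a}) the leading term is meromorphic in $D(p,\ep)$ with a pole at $p$: the $A_p$ anticommute with $J=\bigl(\begin{smallmatrix}0&1\\-1&0\end{smallmatrix}\bigr)$, which together with the sign change of $\psi_p^{3/2}$ across the cut makes $\psi_p^{-3/2}\bold M A_p\bold M^{-1}$ single-valued (a similar, slightly more delicate check at the inner points uses the $e^{\pm i\Om_n\sigma_3/2}$ twist of $\bold E_n^{\al'},\bold E_n^{\be'}$), and a Cauchy estimate shows the $O(n^{-2})$ remainder contributes only $O(n^{-2})$ to the residue. Thus
\[
[\bold X_1]_{12}=\frac1{48n}\sum_{p}\underset{z=p}{\Res}\,\bigl[\psi_p(z)^{-3/2}\bold M(z)A_p\bold M(z)^{-1}\bigr]_{12}+O(n^{-2}),
\]
and the task becomes these four residues.

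\medskip\noindent\emph{Local expansions.} Near $p$, with $w=z-p$, I would set up three expansions. First, from the explicit density (\ref{edf3}) and its square-root behaviour at the turning points, the maps $\psi_p$ of (\ref{pm10a}), (\ref{pm21}), (\ref{pmi8}), (\ref{pmi24}) are holomorphic at $p$ with $\psi_p(p)=0$, $\psi_p'(p)\neq0$, so $\psi_p(z)^{-3/2}=(\psi_p'(p))^{-3/2}w^{-3/2}(1+b_1^{(p)}w+\cdots)$, the coefficients being rational in the inter-turning-point distances. Second, by (\ref{m2b}), $\ga(z)=w^{\pm1/4}\tilde\ga_p(z)$ with $\tilde\ga_p$ holomorphic and nonzero at $p$, again with coefficients rational in $\be-\al,\al'-\al,\dots$. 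Third, and crucially, $\tilde u(z)$ of (\ref{m5}) has a \emph{square-root} branch point at $p$: from the local form of $\sqrt R$ and (\ref{m6a}), $\tilde u(z)=\tilde u_\pm(p)+e_1^{(p)}w^{1/2}+e_3^{(p)}w^{3/2}+\cdots$. Hence each theta quotient in (\ref{m21}), being $\th_3$ of a shift of $\pm\tilde u(z)$, expands as
\[
\th_3\bigl(\tilde u(z)+c\bigr)=\th_3(c)+\th_3'(c)\,e_1^{(p)}w^{1/2}+\Bigl[\tfrac12\th_3''(c)(e_1^{(p)})^2+\th_3'(c)e_3^{(p)}\Bigr]w+\cdots,
\]
so that already the coefficient of $w^1$ carries the \emph{second} derivatives of the theta functions. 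Substituting the shifts $\tilde u_\pm(p)$ from (\ref{m6a}) and reducing with the translation identities (\ref{main12}) and the identities of Appendix \ref{identities}, every $\th_3(c),\th_3'(c),\th_3''(c)$ becomes a combination of $\th_1,\dots,\th_4$ and their derivatives at $\tfrac\om2$ and at $n\om+\tfrac\om2$; this produces the permutation of theta-indices distinguishing the four quadruples $(\Xi_p,\xi_p,\eta_p,C_p)$.

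\medskip\noindent\emph{Assembling the residues.} Using the analyticity of the prefactors $\bold E_n^p$, write $\bold M(z)=\bold V_p(z)\psi_p(z)^{-\sigma_3/4}\bold C_p$ near $p$, with $\bold C_p$ the constant matrix from $\bold L_n^p$ and $\bold V_p$ holomorphic at $p$ (for inner points after a bounded diagonal $e^{\pm i\Om_n\sigma_3/2}$ twist). A short computation gives that $\bold C_pA_p\bold C_p^{-1}$ is anti-diagonal, e.g.\ $\bold C_\be A_\be\bold C_\be^{-1}=\bigl(\begin{smallmatrix}0&5\\-7&0\end{smallmatrix}\bigr)$, so that
\[
\bigl[\psi_p^{-3/2}\bold M A_p\bold M^{-1}\bigr]_{12}=5\,\psi_p^{-2}(\bold V_p)_{11}(\bold V_p^{-1})_{22}-7\,\psi_p^{-1}(\bold V_p)_{12}(\bold V_p^{-1})_{12}
\]
(the constants $5,7$ possibly with other signs and, at inner points, with $e^{\pm i\Om_n}$ factors that cancel in the end), a genuine Laurent series with a double pole. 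I would then apply the double-pole residue formula, which involves $\psi_p'(p)$, $b_1^{(p)}$ and the values and first derivatives at $p$ of the entries of $\bold V_p,\bold V_p^{-1}$. Expanding $\bold V_p$ to first order in $w$ — which, by the third local expansion, already feeds in $\th,\th',\th''$ — and substituting the first two local expansions: the zeroth-order theta data give $\Xi_p$ times a leading distance constant (what remains after that constant is absorbed is the main part of $C_p$), the $\th'$-data give the $12\pi\,\xi_p$ term, and the $\th''$-data (weighted by $(e_1^{(p)})^2$) give the $\pi^2\eta_p/(2(\be'-\al))$-type term, the denominators $\be'-\al,\be-\al'$ appearing once $(e_1^{(p)})^2$ is combined with $(\psi_p'(p))^{-2}$ and $\tilde\ga_p(p)$ and simplified using Proposition \ref{proposition_end_points}. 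Summing over $p$ gives (\ref{x_1})--(\ref{x_2}).

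\medskip\noindent\emph{The hard part.} The scheme is conceptually routine; the hard part will be the three-term Laurent expansion of $\psi_p^{-3/2}\bold M A_p\bold M^{-1}$, carried out in half-integer powers of $w$ that must recombine into integer powers, and repeated for all four turning points with their different matrices $A_p$, shift-patterns $\tilde u_\pm(p)$, and — at the inner points — the $\pm\Im z$ cases and the $e^{\pm i\Om_n}$ twist. Keeping the theta-index bookkeeping consistent, and repeatedly reducing the intermediate expressions via (\ref{main12}) and the Appendix identities to the canonical combinations $\Xi_p,\xi_p,\eta_p,C_p$ of (\ref{eval12f})--(\ref{eval12e}), is where essentially all of the computation, and all of the risk of sign errors, lies.
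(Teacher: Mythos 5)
Your proposal is correct and follows essentially the same route as the paper: reduce $[\bold X_1]_{12}$ to the residues of the leading term of $j_X^0$ at the four turning points, verify that this term is meromorphic there (your anticommutation argument $JA_pJ^{-1}=-A_p$ combined with the sign change of $\psi_p^{3/2}$ is a clean substitute for the paper's even/odd-in-$\sqrt{w}$ decomposition of $\th_{11}^2\pm\th_{12}^2$, $\th_{11}\th_{12}$, $J(z)$, $K(z)$), and then extract the double- and simple-pole contributions from the local expansions of $\psi_p^{-3/2}$, $\ga^2\pm\ga^{-2}$, and the theta quotients through second derivatives. The only difference is organizational — you conjugate $A_p$ to anti-diagonal form through the parametrix factorization $\bold M=\bold E_n^p\bold L_n^p$, whereas the paper multiplies $\bold M A_p\bold M^{-1}$ out directly and evaluates the resulting residues (partly by computer algebra) — and your identification of which theta data produce $\Xi_p$, $C_p$, $\xi_p$, $\eta_p$ matches the paper's bookkeeping.
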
 

Proof of this lemma is given in the next section.

\section{Proof of Lemma \ref{X1}}\label{proof_X1}

On the circles, $\partial D(\al,\ep)$ 
and $\partial D(\be,\ep)$, we have
\begin{equation}\label{eval3}
j_X(z)=\left\{
\begin{aligned}
&= I - \frac{\psi_\al(z)^{-3/2}}{48n}\bold M(z)\begin{pmatrix} 1 & -6i \\  -6i & -1 \end{pmatrix} 
\bold M^{-1}(z)+O(n^{-2}) \quad  \textrm{for} \quad z \in \partial D(\al,\ep), \\
&=I - \frac{\psi_\be(z)^{-3/2}}{48n}\bold M(z)\begin{pmatrix} 1 & 6i \\  6i & -1 \end{pmatrix} 
\bold M^{-1}(z)+O(n^{-2}) \quad \textrm{for} \quad z \in \partial D(\be,\ep), \\
\end{aligned}\right.
\end{equation} 
and on the circles, $\partial D(\al',\ep)$ and $\partial D(\be',\ep)$, we have
\begin{equation}\label{eval4}
j_X(z)=\left\{
\begin{aligned}
&I - \frac{\psi_{\al'}(z)^{-3/2}}{48n}\bold M(z)\begin{pmatrix} -1 & -6ie^{\pm i\Omega_n} \\  -6ie^{\mp i\Omega_n} & 1 \end{pmatrix}\bold M^{-1}(z)+O(n^{-2})  \\
&\hspace{8 cm} \textrm{for} \ z\in\partial D(\al',\ep),  \ \pm \Im z >0, \\
&I - \frac{\psi_{\be'}(z)^{-3/2}}{48n}\bold M(z)\begin{pmatrix} -1 & 6ie^{\pm i\Omega_n} \\  6ie^{\mp i\Omega_n} & 1 \end{pmatrix}\bold M^{-1}(z)+O(n^{-2}) \quad \\
&\hspace{8 cm}  \textrm{for} \ z\in\partial D(\be',\ep),  \ \pm \Im z >0. \\
\end{aligned}\right.
\end{equation}
Thus
\begin{equation}\label{eval5}
j_X^0(z)=\left\{
\begin{aligned}
 - \frac{\psi_\al(z)^{-3/2}}{48n}\bold M(z)
&\begin{pmatrix} 1 & -6i \\  -6i & -1 \end{pmatrix}\bold M^{-1}(z)+O(n^{-2}) \ \ \textrm{for} \ z \in \partial D(\al, \ep),  \\
 - \frac{\psi_\be(z)^{-3/2}}{48n}\bold M(z)
&\begin{pmatrix} 1 & 6i \\  6i & -1 \end{pmatrix}\bold M^{-1}(z)+O(n^{-2}) \ \ \textrm{for} \ z \in \partial D(\be, \ep),  \\
  - \frac{\psi_{\al'}(z)^{-3/2}}{48n}\bold M(z)
&\begin{pmatrix} -1 & -6ie^{\pm i\Omega_n} \\  -6ie^{\mp i\Omega_n} & 1 \end{pmatrix}\bold M^{-1}(z)+O(n^{-2}) \ \textrm{for} \; z \in \partial D(\al', \ep), \\
 - \frac{\psi_{\be'}(z)^{-3/2}}{48n}\bold M(z)
&\begin{pmatrix} -1 & 6ie^{\pm i\Omega_n} \\  6ie^{\mp i\Omega_n} & 1 \end{pmatrix}\bold M^{-1}(z)+O(n^{-2}) \ \textrm{for} \; z \in \partial D(\be', \ep), \\
\end{aligned}\right.
\end{equation}
for $\pm \Im z >0$.
To simplify notation, we will write the model solution given in (\ref{m21}) and (\ref{m22}) as
\begin{equation}\label{eval6}
\bold M(z)=\frac{1}{2}\begin{pmatrix} \big(\ga(z)+\ga^{-1}(z)\big)\th_{11}(z) & i\big(\ga(z)-\ga^{-1}(z)\big)\th_{12}(z) \\ - i\big(\ga(z)-\ga^{-1}(z)\big)\th_{21}(z) & \big(\ga(z)+\ga^{-1}(z)\big)\th_{22}(z)\end{pmatrix}
\end{equation}
where
\begin{equation}\label{eval7}
\begin{aligned}
\th_{11}(z)&=\frac{\th_3(\tilde{u}(z)-\tilde{u}_\infty+\frac{\Omega_n}{2})\th_3(0)}{\th_3(\tilde{u}(z)-\tilde{u}_\infty)\th_3(\frac{\Omega_n}{2})} \quad  \quad \th_{12}(z)=\frac{\th_3(\tilde{u}(z)+\tilde{u}_\infty-\frac{\Omega_n}{2})\th_3(0)}{\th_3(\tilde{u}(z)+\tilde{u}_\infty)\th_3(\frac{\Omega_n}{2})} \\
\th_{21}(z)&=\frac{\th_3(\tilde{u}(z)+\tilde{u}_\infty+\frac{\Omega_n}{2})\th_3(0)}{\th_3(\tilde{u}(z)+\tilde{u}_\infty)\th_3(\frac{\Omega_n}{2})} \quad \quad \th_{22}(z)=\frac{\th_3(\tilde{u}(z)-\tilde{u}_\infty-\frac{\Omega_n}{2})\th_3(0)}{\th_3(\tilde{u}(z)-\tilde{u}_\infty)\th_3(\frac{\Omega_n}{2})}.
\end{aligned}
\end{equation}
Notice that each of the functions $\th_{ij}$ is analytic throughout the complex plane, except on the intervals $(\al, \al')$ and $(\be',\be)$, where they satisfy the relations
\begin{equation}\label{eval8}
[\th_{11}]_\pm=[\th_{12}]_\mp \,, \quad [\th_{21}]_\pm=[\th_{22}]_\mp,
\end{equation}
and on the interval $(\al',\be')$, where they satisfy
\begin{equation}\label{eval9}
[\th_{11}]_+=e^{-i\Omega_n}[\th_{11}]_- \ , \  [\th_{12}]_+=e^{i\Omega_n}[\th_{12}]_- \  , \ [\th_{21}]_+=e^{-i\Omega_n}[\th_{21}]_- \ , \ [\th_{22}]_+=e^{i\Omega_n}[\th_{22}]_- .
\end{equation}
Multiplying out equations (\ref{eval5}) gives
\begin{equation}\label{eval10}
j_X^0(z)= \frac{\psi_\xi(z)^{-3/2}}{48n}\begin{pmatrix}j_{11}^\xi & j_{12}^\xi \\ j_{21}^\xi &  j_{22}^\xi \end{pmatrix}+O(n^{-2}) \quad \textrm{for} \ z \in \partial D(\xi,\ep), \quad \xi=\al,\al',\be',\be,
\end{equation}
where
\begin{equation}\label{eval11}
\begin{aligned}
j_{12}^\al&=\frac{i}{2}\left[3\big((\ga^2(z)+\ga^{-2}(z)\big)(\th_{11}^2-\th_{12}^2)+\big(\ga^2(z)-\ga^{-2}(z)\big)\th_{11}\th_{12}+6(\th_{11}^2-\th_{12}^2)\right], \\
j_{12}^{\al'}&=\frac{i}{2}\left[3\big(\ga^2(z)+\ga^{-2}(z)\big)J(z)-\big(\ga^2(z)-\ga^{-2}(z)\big)\th_{11}\th_{12}+6K(z)\right], \\
j_{12}^{\be'}&=-\frac{i}{2}\left[3\big(\ga^2(z)+\ga^{-2}(z)\big)J(z)+\big(\ga^2(z)-\ga^{-2}(z)\big)\th_{11}\th_{12}+6K(z)\right], \\
j_{12}^\be&=-\frac{i}{2}\left[3(\big(\ga^2(z)+\ga^{-2}(z)\big)(\th_{11}^2-\th_{12}^2)-\big(\ga^2(z)-\ga^{-2}(z)\big)\th_{11}\th_{12}+6(\th_{11}^2-\th_{12}^2)\right], \\
\end{aligned}
\end{equation}
and
\begin{equation}\label{eval12}
J(z)=\left\{
\begin{aligned}
&\th_{11}^2e^{i\Omega_n}+\th_{12}^2e^{-i\Omega_n} \quad \textrm{for} \quad \Im z >0, \\
&\th_{11}^2e^{-i\Omega_n}+\th_{12}^2e^{i\Omega_n} \quad \textrm{for} \quad \Im z <0, \\
\end{aligned}\right.
\end{equation}
\begin{equation}\label{eval13}
K(z)=\left\{
\begin{aligned}
&\th_{11}^2e^{i\Omega_n}-\th_{12}^2e^{-i\Omega_n} \quad \textrm{for} \quad \Im z >0, \\
&\th_{11}^2e^{-i\Omega_n}-\th_{12}^2e^{i\Omega_n} \quad \textrm{for} \quad \Im z <0. \\
\end{aligned}\right.
\end{equation}

In order to integrate $j_X^0(z)$, let us examine the behavior of the various functions described 
above near each of the turning points.
Introduce the numbers
\begin{equation}\label{eval20}
\begin{aligned}
A_\al&=\sqrt{(\al'-\al)(\be'-\al)(\be-\al)} \,  , \qquad  
A_{\al'}=\sqrt{(\al'-\al)(\be'-\al')(\be-\al')} \,  ,  \\
A_{\be'}&=\sqrt{(\be'-\al)(\be'-\al')(\be-\be')} \,  , \qquad
A_\be=\sqrt{(\be-\al)(\be-\al')(\be-\be')}\,,
\end{aligned}
\end{equation}
and
\begin{equation}\label{eval12a}
\begin{aligned}
 B_\al&=\frac{1}{\al'-\al}+\frac{1}{\be'-\al}+\frac{1}{\be-\al}\,,\qquad 
B_{\al'}=-\frac{1}{\al'-\al}+\frac{1}{\be'-\al'}+\frac{1}{\be-\al'}\,, \\
 B_{\be'}&=\frac{1}{\be'-\al}+\frac{1}{\be'-\al'}-\frac{1}{\be-\be'}\,,\qquad
B_\be=\frac{1}{\be-\al}+\frac{1}{\be-\al'}+\frac{1}{\be-\be'}\,.
\end{aligned}
\end{equation}
For $x\in (\be,\be+\ep)$, we have
\begin{equation}\label{eval14}
\tilde{u}(x)=\frac{\pi}{2K}u(x)=\frac{\pi}{A_\be}\sqrt{x-\be}+O((x-\be)^{3/2});
\end{equation}
for $x\in (\al-\ep,\al)$,
\begin{equation}\label{eval15}
\tilde{u}(x)=\frac{\pi}{2K}u(x)=\frac{\pi}{2}-\frac{\pi}{A_\al}\sqrt{\al-x}+O((\al-x)^{3/2});
\end{equation}
for $x \in (\al', \al'+\ep)$,
\begin{equation}\label{eval16}
\tilde{u}_{\pm}(x)=\frac{\pi}{2K}u_{\pm}(x)=\frac{\pi}{2} \pm \frac{\tau\pi}{2}
-\frac{\pi}{A_{\al'}}\sqrt{x-\al'}+O((x-\al')^{3/2});
\end{equation}
and for $x \in (\be'-\ep,\be')$,
\begin{equation}\label{eval17}
\tilde{u}_{\pm}(x)=\frac{\pi}{2K}u_{\pm}(x)= \pm \frac{\tau \pi}{2}
+\frac{\pi}{A_{\be'}}\sqrt{\be'-x}+O((\be'-x)^{-3/2}).
\end{equation}
Also, from (\ref{m2b}) we have that
\begin{equation}\label{eval18}
\ga^2(z)\pm \ga^{-2}(z)=\sqrt{\frac{(z-\al)(z-\be')}{(z-\al')(z-\be)}}\pm
\sqrt{\frac{(z-\al')(z-\be)}{(z-\al)(z-\be')}}\,,
\end{equation}
and from (\ref{pm14}), (\ref{pm21}), (\ref{pmi8}), (\ref{pmi24}) , and (\ref{edf3}) that 
\begin{equation}\label{eval19}
\begin{aligned}
\psi_\al^{-3/2}(z)&=(\al-z)^{-3/2}\left[\frac{A_\al}{2}+\frac{1}{20}A_{\al}B_\al(\al-z)+O\big((\al-z)^2\big)\right],  \\
\psi_{\al'}^{-3/2}(z)&=(z-\al')^{-3/2}\left[\frac{A_{\al'}}{2}-\frac{1}{20}A_{\al'}B_{\al'}(z-\al')+O\big((z-\al'\big)^2)\right],  \\
\psi_{\be'}^{-3/2}(z)&=(\be'-z)^{-3/2}\left[\frac{A_{\be'}}{2}-\frac{1}{20}A_{\be'}B_{\be'}(\be'-z)+O\big((\be'-z)^2\big)\right],  \\
\psi_\be^{-3/2}(z)&=(z-\be)^{-3/2}\left[\frac{A_\be}{2}+\frac{1}{20}A_{\be}B_\be(z-\be)+O\big((z-\be)^2\big)\right].
\end{aligned}
\end{equation}
It follows that the functions $(\ga^2\pm\ga^{-2})\psi^{-3/2}(z)$ are meromorphic in a neighborhood of each of the turning points.  
In particular, at $z=\al$, we have
\begin{equation}\label{eval19a}\begin{aligned}
(\ga^2\pm\ga^{-2})\psi_\al^{-3/2}(z)&=\pm \frac{(\al'-\al)(\be-\al)}{2(\al-z)^2}
+\frac{1}{(\al-z)}\left[\frac{(\be'-\al)}{2} \pm \frac{3}{10}(\be-\al) \right. \\
&\left. \quad\pm \frac{3}{10}(\al'-\al) 
  \mp \frac{1}{5}\frac{(\al'-\al)(\be-\al)}{(\be'-\al)} \right]+O(1) ;        
\end{aligned}
\end{equation}
at $z=\be$, we have
\begin{equation}\label{eval19b}\begin{aligned}
(\ga^2\pm\ga^{-2})\psi_\be^{-3/2}(z)&=\frac{(\be-\al)(\be-\be')}{2(z-\be)^2}
+\frac{1}{(z-\be)}\left[\pm \frac{1}{2}(\be-\al')+ \frac{3}{10}(\be-\al)  \right. \\
&\left. \quad  + \frac{3}{10}(\be-\be') 
- \frac{1}{5}\frac{(\be-\al)(\be-\be')}{(\be-\al')} \right]+O(1) ;
\end{aligned}\end{equation}
at $z=\al'$, we have
\begin{equation}\label{eval19c}\begin{aligned}
(\ga^2\pm\ga^{-2})\psi_{\al'}^{-3/2}(z)&=\frac{(\al'-\al)(\be'-\al')}{2(z-\al')^2}
+\frac{1}{(z-\al')}\left[\pm \frac{(\be-\al')}{2} +\frac{3}{10}(\be'-\al') \right. \\
&\left. \quad  - \frac{3}{10}(\al'-\al) 
+ \frac{1}{5}\frac{(\al'-\al)(\be'-\al')}{(\be-\al')} \right]+O(1) ;
\end{aligned}
\end{equation}
and at $z=\be'$, we have
\begin{equation}\label{eval19d}\begin{aligned}
(\ga^2\pm\ga^{-2})\psi_{\be'}^{-3/2}(z)&=\pm \frac{(\be'-\al')(\be-\be')}{2(\be'-z)^2}
+\frac{1}{(\be'-z)}\left[\frac{(\be'-\al)}{2} \pm \frac{3}{10}(\be'-\al')  \right. \\
&\left. \quad \mp \frac{3}{10}(\be-\be') 
\pm \frac{1}{5}\frac{(\be'-\al')(\be-\be')}{\be'-\al} \right]+O(1) .
\end{aligned}
\end{equation}
Notice also, from the relations (\ref{eval7}), that the functions $\th_{11}^2+\th_{12}^2$ and $\th_{11}\th_{12}$ have no jumps in neighborhoods of $\al$ or $\be$, and take finite values at $z=\al$ and $z=\be$, thus are analytic in neighborhoods of $\al$ and $\be$. 
Using (\ref{eval14}) and (\ref{eval15}), we see that these functions have Taylor expansions about $z=\be$,
\begin{equation}\label{eval21}
\begin{aligned}
 \th_{11}^2(z)+\th_{12}^2(z)&=2\frac{\th_3^2(0)}{\th_3^2(\frac{\Omega_n}{2})}
\frac{\th_3^2(\tilde{u}_\infty-\frac{\Omega_n}{2})}{\th_3^2(\tilde{u}_\infty)}
+\left.\frac{\pi^2}{2A_\be^2}\frac{d^2}{d\tilde{u}^2}\bigg(\th_{11}^2+\th_{12}^2\bigg)\right|_{z=\be}(z-\be)+\cdots,  \\
\th_{11}(z)\th_{12}(z)&=\frac{\th_3^2(0)}{\th_3^2(\frac{\Omega_n}{2})}\frac{\th_3^2(\tilde{u}_\infty-\frac{\Omega_n}{2})}{\th_3^2(\tilde{u}_\infty)}+\left.\frac{\pi^2}{2A_\be^2}\frac{d^2}{d\tilde{u}^2}\bigg(\th_{11}\th_{12}\bigg)\right|_{z=\be}(z-\be)+\cdots, \\
\end{aligned}
\end{equation}
and about $z=\al$,
\begin{equation}\label{eval22}
\begin{aligned}
 \th_{11}^2(z)+\th_{12}^2(z)&=2\frac{\th_3^2(0)}{\th_3^2(\frac{\Omega_n}{2})}\frac{\th_4^2(\tilde{u}_\infty-\frac{\Omega_n}{2})}{\th_4^2(\tilde{u}_\infty)}+\left.\frac{\pi^2}{2A_\al^2}\frac{d^2}{d\tilde{u}^2}\bigg(\th_{11}^2+\th_{12}^2\bigg)\right|_{z=\al}(\al-z)+\cdots,  \\
\th_{11}(z)\th_{12}(z)&=\frac{\th_3^2(0)}{\th_3^2(\frac{\Omega_n}{2})}\frac{\th_4^2(\tilde{u}_\infty-\frac{\Omega_n}{2})}{\th_4^2(\tilde{u}_\infty)}+\left.\frac{\pi^2}{2A_\al^2}\frac{d^2}{d\tilde{u}^2}\bigg(\th_{11}\th_{12}\bigg)\right|_{z=\al}(\al-z)+\cdots. \\
\end{aligned}
\end{equation}

By a similar argument, $J(z)$ and $\th_{11}\th_{12}$ are also analytic in neighborhoods of $\al'$ and $\be'$ and using (\ref{eval16}) and (\ref{eval17}) we can write their Taylor expansions about $z=\al'$,
 \begin{equation}\label{eval23}
 \begin{aligned}
 J(z)&=2\frac{\th_3^2(0)}{\th_3^2(\frac{\Omega_n}{2})}\frac{\th_1^2(\tilde{u}_\infty-\frac{\Omega_n}{2})}{\th_1^2(\tilde{u}_\infty)} +\left.\frac{\pi^2}{2A_{\al'}^2}\frac{d^2}{d\tilde{u}^2}J(z)\right|_{z=\al'}(z-\al')+\cdots, \\
 \th_{11}(z)\th_{12}(z)&=\frac{\th_3^2(0)}{\th_3^2(\frac{\Omega_n}{2})}\frac{\th_1^2(\tilde{u}_\infty-\frac{\Omega_n}{2})}{\th_1^2(\tilde{u}_\infty)} + \left.\frac{\pi^2}{2A_{\al'}^2}\frac{d^2}{d\tilde{u}^2}\bigg(\th_{11}\th_{12}\bigg)\right|_{z=\al'}(z-\al')+\cdots ,\\
 \end{aligned}
 \end{equation}
 and about $z=\be'$,
  \begin{equation}\label{eval24}
 \begin{aligned}
 J(z)&=2\frac{\th_3^2(0)}{\th_3^2(\frac{\Omega_n}{2})}\frac{\th_2^2(\tilde{u}_\infty-\frac{\Omega_n}{2})}{\th_2^2(\tilde{u}_\infty)} +\left.\frac{\pi^2}{2A_{\be'}^2}\frac{d^2}{d\tilde{u}^2}J(z)\right|_{z=\be'}(\be'-z)+\cdots, \\
 \th_{11}(z)\th_{12}(z)&=\frac{\th_3^2(0)}{\th_3^2(\frac{\Omega_n}{2})}\frac{\th_2^2(\tilde{u}_\infty-\frac{\Omega_n}{2})}{\th_2^2(\tilde{u}_\infty)} + \left.\frac{\pi^2}{2A_{\be'}^2}\frac{d^2}{d\tilde{u}^2}\bigg(\th_{11}\th_{12}\bigg)\right|_{z=\be'}(\be'-z)+\cdots. \\
 \end{aligned}
 \end{equation}
Finally, notice that the function $\th_{11}^2-\th_{12}^2$ is an odd function of $\tilde{u}$, and using (\ref{eval14}) and (\ref{eval15}), we can write, for $x\in (\be,\be+\ep)$,
\begin{equation}\label{eval27}
\begin{aligned}
\th_{11}^2(x)&-\th_{12}^2(x)=\sqrt{x-\be}\\
&\times\left[\frac{4\pi}{A_\be} \frac{\th_3^2(0)}{\th_3^2(\frac{\Omega_n}{2})}
\frac{\th_3^2(\tilde{u}_\infty-\frac{\Omega_n}{2})}{\th_3^2(\tilde{u}_\infty)}
\left(\frac{\th_3'(\tilde{u}_\infty)}{\th_3(\tilde{u}_\infty)}
-\frac{\th_3'(\tilde{u}_\infty-\frac{\Omega_n}{2})}{\th_3(\tilde{u}_\infty-\frac{\Omega_n}{2})}\right)
+O(x-\be)\right]\,,
\end{aligned}
\end{equation}
and for $x\in (\al-\ep,\al)$,
\begin{equation}\label{eval28}
\begin{aligned}
\th_{11}^2(x)&-\th_{12}^2(x)=-\sqrt{\al-x}\\
&\times \left[\frac{4\pi}{A_\al}\frac{\th_3^2(0)}{\th_3^2(\frac{\Omega_n}{2})}
\frac{\th_4^2(\tilde{u}_\infty-\frac{\Omega_n}{2})}{\th_4^2(\tilde{u}_\infty)}
\left(\frac{\th_4'(\tilde{u}_\infty)}{\th_4(\tilde{u}_\infty)}
-\frac{\th_4'(\tilde{u}_\infty-\frac{\Omega_n}{2})}{\th_4(\tilde{u}_\infty-\frac{\Omega_n}{2})}\right)
+O(\al-x)\right]\,. 
\end{aligned}
\end{equation}
Similarly, using (\ref{eval16}) and (\ref{eval17}), we can write,  for $x\in (\al',\al'+\ep)$,
\begin{equation}\label{eval31}
\begin{aligned}
K(x)&=-\sqrt{x-\al'}\\
&\times\left[\frac{4\pi}{A_{\al'}} \frac{\th_3^2(0)}{\th_3^2(\frac{\Omega_n}{2})}
\frac{\th_1^2(\tilde{u}_\infty-\frac{\Omega_n}{2})}{\th_1^2(\tilde{u}_\infty)}
\left(\frac{\th_1'(\tilde{u}_\infty)}{\th_1(\tilde{u}_\infty)}
-\frac{\th_1'(\tilde{u}_\infty-\frac{\Omega_n}{2})}{\th_1'(\tilde{u}_\infty-\frac{\Omega_n}{2})}\right) 
 +O(x-\al'))\right], 
\end{aligned}
\end{equation}
and for $x\in (\be'-\ep,\be')$,
\begin{equation}\label{eval32}
\begin{aligned}
K(x)&=\sqrt{\be'-x}\\
&\times\left[\frac{4\pi}{A_{\be'}} \frac{\th_3^2(0)}{\th_3^2(\frac{\Omega_n}{2})}
\frac{\th_2^2(\tilde{u}_\infty-\frac{\Omega_n}{2})}{\th_2^2(\tilde{u}_\infty)}
\left(\frac{\th_2'(\tilde{u}_\infty)}{\th_2(\tilde{u}_\infty)}
-\frac{\th_2'(\tilde{u}_\infty-\frac{\Omega_n}{2})}{\th_2'(\tilde{u}_\infty-\frac{\Omega_n}{2})}\right)
 +O(\be'-x) \right] \\
\end{aligned}
\end{equation}

From equations (\ref{eval19}), (\ref{eval27}), (\ref{eval28}), (\ref{eval31}), and (\ref{eval32}), it follows that the functions $\left(\th_{11}^2(z)-\th_{12}^2(z)\right)\psi_\al^{-3/2}(z)$ and $\left(\th_{11}^2(z)-\th_{12}^2(z)\right)\psi_\be^{-3/2}(z)$ are meromorphic in neighborhoods of $\al$ and $\be$, respectively, and have simple poles at $z=\al$ and $z=\be$, respectively, and that  the functions $K(z)\psi_{\al'}^{-3/2}(z)$ and $K(z)\psi_{\be'}^{-3/2}(z)$ are meromorphic in neighborhoods of $\al'$ and $\be'$, respectively, and have simple poles at $z=\al'$ and $z=\be'$, respectively.

Let us compute the residues of functions that appear in (\ref{eval11}).
Observe that
\begin{equation}\label{pb3}
\frac{\Om_n}{2}=n\om+\frac{\pi}{2},\quad \tilde u_\infty=-\frac{\om}{2}+\frac{\pi}{2},
\quad \frac{\Om_n}{2}-\tilde u_\infty=n\om+\frac{\om}{2}.
\end{equation}
From (\ref{eval19a}), \ref{eval19b}), (\ref{eval21}), and (\ref{eval22}), we obtain that 
\begin{equation}\label{eval33}
\begin{aligned}
&\underset{z=\al}{\Res}3\big(\th_{11}^2(z)+\th_{12}^2(z)\big)\left(\ga^2(z) + \ga^{-2}(z)\right)\psi_\al^{-3/2}(z) \\
&=\frac{\th_3^2(0)\th_4^2(n\om+\frac{\om}{2})}{\th_3^2(\frac{\om}{2})\th_4^2(n\om)}
\left[-3(\be'-\al)-\frac{9}{5}(\be-\al)-\frac{9}{5}(\al'-\al)+\frac{6}{5}\frac{(\al'-\al)(\be-\al)}{(\be'-\al)}\right]  \\
&\qquad -\frac{3\pi^2}{4(\be'-\al)}\left.\frac{d^2}{d\tilde{u}^2}\bigg(\th_{11}^2+\th_{12}^2\bigg)\right|_{z=\al} \\
\end{aligned}
\end{equation}
and
\begin{equation}\label{eval34}
\begin{aligned}
&\underset{z=\be}{\Res}3\big(\th_{11}^2(z)+\th_{12}^2(z)\big)\left(\ga^2(z) + \ga^{-2}(z)\right)\psi_\be^{-3/2}(z) \\
&=\frac{\th_3^2(0)\th_3^2(n\om+\frac{\om}{2})}{\th_4^2(\frac{\om}{2})\th_4^2(n\om)}
\left[3(\be-\al')+\frac{9}{5}(\be-\be')+\frac{9}{5}(\be-\al)-\frac{6}{5}\frac{(\be-\al)(\be-\be')}{(\be-\al')}\right]  \\
&\qquad +\frac{3\pi^2}{4(\be-\al')}\left.\frac{d^2}{d\tilde{u}^2}\bigg(\th_{11}^2+\th_{12}^2\bigg)\right|_{z=\be}.
\end{aligned}
\end{equation}
Also,
\begin{equation}\label{eval35}
\begin{aligned}
&\underset{z=\al}{\Res}\big(\th_{11}(z)\th_{12}(z)\big)\left(\ga^2(z) - \ga^{-2}(z)\right)\psi_\al^{-3/2}(z) \\
&=\frac{\th_3^2(0)\th_4^2(n\om+\frac{\om}{2})}{\th_3^2(\frac{\om}{2})\th_4^2(n\om)}
\left[-\frac{(\be'-\al)}{2}+\frac{3}{10}(\be-\al)+\frac{3}{10}(\al'-\al)
-\frac{1}{5}\frac{(\al'-\al)(\be-\al)}{(\be'-\al)}\right]  \\
&\qquad +\frac{\pi^2}{4(\be'-\al)}\left.\frac{d^2}{d\tilde{u}^2}\bigg(\th_{11}\th_{12}\bigg)\right|_{z=\al}, 
\end{aligned}
\end{equation}
and
\begin{equation}\label{eval36}
\begin{aligned}
&\underset{z=\be}{\Res}\big(\th_{11}(z)\th_{12}(z)\big)\left(\ga^2(z) - \ga^{-2}(z)\right)\psi_\be^{-3/2}(z) \\
&=\frac{\th_3^2(0)\th_3^2(n\om+\frac{\om}{2})}{\th_4^2(\frac{\om}{2})\th_4^2(n\om)}
\left[-\frac{(\be-\al')}{2}+\frac{3}{10}(\be-\be')+\frac{3}{10}(\be-\al)
-\frac{1}{5}\frac{(\be-\al)(\be-\be')}{(\be-\al')}\right]  \\
&\qquad +\frac{\pi^2}{4(\be-\al')}\left.\frac{d^2}{d\tilde{u}^2}\bigg(\th_{11}\th_{12}\bigg)\right|_{z=\be}. 
\end{aligned}
\end{equation}

From (\ref{eval19}), (\ref{eval27}), and (\ref{eval28}), we obtain that
\begin{equation}\label{eval37}
\underset{z=\al}{\Res} 6\big(\th_{11}^2(z)-\th_{12}^2(z)\big)\psi_\al^{-3/2}(z)
=-12\pi\frac{\th_3^2(0)\th_4^2(n\om+\frac{\om}{2})}{\th_3^2(\frac{\om}{2})\th_4^2(n\om)} 
\left[\frac{\th_3'(\frac{\om}{2})}{\th_3(\frac{\om}{2})}
-\frac{\th_4'(n\om+\frac{\om}{2})}{\th_4(n\om+\frac{\om}{2})}\right]  
\end{equation}
and
\begin{equation}\label{eval38}
\underset{z=\be}{\Res} 6\big(\th_{11}^2(z)-\th_{12}^2(z)\big)\psi_\be^{-3/2}(z)
=-12\pi\frac{\th_3^2(0)\th_3^2(n\om+\frac{\om}{2})}{\th_4^2(\frac{\om}{2})\th_4^2(n\om)} 
 \left[\frac{\th_4'(\frac{\om}{2})}{\th_4(\frac{\om}{2})}
-\frac{\th_3'(n\om+\frac{\om}{2})}{\th_3'(n\om+\frac{\om}{2})}\right].
\end{equation}
We now turn our attention to the residues at the inner turning points.  From (\ref{eval19c}), (\ref{eval19d}), 
(\ref{eval23}), and (\ref{eval24}),  we have
\begin{equation}\label{eval41}
\begin{aligned}
&\underset{z=\al'}{\Res} 3\psi_{\al'}^{-3/2}(z)J(z)\big(\ga^2(z)+\ga^{-2}(z)\big)\\
&=\frac{\th_3^2(0)\th_1^2(n\om+\frac{\om}{2})}{\th_2^2(\frac{\om}{2})\th_4^2(n\om)}
\left[3(\be-\al')+\frac{9}{5}(\be'-\al')
 -\frac{9}{5}(\al'-\al)+\frac{6}{5}\frac{(\al'-\al)(\be'-\al')}{\be-\al'}\right] \\
&\qquad +\frac{3\pi^2}{4(\be-\al')}\left.\frac{d^2}{d\tilde{u}^2}J(z)\right|_{z=\al'}
\end{aligned}
\end{equation}
and
\begin{equation}\label{eval42}
\begin{aligned}
&\underset{z=\be'}{\Res} 3\psi_{\be'}^{-3/2}(z)J(z)\big(\ga^2(z)+\ga^{-2}(z)\big)\\
&=\frac{\th_3^2(0)\th_2^2(n\om+\frac{\om}{2})}{\th_1^2(\frac{\om}{2})\th_4^2(n\om)}
\left[-3(\be'-\al)-\frac{9}{5}(\be'-\al') +\frac{9}{5}(\be-\be')-\frac{6}{5}\frac{(\be'-\al')(\be-\be')}{\be'-\al}\right] \\
&\qquad -\frac{3\pi^2}{4(\be'-\al)}\left.\frac{d^2}{d\tilde{u}^2}J(z)\right|_{z=\be'}.
\end{aligned}
\end{equation}
Also,
\begin{equation}\label{eval43}
\begin{aligned}
&\underset{z=\al'}{\Res}\psi_{\al'}^{-3/2}(z)\big(\ga^2(z)-\ga^{-2}(z)\big)\th_{11}(z)\th_{12}(z)\\
&=\frac{\th_3^2(0)\th_1^2(n\om+\frac{\om}{2})}{\th_2^2(\frac{\om}{2})\th_4^2(n\om)}
\left[-\frac{(\be-\al')}{2}+\frac{3}{10}(\be'-\al')
-\frac{3}{10}(\al'-\al)+\frac{1}{5}\frac{(\al'-\al)(\be'-\al')}{\be-\al'}\right] \\
&\qquad +\frac{\pi^2}{4(\be-\al')}\left.\frac{d^2}{d\tilde{u}^2}\bigg(\th_{11}\th_{12}\bigg)\right|_{z=\al'}
\end{aligned}
\end{equation}
and
\begin{equation}\label{eval44}
\begin{aligned}
&\underset{z=\be'}{\Res}\psi_{\be'}^{-3/2}(z)\big(\ga^2(z)-\ga^{-2}(z)\big)\th_{11}(z)\th_{12}(z)\\
&=\frac{\th_3^2(0)\th_2^2(n\om+\frac{\om}{2})}{\th_1^2(\frac{\om}{2})\th_4^2(n\om)}
\left[-\frac{(\be'-\al)}{2}+\frac{3}{10}(\be'-\al')
-\frac{3}{10}(\be-\be')+\frac{1}{5}\frac{(\be'-\al')(\be-\be')}{\be'-\al}\right] \\
&\qquad +\frac{\pi^2}{4(\be'-\al)}\left.\frac{d^2}{d\tilde{u}^2}\bigg(\th_{11}\th_{12}\bigg)\right|_{z=\be'}.
\end{aligned}
\end{equation}

From (\ref{eval19}), (\ref{eval31}), (\ref{eval32}), we have
\begin{equation}\label{eval45}
\underset{z=\al'}{\Res}6\psi_{\al'}^{-3/2}(z)K(z)
=12\pi\frac{\th_3^2(0)\th_1^2(n\om+\frac{\om}{2})}{\th_2^2(\frac{\om}{2})\th_4^2(n\om)}
\left[\frac{\th_2'(\frac{\om}{2})}{\th_2(\frac{\om}{2})}
-\frac{\th_1'(n\om+\frac{\om}{2})}{\th_1(n\om+\frac{\om}{2})}\right],
\end{equation}
and
\begin{equation}\label{eval46}
\underset{z=\be'}{\Res}6\psi_{\be'}^{-3/2}(z)K(z)
=12\pi\frac{\th_3^2(0)\th_2^2(n\om+\frac{\om}{2})}{\th_1^2(\frac{\om}{2})\th_4^2(n\om)}
\left[\frac{\th_1'(\frac{\om}{2})}{\th_1(\frac{\om}{2})}
-\frac{\th_2'(n\om+\frac{\om}{2})}{\th_2(n\om+\frac{\om}{2})}\right].
\end{equation}
Combining (\ref{eval10}),(\ref{eval11}),(\ref{eval33}),(\ref{eval35}), and (\ref{eval37}), we get that
\begin{equation}\label{eval46a}
\begin{aligned}
\underset{z=\al}{\Res}[j_X^0(z)]=&\frac{i}{96}\bigg[\frac{\th_3^2(0)\th_4^2(n\om+\frac{\om}{2})}{\th_3^2(\frac{\om}{2})\th_4^2(n\om)}\left(-\frac{7}{2}(\be'-\al)-\frac{3}{2}(\be-\al)-\frac{3}{2}(\al'-\al)+\frac{(\al'-\al)(\be-\al)}{\be'-\al}\right. \\
&\left. -12\pi\left(\frac{\th_3'(\frac{\om}{2})}{\th_3(\frac{\om}{2})}-\frac{\th_4'(n\om+\frac{\om}{2})}{\th_4(n\om+\frac{\om}{2})}\right)\right)-\frac{\pi^2}{4(\be'-\al)}\frac{d^2}{d\tilde{u}^2}\bigg[3(\th_{11}^2+\th_{12}^2)-\th_{11}\th_{12}\bigg]_{z=\al}\bigg].
\end{aligned}
\end{equation}
Similarly, we have
\begin{equation}\label{eval46ap}
\begin{aligned}
\underset{z=\al'}{\Res}[j_X^0(z)]=\frac{i}{96}&\left[\frac{\th_3^2(0)\th_1^2(n\om+\frac{\om}{2})}{\th_2^2(\frac{\om}{2})\th_4^2(n\om)}\left(\frac{7}{2}(\be-\al')+\frac{3}{2}(\be'-\al')-\frac{3}{2}(\al'-\al)+\frac{(\al'-\al)(\be'-\al')}{\be-\al'}\right.\right. \\
&\left.\left. +12\pi\left(\frac{\th_2'(\frac{\om}{2})}{\th_2(\frac{\om}{2})}-\frac{\th_1'(n\om+\frac{\om}{2})}{\th_1(n\om+\frac{\om}{2})}\right)\right)+\frac{\pi^2}{4(\be-\al')}\frac{d^2}{d\tilde{u}^2}\bigg[3J(z)-\th_{11}\th_{12}\bigg]_{z=\al'}\right],
\end{aligned}
\end{equation}
\begin{equation}\label{eval46bp}
\begin{aligned}
\underset{z=\be'}{\Res}[j_X^0(z)]=\frac{i}{96}&\bigg[\frac{\th_3^2(0)\th_2^2(n\om+\frac{\om}{2})}{\th_1^2(\frac{\om}{2})\th_4^2(n\om)}\left[\frac{7}{2}(\be'-\al)+\frac{3}{2}(\be'-\al')-\frac{3}{2}(\be-\be')+\frac{(\be-\be')(\be'-\al')}{\be'-\al}\right. \\
&\left. -12\pi\left(\frac{\th_1'(\frac{\om}{2})}{\th_1(\frac{\om}{2})}-\frac{\th_2'(n\om+\frac{\om}{2})}{\th_2(n\om+\frac{\om}{2})}\right)\right]+\frac{\pi^2}{4(\be'-\al)}\frac{d^2}{d\tilde{u}^2}\bigg[3J(z)-\th_{11}\th_{12}\bigg]_{z=\be'}\bigg],
\end{aligned}
\end{equation}
and
\begin{equation}\label{eval46b}
\begin{aligned}
\underset{z=\be}{\Res}[j_X^0(z)]=&\frac{i}{96}\bigg[\frac{\th_3^2(0)\th_3^2(n\om+\frac{\om}{2})}{\th_4^2(\frac{\om}{2})\th_4^2(n\om)}\left(-\frac{7}{2}(\be-\al')-\frac{3}{2}(\be-\al)-\frac{3}{2}(\be-\be')+\frac{(\be-\be')(\be-\al)}{\be-\al'}\right. \\
&\left. +12\pi\left(\frac{\th_4'(\frac{\om}{2})}{\th_4(\frac{\om}{2})}-\frac{\th_3'(n\om+\frac{\om}{2})}{\th_3(n\om+\frac{\om}{2})}\right)\right)-\frac{\pi^2}{4(\be-\al')}\frac{d^2}{d\tilde{u}^2}\bigg[3(\th_{11}^2+\th_{12}^2)-\th_{11}\th_{12}\bigg]_{z=\be}\bigg].
\end{aligned}
\end{equation}
Using MAPLE for calculations, we get
\begin{equation}\label{eval49}
\begin{aligned}
&\left.\frac{d^2}{d\tilde{u}^2}[3(\th_{11}^2+\th_{12}^2)-\th_{11}\th_{12}]\right|_{z=\al}
=2\frac{\th_3^2(0)\th_4^2(n\om+\frac{\om}{2})}{\th_3^2(\frac{\om}{2})\th_4^2(n\om)}
\left[5\frac{\th_4^{''}(n\om+\frac{\om}{2})}{\th_4(n\om+\frac{\om}{2})}
-5\frac{\th_3^{''}(\frac{\om}{2})}{\th_3(\frac{\om}{2})}\right. \\
&\hskip 3cm +7\left(\frac{\th_4'(n\om+\frac{\om}{2})}{\th_4(n\om+\frac{\om}{2})}\right)^2
+17\left(\frac{\th_3'(\frac{\om}{2})}{\th_3(\frac{\om}{2})}\right)^2 
\left. -24\frac{\th_4'(n\om+\frac{\om}{2})\th_3'(\frac{\om}{2})}{\th_4(n\om+\frac{\om}{2})\th_3(\frac{\om}{2})}\right],
\end{aligned}
\end{equation}
\begin{equation}\label{eval50}
\begin{aligned}
&\left.\frac{d^2}{d\tilde{u}^2}[3(\th_{11}^2+\th_{12}^2)-\th_{11}\th_{12}]\right|_{z=\be}
=2\frac{\th_3^2(0)\th_3^2(n\om+\frac{\om}{2})}{\th_4^2(\frac{\om}{2})\th_4^2(n\om)}
\left[5\frac{\th_3^{''}(n\om+\frac{\om}{2})}{\th_3(n\om+\frac{\om}{2})}
-5\frac{\th_4^{''}(\frac{\om}{2})}{\th_4(\frac{\om}{2})}\right. \\
&\hskip 3cm +7\left(\frac{\th_3'(n\om+\frac{\om}{2})}{\th_3(n\om+\frac{\om}{2})}\right)^2
+17\left(\frac{\th_4'(\frac{\om}{2})}{\th_4(\frac{\om}{2})}\right)^2 
\left. -24\frac{\th_3'(n\om+\frac{\om}{2})\th_4'(\frac{\om}{2})}{\th_3(n\om+\frac{\om}{2})\th_4(\frac{\om}{2})}\right],
\end{aligned}
\end{equation}
\begin{equation}\label{eval51}
\begin{aligned}
&\left.\frac{d^2}{d\tilde{u}^2}[3J(z)-\th_{11}\th_{12}]\right|_{z=\al'}
=2\frac{\th_3^2(0)\th_1^2(n\om+\frac{\om}{2})}{\th_2^2(\frac{\om}{2})\th_4^2(n\om)}
\left[5\frac{\th_1^{''}(n\om+\frac{\om}{2})}{\th_1(n\om+\frac{\om}{2})}
-5\frac{\th_2^{''}(\frac{\om}{2})}{\th_2(\frac{\om}{2})}\right. \\
&\hskip 3cm +7\left(\frac{\th_1'(n\om+\frac{\om}{2})}{\th_1(n\om+\frac{\om}{2})}\right)^2
+17\left(\frac{\th_2'(\frac{\om}{2})}{\th_2(\frac{\om}{2})}\right)^2 
\left. -24\frac{\th_1'(n\om+\frac{\om}{2})\th_2'(\frac{\om}{2})}{\th_1(n\om+\frac{\om}{2})\th_2(\frac{\om}{2})}\right],
\end{aligned}
\end{equation}
\begin{equation}\label{eval52}
\begin{aligned}
&\left.\frac{d^2}{d\tilde{u}^2}[3J(z)-\th_{11}\th_{12}]\right|_{z=\be'}
=2\frac{\th_3^2(0)\th_2^2(n\om+\frac{\om}{2})}{\th_1^2(\frac{\om}{2})\th_4^2(n\om)}
\left[5\frac{\th_2^{''}(n\om+\frac{\om}{2})}{\th_2(n\om+\frac{\om}{2})}
-5\frac{\th_1^{''}(\frac{\om}{2})}{\th_1(\frac{\om}{2})}\right. \\
&\hskip 3cm +7\left(\frac{\th_2'(n\om+\frac{\om}{2})}{\th_2(n\om+\frac{\om}{2})}\right)^2
+17\left(\frac{\th_1'(\frac{\om}{2})}{\th_1(\frac{\om}{2})}\right)^2 
\left. -24\frac{\th_2'(n\om+\frac{\om}{2})\th_1'(\frac{\om}{2})}{\th_2(n\om+\frac{\om}{2})\th_1(\frac{\om}{2})}\right].
\end{aligned}
\end{equation}

These formulae, combined with (\ref{eval12a}), prove Lemma \ref{X1}.

\section{Large $n$ asymptotic formula for $h_n$}

We evaluate the large $n$ asymptotic behavior of
$h_{nn}$ and then we use formula (\ref{rw5}). By (\ref{IP6}), $h_{nn}=[\bold P_1]_{12}$, and by (\ref{red3}),
\begin{equation}\label{zn1}
[\bold P_1]_{12}=[\bold R_1]_{12}\left(-\frac{n\pi i}{\ga}\right)\,,
\end{equation}
hence
\begin{equation}\label{zn2}
h_{nn}=[\bold R_1]_{12}\left(-\frac{n\pi i}{\ga}\right)\,.
\end{equation}
Furthermore, from (\ref{ft3b}) we obtain that
\begin{equation}\label{zn3}
h_{nn}=e^{nl}[\bold T_1]_{12}\left(-\frac{n\pi i}{\ga}\right)\,,
\end{equation}
and from (\ref{st2}), that
\begin{equation}\label{zn4}
h_{nn}=e^{nl}[\bold S_1]_{12}\left(-\frac{n\pi i}{\ga}\right)\,.
\end{equation}
It follows from (\ref{tt1}) that
\begin{equation}\label{zn5}
\bold S_1=\bold M_1+\bold X_1.
\end{equation}
By (\ref{m36}),
\begin{equation}\label{zn6}
\left[\bold M_1\right]_{12}=\frac{iA\th_4\big((n+1)\om\big)}{\th_4(n\om)}\,,\qquad
\om=\frac{\pi(1+\z)}{2}\,,\qquad  A=\frac{\pi \th_1'(0)}{2\th_1(\om)}\,,
 \end{equation}
and by (\ref{x_1}), 
\begin{equation}\label{zn7}
\left[\bold X_1\right]_{12}=\frac{c(n)}{n}+O(n^{-2}),
 \end{equation}
where
\begin{equation}\label{zn7a}
c(n)=X_{\al}+X_{\al'}+X_{\be'}+X_{\be}
\end{equation}
is an explicit quasi-periodic function of $n$.
Therefore,
\begin{equation}\label{zn8}
h_{nn}=e^{nl}\left[\frac{iA\th_4\big((n+1)\om\big)}{\th_4(n\om)}
+\frac{c(n)}{n}+O(n^{-2})\right]\left(-\frac{n\pi i}{\ga}\right)\,.
\end{equation}
By (\ref{lm18}),
\begin{equation}\label{zn9}
e^{\frac{l}{2}}= \frac{\pi \th'_1(0)}{2e\th_1(\om)}=\frac{A}{e}\,,
\end{equation}
hence
\begin{equation}\label{zn10}
\begin{aligned}
h_{nn}&=\left(\frac{A}{e}\right)^{2n}\left[iA\,\frac{\th_4\big((n+1)\om\big)}{\th_4(n\om)}
+\frac{c(n)}{n}+O(n^{-2})\right]\left(-\frac{n\pi i}{\ga}\right)\\
&=\frac{n\pi A^{2n+1}\th_4\big((n+1)\om\big)}{\ga e^{2n}\th_4(n\om)}
\left(1+\frac{c_1(n)}{n}+O(n^{-2})\right),
\end{aligned}
\end{equation}
where
\begin{equation}\label{zn11}
c_1(n)=\frac{c(n)\th_4(n\om)}{iA\th_4\big((n+1)\om\big)}\,.
\end{equation}
From (\ref{rw5}) and the Stirling formula we obtain that
\begin{equation} \label{zn12}
\frac{h_n}{(n!)^2}=\frac{n^{2n}h_{nn}}{(n!)^2(2\ga)^{2n}}=
\left(\frac{e}{2\ga}\right)^{2n}\frac{h_{nn}}{2\pi n}\left(1-\frac{1}{6n}+O(n^{-2})\right)\,,
\end{equation}
hence by (\ref{zn10}),
\begin{equation} \label{zn13}
\begin{aligned}
\frac{h_n}{(n!)^2}&=
\left(\frac{e}{2\ga}\right)^{2n}\frac{1}{2\pi n}\,\frac{n\pi A^{2n+1}\th_4\big((n+1)\om\big)}{\ga e^{2n}\th_4(n\om)}
\left(1+\frac{c_1(n)}{n}-\frac{1}{6n}+O(n^{-2})\right)\\
&=G^{2n+1}\,\frac{\th_4\big((n+1)\om\big)}{\th_4(n\om)}
\left(1+\frac{c_2(n)}{n}+O(n^{-2})\right)\,,
\end{aligned}
\end{equation}
where
\begin{equation}\label{zn14}
G=\frac{A}{2\ga}=  \frac{\pi \th'_1(0)}{4\ga\th_1(\om)},\qquad c_2(n)=c_1(n)-\frac{1}{6}\,.
\end{equation}
Observe that $c_1(n)$ has the form,
\begin{equation}\label{c1n_1}
c_1(n)=f(n\om,\om),
\end{equation}
where $f(x,\om)$ is a real analytic function, periodic with respect to both $x$ and $\om$,
of periods $\pi$ and $2\pi$, respectively,
so that
\begin{equation}\label{c1n_2}
f(x+\pi,\om)=f(x,\om),\qquad f(x,\om+2\pi)=f(x,\om).
\end{equation}
We can now summarize now the asymptotic formula for $h_n/(n!)^2$.

\begin{prop}\label{h_n1} As $n\to\infty$, 
\begin{equation} \label{hn1}
\frac{h_n}{(n!)^2}= G^{2n+1}\,\frac{\th_4\big((n+1)\om\big)}{\th_4(n\om)}
\left[1+\frac{f_0(n\om,\om)}{n}+O(n^{-2})\right]\,,
\end{equation}
where 
\begin{equation} \label{hn2}
 G=\frac{\pi \th'_1(0)}{4\ga\th_1(\om)}
\end{equation}
and 
\begin{equation} \label{hn3}
 f_0(x,\om)=f(x,\om)-\frac{1}{6}
\end{equation}
is a real analytic function which satisfies the periodicity conditions
\begin{equation}\label{hn4}
f_0(x+\pi,\om)=f_0(x,\om),\qquad f_0(x,\om+2\pi)=f_0(x,\om).
\end{equation}
\end{prop}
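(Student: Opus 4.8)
The plan is to read Proposition~\ref{h_n1} off the chain of \emph{exact} identities (\ref{zn1})--(\ref{zn14}) already set up, so that only two genuinely new points remain to be established: that the leading factor is precisely $G^{2n+1}\th_4\big((n+1)\om\big)/\th_4(n\om)$ with $G$ as in (\ref{hn2}), and that the $O(n^{-1})$ correction coefficient $c_1(n)-\tfrac16$ has the claimed real-analyticity and double periodicity.

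For the leading factor I would assemble: the relation $\bold S_1=\bold M_1+\bold X_1$ from (\ref{zn5}) (immediate from (\ref{tt1}), since $\bold S_n=\bold X_n\bold M$ off the disks); the explicit value $[\bold M_1]_{12}=iA\,\th_4\big((n+1)\om\big)/\th_4(n\om)$ from Proposition~\ref{proposition_model}; the estimate $[\bold X_1]_{12}=c(n)/n+O(n^{-2})$ from Lemma~\ref{X1}; and $e^{l/2}=A/e$ from Proposition~\ref{proposition_lagrange_multiplier} combined with (\ref{zn9}). Inserting these into (\ref{zn4}) produces (\ref{zn8}), and the Stirling expansion (\ref{zn12}) converts (\ref{rw5}) into (\ref{zn13})--(\ref{zn14}); this is routine algebra and it already yields $G=A/(2\ga)=\pi\th_1'(0)/(4\ga\th_1(\om))$ and the form (\ref{hn3}) of $f_0$.

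The substantive step is to verify the structure of $c_1(n)=c(n)\th_4(n\om)/\big(iA\,\th_4((n+1)\om)\big)$, where $c(n)=X_\al+X_{\al'}+X_{\be'}+X_\be$, directly from the closed forms (\ref{eval12f})--(\ref{eval12e}). Writing $x=n\om$, I would observe: (i) by Proposition~\ref{proposition_end_points} the endpoint differences, hence all the constants $C_\xi$, depend on $\om$ only; (ii) each $\Xi_\xi$ is a ratio whose numerator and denominator are \emph{squares} of theta functions evaluated at $x+\tfrac{\om}{2}$, at $x$, at $\tfrac{\om}{2}$, or at $0$, while $\xi_\xi$ and $\eta_\xi$ are built from first and second logarithmic derivatives at $x+\tfrac{\om}{2}$ and $\tfrac{\om}{2}$; since $\th_3,\th_4$ have period $\pi$ and $\th_1,\th_2$ change only by a sign under $z\mapsto z+\pi$ (see (\ref{main9})), all these $(-1)$'s cancel inside the squares and inside the logarithmic derivatives, so every $\Xi_\xi,\xi_\xi,\eta_\xi$ is $\pi$-periodic in $x$; (iii) the remaining denominator factor $\th_4(x)\th_4(x+\om)$ is $\pi$-periodic in $x$ and never vanishes for real $x$ because the zeros of $\th_4$ lie at $\tfrac{\pi\tau}{2}+\pi\Z+\pi\tau\Z$ with $\tau$ purely imaginary. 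Hence $c_1(n)=f(n\om,\om)$ with $f$ real analytic and $\pi$-periodic in its first variable. For the $2\pi$-periodicity in $\om$ I would note that $\om$ enters $f$ only through $\th_j(\tfrac{\om}{2})$, through the endpoints (which by (\ref{eq8a}) are logarithmic derivatives at $\tfrac{\om}{2}$, hence unchanged under $\om\mapsto\om+2\pi$), through $A(\om)=\pi\th_1'(0)/(2\th_1(\om))$ (unchanged since $\th_1$ has period $2\pi$), and through the shifted arguments $x+\tfrac{\om}{2}$ and $x+\om$; the same sign cancellations as in (ii) then give $f(x,\om+2\pi)=f(x,\om)$. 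Reality of $f_0$ follows either directly from (\ref{eval12f})--(\ref{eval12e}), where each $X_\xi$ is $i$ times a real quantity, or from the fact that $[\bold S_1]_{12}$ must be purely imaginary (because $h_{nn}>0$ is a squared $L^2$-norm while $e^{nl}$ and $n\pi/\ga$ are positive) whereas $[\bold M_1]_{12}$ is purely imaginary with $A>0$.

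There is no obstacle of principle; the care needed is all in the bookkeeping of the quasi-periodicity factors in (\ref{main9}): one must confirm that replacing $n$ by $n+1$ in the model term produces exactly the ratio $\th_4\big((n+1)\om\big)/\th_4(n\om)$ and that the $e^{-2iz}q^{-1}$-type factors and $(-1)$'s cancel, so that $f_0$ is genuinely $\pi$- and $2\pi$-periodic rather than merely quasi-periodic. One should also record that the $O(n^{-2})$ remainders are uniform in $n$, which comes from the uniform Neumann-series bound (\ref{tt16}) and the uniform parametrix matchings (\ref{pm2}), (\ref{pmi2}); these are what license the exact residue evaluation of $\bold X_1$ up to $O(n^{-2})$ in Lemma~\ref{X1}, and hence the expansion in (\ref{hn1}).
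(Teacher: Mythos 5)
Your proposal follows essentially the same route as the paper: it assembles the chain (\ref{zn1})--(\ref{zn14}) using Proposition \ref{proposition_model}, Lemma \ref{X1}, Proposition \ref{proposition_lagrange_multiplier} and Stirling's formula, exactly as the paper does, and then identifies $c_1(n)$ with $f(n\om,\om)$. Your explicit verification of the $\pi$- and $2\pi$-periodicity and reality of $f$ from (\ref{eval12f})--(\ref{eval12e}) correctly fills in what the paper only asserts in (\ref{c1n_1})--(\ref{c1n_2}), so the argument is complete and consistent with the paper's.
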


By (\ref{zn11}),
\begin{equation}\label{hn5}
f(n\om,\om)=\frac{X(n\om,\om)\th_4(n\om)}{iA\th_4\big((n+1)\om\big)}\,.
\end{equation}
where
\begin{equation}\label{hn6}
X\equiv X_\al+X_{\al'}+X_{\be'}+X_\be
\end{equation}
and explicit expressions for $X_\al,X_{\al'},X_{\be'},X_\be$ are given in Lemma \ref{X1}.

In the subsequent sections, we carry out a concrete evaluation of $f(n\om,\om)$, obtaining that
\begin{equation}\label{hn7}
f_0(x,\om)\equiv 0,
\end{equation}
thus we can improve Proposition \ref{h_n1} to the following.
\begin{prop}\label{h_n2}
As $n\to\infty$, 
\begin{equation} \label{hn8}
\frac{h_n}{(n!)^2}= G^{2n+1}\,\frac{\th_4\big((n+1)\om\big)}{\th_4(n\om)}
\bigg(1+O(n^{-2})\bigg)\,,
\end{equation}
where $G$ is defined in (\ref{hn2}).
\end{prop}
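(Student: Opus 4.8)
The plan is to prove the stronger statement $f_0\equiv 0$ by a direct evaluation of $f(n\om,\om)$ using the closed formula $(\ref{hn5})$--$(\ref{hn6})$ together with the explicit expressions for $X_\al,X_{\al'},X_{\be'},X_\be$ supplied by Lemma~\ref{X1}. Since $f_0=f-\tfrac16$, since $f(n\om,\om)=c_1(n)$, and since by $(\ref{zn6})$, $(\ref{zn11})$ one has $c_1(n)=c(n)\,\th_4(n\om)/\big(iA\,\th_4((n+1)\om)\big)$ with $c(n)=X_\al+X_{\al'}+X_{\be'}+X_\be$, the assertion $f_0\equiv0$ is equivalent to the identity
\begin{equation*}
X_\al+X_{\al'}+X_{\be'}+X_\be=\frac{1}{6}\,[\bold M_1]_{12}=\frac{iA(\om)\,\th_4\big((n+1)\om\big)}{6\,\th_4(n\om)}\,,
\end{equation*}
i.e.\ $[\bold X_1]_{12}=\tfrac1{6n}[\bold M_1]_{12}+O(n^{-2})$, which is the form I would aim at; the right-hand side is already known from Proposition~\ref{proposition_model}. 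After the reductions below every quantity entering $X_\xi$ is a rational expression in Jacobi theta functions and their first and second derivatives at the four arguments $0,\ \tfrac\om2,\ \om$, and $n\om+\tfrac\om2$, so the identity is a consequence of relations among such theta values, of the type collected in Appendix~\ref{identities}.

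I would proceed as follows. First, substitute the end-point formulas $(\ref{eq8a})$ and the distance formulas $(\ref{eq8b})$--$(\ref{eq8c})$ of Proposition~\ref{proposition_end_points} into the constants $C_\al,C_{\al'},C_{\be'},C_\be$ of $(\ref{eval12i})$ and into the denominators $\be'-\al$, $\be-\al'$, $\be'-\al$ appearing in $(\ref{x_2})$, turning each $C_\xi$ into a quotient of theta functions of $\tfrac\om2$. Second, combine the prefactor $\Xi_\xi$ of $(\ref{eval12b})$ with the common factor $\th_4(n\om)/\big(iA\,\th_4((n+1)\om)\big)$, using $A(\om)=\pi\th_1'(0)/(2\th_1(\om))$ and the addition/product formulas that express $\th_j^2(n\om+\tfrac\om2)$ through $\th_4(n\om)$ and $\th_4((n+1)\om)$. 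Third, reduce the second-derivative data in $\eta_\xi$ of $(\ref{eval12f})$ by the classical identities expressing $(\log\th_j)''$ through the Weierstrass $\wp$-function at the appropriate half-period shift, together with the $Z$-function addition formula $(\ref{eq25c})$ for the mixed products of logarithmic derivatives. Fourth, assemble $X_\al+X_{\al'}+X_{\be'}+X_\be$ and use the symmetries $z\mapsto -z$ and $z\mapsto z+\tfrac\pi2$ of the theta functions — which interchange the outer pair $\{\al,\be\}$ and the inner pair $\{\al',\be'\}$ — to combine the four contributions term by term. Finally, verify with computer algebra, exactly as is already done for $(\ref{eval49})$--$(\ref{eval52})$, that the result collapses to $\tfrac16[\bold M_1]_{12}$; Proposition~\ref{h_n2}, i.e.\ $(\ref{hn8})$, then follows from $(\ref{hn1})$.

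The main obstacle is the last two steps. Proposition~\ref{h_n1} only tells us a priori that the $1/n$-coefficient is a $\pi$-periodic function of $n\om$, so the real content is to show that, after all the theta algebra, the dependence on the ``fast'' argument $n\om+\tfrac\om2$ cancels completely among the four turning-point residues, and then that the surviving $\om$-dependent quantity equals precisely $\tfrac16$ rather than merely something bounded. The natural way to control the first cancellation is to exploit the two involutions interchanging $\{\al,\be\}$ and $\{\al',\be'\}$ together with the translation identities $(\ref{main12})$, which pair the $\th_1,\th_2,\th_3,\th_4$ quotients occurring in $X_\al$ with those in $X_\be$, and likewise $X_{\al'}$ with $X_{\be'}$, so that the arguments $n\om+\tfrac\om2$ enter only through the combinations already present in $[\bold M_1]_{12}$. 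As an independent consistency check one can match the final constant against the Toda equation $(\ref{pf10})$, which forces a relation between the $1/n$-corrections of $h_n/(n!)^2$ for consecutive $n$; but I would keep the explicit theta evaluation as the primary argument.
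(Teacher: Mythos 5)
You set up the problem correctly: Proposition \ref{h_n2} reduces, via (\ref{hn5})--(\ref{hn6}), to the identity $X_\al+X_{\al'}+X_{\be'}+X_\be=\tfrac16[\bold M_1]_{12}$, and this is exactly the identity the paper proves. But your route to it has a gap at precisely the hard point. You assert that the two involutions interchanging $\{\al,\be\}$ and $\{\al',\be'\}$, together with (\ref{main12}), pair $X_\al$ with $X_\be$ and $X_{\al'}$ with $X_{\be'}$ so that the fast argument $z=n\om+\tfrac\om2$ ``enters only through the combinations already present in $[\bold M_1]_{12}$.'' That is not how the cancellation works and is not established by anything you write: each $X_\xi$ contains $\th_j''(z)/\th_j(z)$ and $\bigl(\th_j'(z)/\th_j(z)\bigr)^2$ terms which do not pair off against their partners, and the sum retains genuine $z$-dependence in every partial combination. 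Deferring the remainder to ``verify with computer algebra'' is not adequate either: the target is a transcendental identity in the two free variables $z$ and $\om$ (with free nome $q$), and a CAS cannot certify it without first being handed a finite list of algebraic reductions — which is the content you are missing.

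The paper supplies exactly that missing structure. It treats $z=n\om+\tfrac\om2$ as a complex variable and shows that $\tilde f(z,\om)=X\,\th_4(n\om)/\bigl(iA\,\th_4((n+1)\om)\bigr)$ is (i) doubly periodic in $z$ — which requires the three nontrivial identities (\ref{Q8}), (\ref{Q9}), (\ref{Q7}), proved via the Jacobi identity (\ref{Q11}) and (\ref{evf6a})--(\ref{evf6b}) — and (ii) free of poles, by computing the residues at $z_{1,2}=\tfrac{\pi\tau}{2}\pm\tfrac\om2$ and showing they vanish ((\ref{Q20})--(\ref{Q32})). Liouville's theorem then gives constancy in $z$, and the constant is computed at the single point $z=0$, where most of the $h_{jk}$ vanish and the remaining sum collapses to $\tfrac16$ by (\ref{evf6a})--(\ref{evf6d}), (\ref{main13a}) and (\ref{2}). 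If you want to salvage your direct approach, you would need to replace the appeal to symmetry pairing by an argument of comparable strength; as it stands, your proposal proves the easy reductions and leaves the decisive cancellation unproved.
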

To this end, we will first show that $f(n\om,\om)$ does not depend on $n$.

\medskip

\section{$f(x,\om)$ is constant in $x$}
Denote
\begin{equation}\label{Q1}
z=n\om+\frac{\om}{2}\,, \quad \tilde{f}(x,\om) \equiv f(x-\frac{\om}{2},\om)\,,
\end{equation}
so that 
\begin{equation}\label{Q1a}
\tilde{f}(z,\om)=f(n\om,\om).
\end{equation}
To prove that $f(x,\om)$ is constant in $x$ we will prove the following lemmas.
\begin{lem}\label{doubleper}
The function $\tilde{f}(z,\om)$ is doubly periodic in $z$.
\end{lem}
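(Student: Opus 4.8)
The plan is to show that $\tilde f(z,\om)$ extends to a meromorphic function of $z$ which is in fact entire and doubly periodic, with periods $\pi$ (the real period of the theta functions) and $\pi\tau = \frac{i\pi^2}{2\ga}$ (the imaginary quasi-period). Periodicity in the real direction is essentially already in hand: by Proposition \ref{h_n1} and formula (\ref{c1n_2}), $f(x,\om)$ has period $\pi$ in $x$, so $\tilde f(z,\om)=f(z-\tfrac{\om}{2},\om)$ inherits period $\pi$ in $z$. The content of the lemma is therefore the second period, $\tilde f(z+\pi\tau,\om)=\tilde f(z,\om)$.

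First I would assemble the explicit formula for $\tilde f$ by combining (\ref{hn5}), (\ref{hn6}) with the expressions in Lemma \ref{X1}: recalling $z=n\om+\frac{\om}{2}$ (so that $n\om+\frac{\om}{2}=z$ and $n\om = z-\frac{\om}{2}$, $(n+1)\om = z+\frac{\om}{2}$), each of $X_\al,X_{\al'},X_{\be'},X_\be$ is, up to the common normalization $\frac{1}{iA\th_4(n\om)/\th_4((n+1)\om)}$, a product of a $\Xi$-factor (quotients of $\th_j^2$ evaluated at $z$ and at $\frac{\om}{2}$, divided by $\th_4^2(n\om)$) times a bracket involving $\xi$ (logarithmic derivatives of thetas at $z$), $\eta$ (second logarithmic derivatives at $z$), and the constants $C_\xi$ built from the endpoint differences (\ref{eq8b})--(\ref{eq8c}). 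All the $z$-dependence sits in the four theta functions $\th_1(z),\th_2(z),\th_3(z),\th_4(z)$, their first and second derivatives, and in the overall factors $\th_4(n\om)=\th_4(z-\frac{\om}{2})$, $\th_4((n+1)\om)=\th_4(z+\frac{\om}{2})$.

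The key step is then a bookkeeping of how each piece transforms under $z\mapsto z+\pi\tau$, using the quasi-periodicity (\ref{main9}): each $\th_j(z+\pi\tau) = \varepsilon_j\, e^{-2iz} q^{-1}\th_j(z)$ with $\varepsilon_j\in\{\pm 1\}$. Consequently $\frac{\th_j'}{\th_j}(z+\pi\tau) = \frac{\th_j'}{\th_j}(z) - 2i$ and $\frac{\th_j''}{\th_j}(z+\pi\tau)=\frac{\th_j''}{\th_j}(z)-4i\frac{\th_j'}{\th_j}(z)-4$, uniformly in $j$. The ratio $\th_4^2(z+\frac{\om}{2})/\th_4^2(z-\frac{\om}{2})$ is manifestly invariant under $z\mapsto z+\pi\tau$ (the factors $\varepsilon_4^2 e^{\mp 2i(z\pm\om/2)}q^{-1}$ cancel in the quotient), and similarly the quotient of $\Xi$-type terms, being a ratio of $\th$'s at $z$ divided by $\th_4^2$ at $z\mp\frac{\om}{2}$, acquires at most an overall exponential/nome factor that must be tracked. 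I would verify that when all four contributions $X_\xi$ are summed and then divided by the common factor, the net exponential prefactor is $1$ and the shifts $-2i$ in the $\xi$'s and the combined shift in the $\eta$'s cancel against each other and against the $C_\xi$ — this is precisely the statement that the ugly constants $C_\al,\dots,C_\be$ in (\ref{eval12i}) were engineered to produce a doubly periodic answer. The cleanest route is to observe that $\tilde f(z,\om)$ is, by inspection of the formulas, meromorphic in $z$ with possible poles only where some $\th_j(z)$ or $\th_4(z\mp\frac{\om}{2})$ vanishes; but by Proposition \ref{h_n1} $f_0$ (hence $f$, hence $\tilde f$) is real-analytic on the real line for all $\om$, and quasi-periodicity forces any would-be pole on a horizontal line $\Im z = $ const to propagate to a pole on $\R$ — contradiction. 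Hence $\tilde f$ is entire, and an entire function with period $\pi$ that transforms under $z\mapsto z+\pi\tau$ by multiplication by a nowhere-zero entire (in fact exponential) factor is, once that factor is shown to be $1$, doubly periodic.

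The main obstacle I expect is the explicit cancellation check in the $\pi\tau$-direction: one must carry the shifts $\frac{\th_j'}{\th_j}\mapsto\frac{\th_j'}{\th_j}-2i$ and $\frac{\th_j''}{\th_j}\mapsto \frac{\th_j''}{\th_j}-4i\frac{\th_j'}{\th_j}-4$ through the brackets $C_\xi+12\pi\xi_\xi+\frac{\pi^2\eta_\xi}{2(\cdot)}$, expand, and see that the linear-in-shift terms recombine into something absorbed by the prefactor or cancelling across the four sites. This is a finite but delicate computation; it is exactly the kind of identity MAPLE was used for in deriving (\ref{eval49})--(\ref{eval52}), so I would expect the authors to reduce it, via the theta identities in their Appendix, to a statement that can be checked symbolically. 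An alternative, possibly slicker, argument avoiding the brute-force shift computation: one can instead appeal to the fact that $h_{nn}$, hence $h_n/(n!)^2$, is (through the Izergin--Korepin / orthogonal-polynomial representation) an analytic function of the parameter $t$ — equivalently of $\om$ — and that the leading asymptotics $G^{2n+1}\th_4((n+1)\om)/\th_4(n\om)$ is already the genuine large-$n$ behavior up to the $1/n$ correction; the Toda equation (\ref{pf10}) then relates $f_0$ at consecutive $n$ and, combined with the model-RHP quasi-periodicity built into $\bold M$ via $\Om_n = \pi + n\pi(1+\z)$, forces the $1/n$-coefficient to be a doubly periodic function of $z=n\om+\frac{\om}{2}$. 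I would present the direct shift computation as the primary proof and mention the Toda-based consistency as corroboration.
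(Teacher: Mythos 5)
Your primary route --- expanding $\tilde f(z,\om)$ in the building blocks $\th_j(z)^2$, $\th_j'(z)\th_j(z)$, $\th_j'(z)^2$, $\th_j''(z)\th_j(z)$ over $\th_4(z-\frac{\om}{2})\th_4(z+\frac{\om}{2})$ and tracking the shifts $\frac{\th_j'}{\th_j}\mapsto\frac{\th_j'}{\th_j}-2i$ and $\frac{\th_j''}{\th_j}\mapsto\frac{\th_j''}{\th_j}-4i\frac{\th_j'}{\th_j}-4$ under $z\mapsto z+\pi\tau$ --- is exactly the paper's proof, so the approach is right. Two corrections, however. First, the constants $C_\al,\dots,C_\be$ play no role in the cancellation: they multiply the blocks $\th_j^2(z)/\bigl(\th_4(z-\frac{\om}{2})\th_4(z+\frac{\om}{2})\bigr)$, which are already exactly doubly periodic, so only the coefficients of the derivative blocks matter, and the entire verification collapses to the three theta identities (\ref{Q7})--(\ref{Q9}), proved from the Jacobi addition formula (\ref{Q11}) together with (\ref{evf6a})--(\ref{evf6b}); the $C_\xi$ were not ``engineered'' for periodicity. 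Second, your proposed shortcut to entirety is both misplaced (it is the content of Lemma \ref{residue}, not of this lemma) and incorrect: the candidate poles sit at $z_{1,2}=\frac{\pi\tau}{2}\pm\frac{\om}{2}$, i.e.\ at height $\Im z=\frac{\pi^2}{4\ga}$, and translation by $\pi\tau$ displaces them by $\frac{\pi^2}{2\ga}$ in the imaginary direction, so their orbit never meets the real axis; real-analyticity of $f$ on $\R$ therefore yields no contradiction, and the residues must be (and in the paper are) shown to vanish by direct computation. The Toda-equation remark is likewise only heuristic. None of this damages your main argument, which coincides with the paper's, but the ``cleaner route'' should be dropped.
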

\begin{lem}\label{residue}
The function $\tilde{f}(z,\om)$ is analytic throughout the $z$-plane.
\end{lem}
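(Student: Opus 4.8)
The plan is to turn $\tilde f(z,\om)$ into an explicit ratio of entire functions and then locate and cancel the zeros of the denominator. Substituting $z=n\om+\frac{\om}{2}$, so that $n\om=z-\frac{\om}{2}$ and $(n+1)\om=z+\frac{\om}{2}$, into \eqref{zn11} and the formulas for $X_\al,X_{\al'},X_{\be'},X_\be$ of Lemma \ref{X1}, I would first write
\[
\tilde f(z,\om)=\frac{\th_4\!\left(z-\frac{\om}{2}\right)}{iA\,\th_4\!\left(z+\frac{\om}{2}\right)}\Big(X_\al+X_{\al'}+X_{\be'}+X_\be\Big).
\]
In each $X_\xi$ the factor $\Xi_\xi$ has numerator $\th_{j}^2(z)$ (with $j=j(\xi)$ as in \eqref{eval12b}) and denominator $\th_4^2\!\left(z-\frac{\om}{2}\right)$ times a constant in $z$, while the accompanying bracket $C_\xi+12\pi\xi_\xi+\frac{\pi^2\eta_\xi}{2(\cdot)}$ is a polynomial in $\th_j'(z)/\th_j(z)$ and $\th_j''(z)/\th_j(z)$ with $z$-independent coefficients. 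Multiplying $\Xi_\xi$ into the bracket therefore replaces $\th_j'/\th_j$, $(\th_j'/\th_j)^2$, $\th_j''/\th_j$ by $\th_j\th_j'$, $(\th_j')^2$, $\th_j\th_j''$, so that each $X_\xi$, and hence the sum, is an entire function divided by $\th_4^2\!\left(z-\frac{\om}{2}\right)$; writing $X_\al+X_{\al'}+X_{\be'}+X_\be=\mathcal N(z)/\th_4^2\!\left(z-\frac{\om}{2}\right)$ with $\mathcal N$ entire, we get
\[
\tilde f(z,\om)=\frac{\mathcal N(z)}{iA\,\th_4\!\left(z+\frac{\om}{2}\right)\th_4\!\left(z-\frac{\om}{2}\right)}.
\]

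By \eqref{main13} the only candidate poles of $\tilde f(\cdot,\om)$ in $\C$ are the zeros of $\th_4\!\left(z\pm\frac{\om}{2}\right)$, i.e.\ $z=\pm\frac{\om}{2}+\frac{\pi\tau}{2}$ and their translates by $\pi\Z+\pi\tau\Z$; since $0<\om<\pi$ these two points are distinct modulo the period lattice, and because $\th_4$ has only simple zeros each candidate pole is at most simple. Now I invoke Lemma \ref{doubleper}: $\tilde f(\cdot,\om)$ is elliptic, so the sum of its residues over a fundamental cell vanishes. In such a cell there are at most the two poles $z_0=\frac{\om}{2}+\frac{\pi\tau}{2}$ and $z_1=-\frac{\om}{2}+\frac{\pi\tau}{2}$, hence it suffices to prove that one of the two residues vanishes; I would do this at $z_1$. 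Since $\th_4\!\left(z_1-\frac{\om}{2}\right)=\th_4\!\left(-\om+\frac{\pi\tau}{2}\right)\ne0$, the vanishing of $\Res_{z=z_1}\tilde f$ is equivalent to $\mathcal N(z_1)=0$, i.e.\ to the identity
\[
X_\al+X_{\al'}+X_{\be'}+X_\be=0\qquad\text{at}\qquad z=-\tfrac{\om}{2}+\tfrac{\pi\tau}{2}.
\]

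To establish this identity I would use the relation $\th_1(w)=-ie^{iw+i\pi\tau/4}\th_4\!\left(w+\frac{\pi\tau}{2}\right)$ from \eqref{main12}, together with the other relations in \eqref{main9}--\eqref{main12}, to express every $\th_j^{(k)}\!\left(-\frac{\om}{2}+\frac{\pi\tau}{2}\right)$ and $\th_j^{(k)}\!\left(-\om+\frac{\pi\tau}{2}\right)$ occurring in $X_\al,\dots,X_\be$ in terms of $\th_j^{(k)}\!\left(\frac{\om}{2}\right)$, $\th_j(0)$ and elementary exponential factors. Feeding in the explicit end-point differences \eqref{eq8b}--\eqref{eq8c} that enter $C_\xi$ (and the $A_\xi,B_\xi$), the claimed vanishing reduces to a polynomial identity among theta constants evaluated at $\frac{\om}{2}$, which is then checked directly with the identities collected in the Appendix, exactly as the Taylor coefficients in the proof of Lemma \ref{X1} were verified symbolically. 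Once $\Res_{z=z_1}\tilde f=0$, the residue theorem for elliptic functions gives $\Res_{z=z_0}\tilde f=0$ as well, so $\tilde f(\cdot,\om)$ has no poles in $\C$ and Lemma \ref{residue} follows.

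The main obstacle is this last identity. The residue does not split into four independently vanishing pieces: the constants $C_\xi$ and the prefactors $\Xi_\xi$ mix the four end-points through \eqref{eq8b}--\eqref{eq8c}, and the theta-constant combinations produced by \eqref{main12} only collapse after the nonlinear addition and duplication identities of the Appendix are applied. Arranging this reduction so that the cancellation is transparent rather than an opaque symbolic check is where the real work lies; a possible simplification is to first rewrite $\mathcal N(z)$ as a single ratio of theta functions with the help of the formula \eqref{m36} for $[\bold M_1]_{12}$, which already packages several of the needed identities.
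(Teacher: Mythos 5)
Your proposal is correct and follows essentially the same route as the paper: identify the two simple candidate poles $z=\pm\frac{\om}{2}+\frac{\pi\tau}{2}$ coming from the $\th_4\left(z\mp\frac{\om}{2}\right)$ denominators, use the double periodicity of Lemma \ref{doubleper} to conclude the two residues are negatives of each other, and then kill one residue by an explicit theta-constant computation reduced via the identities of the Appendix. The paper carries out exactly this residue computation (at $z_1=\frac{\pi\tau}{2}+\frac{\om}{2}$, via the quantities $R_{jk}$ and the combinations $24A+17B+\cdots$), so the remaining "real work" you flag is precisely the symbolic verification the paper performs.
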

From these two lemmas, it follows immediately that $\tilde{f}(z,\om)$ is constant in $z$, as it is a doubly periodic entire function of $z$, and it thus follows that $f(x,\om)$ is constant in $x$.

To prove Lemma \ref{doubleper}, we will check that $\tilde{f}(z+\pi\tau,\om)=\tilde{f}(z,\om)$.
By (\ref{zn11}), (\ref{zn7a}),
\begin{equation}\label{aver1}
c_1(n)\equiv f(n\om,\om)=\tilde{f}(z,\om)=Y_1+Y_2+Y_3+Y_4,
\end{equation}
where 
\begin{equation}\label{aver2}
\begin{aligned}
Y_1&=\frac{X_\al \th_4(n\om)}{iA\th_4(n\om+\om)}= 
\frac{\th_1(\om)\th_3^2(0)\th_4^2(z)}{48\pi \th'_1(0)\th_3^2(\frac{\om}{2})\th_4(z-\frac{\om}{2})\th_4(z+\frac{\om}{2})}
\left(C_\al+12\pi \xi_\al+\frac{\pi^2\eta_\al}{2(\be'-\al)}\right), \\
Y_2&=\frac{X_{\al'} \th_4(n\om)}{iA\th_4(n\om+\om)}= 
\frac{\th_1(\om)\th_3^2(0)\th_1^2(z)}{48\pi \th'_1(0)\th_2^2(\frac{\om}{2})\th_4(z-\frac{\om}{2})\th_4(z+\frac{\om}{2})}
\left(C_{\al'}+12\pi \xi_{\al'}+\frac{\pi^2\eta_{\al'}}{2(\be-\al')}\right), \\
Y_3&=\frac{X_{\be'} \th_4(n\om)}{iA\th_4(n\om+\om)}= 
\frac{\th_1(\om)\th_3^2(0)\th_2^2(z)}{48\pi \th'_1(0)\th_1^2(\frac{\om}{2})\th_4(z-\frac{\om}{2})\th_4(z+\frac{\om}{2})}
\left(C_{\be'}+12\pi \xi_{\be'}+\frac{\pi^2\eta_{\be'}}{2(\be'-\al)}\right), \\
Y_4&=\frac{X_{\be} \th_4(n\om)}{iA\th_4(n\om+\om)}= 
\frac{\th_1(\om)\th_3^2(0)\th_3^2(z)}{48\pi \th'_1(0)\th_4^2(\frac{\om}{2})\th_4(z-\frac{\om}{2})\th_4(z+\frac{\om}{2})}
\left(C_{\be}+12\pi \xi_{\be}+\frac{\pi^2\eta_{\be}}{2(\be-\al')}\right).
\end{aligned}
\end{equation}
Observe that $Y_1,Y_2,Y_3,Y_4$  can be written in the form
\begin{equation}\label{aver3a}
\begin{aligned}
Y_1&=Q_{11}h_{11}(z)+Q_{12}h_{12}(z)+Q_{13}h_{13}(z)+Q_{14}h_{14}(z), \\
Y_2&=Q_{21}h_{21}(z)+Q_{22}h_{22}(z)+Q_{23}h_{23}(z)+Q_{24}h_{24}(z), \\
Y_3&=Q_{31}h_{31}(z)+Q_{32}h_{32}(z)+Q_{33}h_{33}(z)+Q_{34}h_{34}(z), \\
Y_4&=Q_{41}h_{41}(z)+Q_{42}h_{42}(z)+Q_{43}h_{43}(z)+Q_{44}h_{44}(z), \\
\end{aligned}
\end{equation}
where 
\begin{equation}\label{aver3e}
\begin{aligned}
h_{11}(z)&=\frac{\th_4^2(z)}{\th_4(z-\frac{\om}{2})\th_4(z+\frac{\om}{2})}, \
h_{12}(z)=\frac{\th'_4(z)\th_4(z)}{\th_4(z-\frac{\om}{2})\th_4(z+\frac{\om}{2})} , \
h_{13}(z)=\frac{\th'_4(z)^2}{\th_4(z-\frac{\om}{2})\th_4(z+\frac{\om}{2})}, \\
h_{14}(z)&=\frac{\th''_4(z)\th_4(z)}{\th_4(z-\frac{\om}{2})\th_4(z+\frac{\om}{2})} , \
h_{21}(z)=\frac{\th_1^2(z)}{\th_4(z-\frac{\om}{2})\th_4(z+\frac{\om}{2})}, \
h_{22}(z)=\frac{\th'_1(z)\th_1(z)}{\th_4(z-\frac{\om}{2})\th_4(z+\frac{\om}{2})} , \\
h_{23}(z)&=\frac{\th'_1(z)^2}{\th_4(z-\frac{\om}{2})\th_4(z+\frac{\om}{2})} , \
h_{24}(z)=\frac{\th''_1(z)\th_1(z)}{\th_4(z-\frac{\om}{2})\th_4(z+\frac{\om}{2})} , \
h_{31}(z)=\frac{\th_2^2(z)}{\th_4(z-\frac{\om}{2})\th_4(z+\frac{\om}{2})}, \\
h_{32}(z)&=\frac{\th'_2(z)\th_2(z)}{\th_4(z-\frac{\om}{2})\th_4(z+\frac{\om}{2})} , \
h_{33}(z)=\frac{\th'_2(z)^2}{\th_4(z-\frac{\om}{2})\th_4(z+\frac{\om}{2})} , \
h_{34}(z)=\frac{\th''_2(z)\th_2(z)}{\th_4(z-\frac{\om}{2})\th_4(z+\frac{\om}{2})} , \\
h_{41}(z)&=\frac{\th_3^2(z)}{\th_4(z-\frac{\om}{2})\th_4(z+\frac{\om}{2})} , \
h_{42}(z)=\frac{\th'_3(z)\th_3(z)}{\th_4(z-\frac{\om}{2})\th_4(z+\frac{\om}{2})} , \
h_{43}(z)=\frac{\th'_3(z)^2}{\th_4(z-\frac{\om}{2})\th_4(z+\frac{\om}{2})} , \\
h_{44}(z)&=\frac{\th''_3(z)\th_3(z)}{\th_4(z-\frac{\om}{2})\th_4(z+\frac{\om}{2})},
\end{aligned}
\end{equation}
and the numbers $Q_{ij}$ do not depend on $z$. More specifically,
\begin{equation}\label{aver4}
\begin{aligned}
Q_{11}&=
\frac{\th_1(\om)\th_3^2(0)}{48\pi \th'_1(0)\th_3^2(\frac{\om}{2})}
\left[C_\al+12\pi \frac{\th_3'(\frac{\om}{2})}{\th_3(\frac{\om}{2})}+\frac{\pi^2}{2(\be'-\al)}
\left(-5\frac{\th_3^{''}(\frac{\om}{2})}{\th_3(\frac{\om}{2})}
+17\left(\frac{\th_3^{'}(\frac{\om}{2})}{\th_3(\frac{\om}{2})}\right)^2
\right)\right],\\
Q_{12}&=-\frac{\th_1(\om)\th_3^2(0)}{4 \th'_1(0)\th_3^2(\frac{\om}{2})}
-\frac{\pi\th_1(\om)\th_3^2(0)\th_3^{'}(\frac{\om}{2})}{4(\be'-\al)\th'_1(0)\th_3^3(\frac{\om}{2})}\,,
\qquad Q_{13}=\frac{7\pi\th_1(\om)\th_3^2(0)}{96(\be'-\al) \th'_1(0)\th_3^2(\frac{\om}{2})}\,,\\
Q_{14}&=\frac{5\pi\th_1(\om)\th_3^2(0)}{96(\be'-\al) \th'_1(0)\th_3^2(\frac{\om}{2})}\,,
\end{aligned}
\end{equation}
and similar formulae hold for other $Q_{ij}$. 
In particular, notice that 
\begin{equation}\label{Q1aa}
\begin{aligned}
Q_{13}+Q_{14}&=\frac{\pi\th_1(\om)\th_3^2(0)}{8(\be'-\al) \th'_1(0)\th_3^2(\frac{\om}{2})}=
\frac{\th_1(\om)\th_1(\frac{\om}{2})}{8 \th'_1(0)\th_2(\frac{\om}{2})\th_3(\frac{\om}{2})\th_4(\frac{\om}{2})}\,,\\
Q_{23}+Q_{24}&=-\frac{\pi\th_1(\om)\th_3^2(0)}{8(\be-\al') \th'_1(0)\th_2^2(\frac{\om}{2})}
=-\frac{\th_1(\om)\th_4(\frac{\om}{2})}{8 \th'_1(0)\th_1(\frac{\om}{2})\th_2(\frac{\om}{2})\th_3(\frac{\om}{2})}\,,\\
Q_{33}+Q_{34}&=-\frac{\pi\th_1(\om)\th_3^2(0)}{8(\be'-\al) \th'_1(0)\th_1^2(\frac{\om}{2})}
=-\frac{\th_1(\om)\th_3(\frac{\om}{2})}{8 \th'_1(0)\th_1(\frac{\om}{2})\th_2(\frac{\om}{2})\th_4(\frac{\om}{2})}\,,\\
Q_{43}+Q_{44}&=\frac{\pi\th_1(\om)\th_3^2(0)}{8(\be-\al') \th'_1(0)\th_4^2(\frac{\om}{2})}
=\frac{\th_1(\om)\th_2(\frac{\om}{2})}{8 \th'_1(0)\th_1(\frac{\om}{2})\th_3(\frac{\om}{2})\th_4(\frac{\om}{2})}\,.
\end{aligned}
\end{equation}
Observe that all $Q_{ij}=Q_{ij}(\om)$ are periodic functions
of $\om$ of period $2\pi$. 
It follows from equations (\ref{main9}) that the functions
\begin{equation}\label{Q3}
h_{j1}(z),\qquad j=1,2,3,4,
\end{equation}
are doubly periodic,
\begin{equation}\label{Q4}
h_{j1}(z+\pi)=h_{j1}(z)\,,\qquad h_{j1}(z+\pi\tau)=h_{j1}(z),
\end{equation}
while the functions,
\begin{equation}\label{Q5}
\begin{aligned}
h_{j2}(z), \ h_{j3}(z), \ h_{j4}(z), \quad j=1,2,3,4
\end{aligned}
\end{equation}
satisfy the equations,
\begin{equation}\label{Q6}
\begin{aligned}
&h_{jk}(z+\pi)=h_{jk}(z),\qquad k=2,3,4;\\
&h_{j2}(z+\pi\tau)=h_{j2}(z)-2ih_{j1}(z),\qquad h_{j3}(z+\pi\tau)=h_{j3}(z)-4ih_{j2}(z)-4h_{j1}(z),\\
&h_4(z+\pi\tau)=h_{j4}(z)-4ih_{j2}(z)-4h_{j1}(z).
\end{aligned}
\end{equation}
This implies that the functions $Y_j=Y_j(z)$ for $j=1,2,3,4$, satisfy the equations
\begin{equation}\label{Q6a}
\begin{aligned}
Y_j(z+\pi)&=Y_j(z), \\
Y_j(z+\pi\tau)&=Y_j(z)+(-2iQ_{j2}-4Q_{j3}-4Q_{j4})h_{j1}(z)\\
&+(-4iQ_{j3}-4iQ_{j4})h_{j2}(z).
\end{aligned}
\end{equation}
The proof of Lemma \ref{doubleper} then follows immediately from (\ref{Q6a}) and the following identities:
\begin{equation}\label{Q8}
(Q_{13}+Q_{14})h_{11}(z)+(Q_{23}+Q_{24})h_{21}(z)+(Q_{33}+Q_{34})h_{31}(z)+(Q_{43}+Q_{44})h_{41}(z) \equiv 0
\end{equation}
\begin{equation}\label{Q9}
(Q_{13}+Q_{14})h_{12}(z)+(Q_{23}+Q_{24})h_{22}(z)+(Q_{33}+Q_{34})h_{32}(z)+(Q_{43}+Q_{44})h_{42}(z) \equiv 0
\end{equation}
\begin{equation}\label{Q7}
Q_{12}h_{11}(z)+Q_{22}h_{21}(z)+Q_{32}h_{31}(z)+Q_{42}h_{41}(z) \equiv 0,
\end{equation}
which are proven below.
Introduce here the notation
\begin{equation}\label{Q7a}
\th_j \equiv \th_j(\frac{\om}{2}), \ \th_j' \equiv \th_j'(\frac{\om}{2}), \ \th_j'' \equiv \th_j''(\frac{\om}{2})  \quad \textrm{for} \ j=1,2,3,4.
\end{equation}

The sum in (\ref{Q8}) can be written as
\begin{equation}\label{Q10}
\frac{\th_1(\om)\big[\th_1^2\th_4^2(z)-\th_4^2\th_1^2(z)+\th_2^2\th_3^2(z)-\th_3^2\th_2^2(z)\big]}{8\th_1'(0)\th_1\th_2\th_3\th_4\th_4(z-\frac{\om}{2})\th_4(z+\frac{\om}{2})},
\end{equation}
which is zero by the Jacobi identity (\ref{Q11}).

The sum in (\ref{Q9}) can be written as
\begin{equation}\label{Q12}
\frac{\th_1(\om)\big[\th_1^2\th_4'(z)\th_4(z)-\th_4^2\th_1'(z)\th_1(z)+\th_2^2\th_3'(z)\th_3(z)-\th_3^2\th_2'(z)\th_2(z)\big]}{8\th_1'(0)\th_1\th_2\th_3\th_4\th_4(z-\frac{\om}{2})\th_4(z+\frac{\om}{2})}.
\end{equation}
Using the identities (\ref{4}), we can write the expression in brackets in the numerator of (\ref{Q12}) as
\begin{equation}\label{Q13}
\begin{aligned}
\frac{\th_1'(z)}{\th_1(z)}&\bigg[\th_1^2\th_4^2(z)-\th_4^2\th_1^2(z)-\th_3^2\th_2^2(z)+\th_2^2\th_3^2(z)\bigg] \\
&+\frac{\th_2(z)\th_3(z)\th_4(z)}{\th_1(z)}\bigg[\th_3^2\th_2^2(0)-\th_1^2\th_4^2(0)-\th_2^2\th_3^2(0)\bigg].
\end{aligned}
\end{equation}
The first term in (\ref{Q13}) vanishes by (\ref{Q11}) and the second term vanishes by (\ref{evf6a}).  Thus (\ref{Q9}) is proven.

Finally, we can expand the sum in (\ref{Q7}) and make the substitutions from identities (\ref{4}), to obtain
\begin{equation}\label{Q14}
\begin{aligned}
\frac{\th_1(w)}{4\th_1'(0)\th_4(z-\frac{\om}{2})\th_4(z+\frac{\om}{2})}\left[\frac{\th_1'\big(\th_4^2\th_1^2(z)-\th_1^2\th_4^2(z)-\th_2^2\th_3^2(z)+\th_3^2\th_2^2(z)\big)}{\th_1^2\th_2\th_3\th_4} \right.\\
\left. +\frac{\th_3^2(z)}{\th_4^2}\left(\th_3^2(0)+\frac{\th_2^2\th_4^2(0)}{\th_1^2}\right)+\frac{\th_1^2(z)}{\th_2^2}\left(\th_3^2(0)-\frac{\th_4^2\th_2^2(0)}{\th_1^2}\right)-\frac{\th_3^2(0)}{\th_1^2}\th_2^2(z)\right].
\end{aligned}
\end{equation}
Once again, the first term vanishes by (\ref{Q11}).  If we substitute the identity (\ref{evf6b})
in the second and third terms, (\ref{Q14}) becomes simply
\begin{equation}\label{Q16}
\frac{\th_1(w)}{4\th_1'(0)\th_1\th_4(z-\frac{\om}{2})\th_4(z+\frac{\om}{2})}\bigg[\th_3^2(z)\th_2^2(0)-\th_2^2(z)\th_3^2(0)-\th_1^2(z)\th_4^2(0)\bigg],
\end{equation}
which is zero by (\ref{evf6a}).  This proves (\ref{Q7}) and thus Lemma \ref{doubleper}.  

We now turn to the proof of Lemma \ref{residue}.  Notice that in the fundamental rectangle
\begin{equation}\label{evf12a}
\mathcal R
=\{z\in\C:\; -\frac{\pi}{2}\le \Re z\le \frac{\pi}{2}\,,\;0\le\Im z\le \pi\tau|\},
\end{equation}
the function $\tilde{f}(z,\om)$ has the two simple poles,
\begin{equation}\label{evf13}
z_{1,2}=\frac{\pi\tau}{2}\pm \frac{\om}{2}\,.
\end{equation}
Because of the double-periodicity of $\tilde{f}$, we must have 
\begin{equation}\label{Q17}
\underset{z=z_1}{\Res} \tilde{f}(z,\om)=-\underset{z=z_2}{\Res} \tilde{f}(z,\om).
\end{equation}
Denote
\begin{equation}\label{evf14}
R_{ij}(\om)=\underset{z=z_1}{\Res}h_{ij}(z,\om).
\end{equation}
Then we have that
\begin{equation}\label{Q19}
\begin{aligned}
R_{11}&=\frac{\th_1^2}{\th_1'(0)\th_1(\om)}, \ R_{12}=\frac{(\th_1'-i\th_1)\th_1}{\th_1'(0)\th_1(\om)}, \ R_{13}=\frac{(\th_1'-i\th_1)^2}{\th_1'(0)\th_1(\om)}, \ R_{14}=\frac{(\th_1''-2i\th_1'-\th_1)\th_1}{\th_1'(0)\th_1(\om)},\\
R_{21}&=\frac{\th_4^2}{\th_1'(0)\th_1(\om)}, \ R_{22}=\frac{(\th_4'-i\th_4)\th_4}{\th_1'(0)\th_1(\om)}, \ R_{23}=\frac{(\th_4'-i\th_4)^2}{\th_1'(0)\th_1(\om)}, \ R_{24}=\frac{(\th_4''-2i\th_4'-\th_4)\th_4}{\th_1'(0)\th_1(\om)}, \\
R_{31}&=\frac{-\th_3^2}{\th_1'(0)\th_1(\om)}, \, R_{32}=\frac{-(\th_3'-i\th_3)\th_3}{\th_1'(0)\th_1(\om)}, \, R_{33}=\frac{-(\th_3'-i\th_3)^2}{\th_1'(0)\th_1(\om)}, \, R_{34}=\frac{-(\th_3''-2i\th_3'-\th_3)\th_3}{\th_1'(0)\th_1(\om)}, \\
R_{41}&=\frac{-\th_2^2}{\th_1'(0)\th_1(\om)}, \, R_{42}=\frac{-(\th_2'-i\th_2)\th_2}{\th_1'(0)\th_1(\om)}, \, R_{43}=\frac{-(\th_2'-i\th_2)^2}{\th_1'(0)\th_1(\om)}, \, R_{44}=\frac{-(\th_2''-2i\th_2'-\th_2)\th_2}{\th_1'(0)\th_1(\om)},\\
\end{aligned}
\end{equation}
and 
\begin{equation}\label{Q20}
\underset{z=z_1}{\Res} \tilde{f}(z,\om)=\sum_{j,k=1}^4 Q_{jk}R_{jk}.
\end{equation}
A priori, the sum in (\ref{Q20}) is quite complicated, so we evaluate first the imaginary part .  Multiplying out (\ref{Q20}) and again making the substitutions from (\ref{4}), we get
\begin{equation}\label{Q21}
\Im \underset{z=z_1}{\Res} \tilde{f}(z,\om)=\frac{(\th_2^4-\th_4^4)\big(\th_1^2\th_3^2(0)+\th_2^2\th_4^2(0)-\th_4^2\th_2^2(0)\big)}{4\th_1'(0)^2\th_1^2\th_2^2\th_4^2},
\end{equation}
which is zero by (\ref{evf6b}).

Substituting the identities (\ref{Q23}), along with (\ref{4}), into the sum (\ref{Q20}) gives
\begin{equation}\label{Q24}
\underset{z=z_1}{\Res} \tilde{f}(z,\om)=\frac{24A+17B+12C+10D+7E+5F}{96\th_1'(0)^2\th_1^2\th_2^3\th_3\th_4^3},
\end{equation}
where 
\begin{equation}\label{Q25}
\begin{aligned}
A&=\th_3^2\big(\th_2^4+\th_4^4\big)\bigg[\th_2^2\th_4^2\th_2^2(0)\th_4^2(0)-\th_1^2\th_3^2(0)\left(\th_2^2\th_4^2(0)-\th_4^2\th_2^2(0)\right)\bigg], \\
B&=-\th_3^2\left(\th_2^8\th_4^4(0)+\th_4^8\th_2^4(0)\right), \\
C&=\th_1^2\th_2^2\th_4^2\bigg(\th_2^4+\th_4^4-\th_1^4-\th_3^4\bigg), \\
D&=-\th_2^4\th_3^2\th_4^4\th_3^4(0), \\
E&=-\th_3^2\bigg[\th_1^4\th_3^4(0)\left(\th_2^4+\th_4^4\right)+\th_2^4\th_4^4\big(\th_2^4(0)+\th_4^4(0)\big)\bigg], \\
F&=\th_2^2\th_4^2\th_3^2(0)\bigg[\th_2^2\th_4^2(0)\left(\th_3^4-\th_1^4-\th_4^4+\th_2^4\right)+\th_4^2\th_2^2(0)\left(\th_4^4-\th_2^4+\th_3^4-\th_1^4\right)\bigg].
\end{aligned}
\end{equation}
Note that none of these terms involve derivatives of theta functions.  We immediately have $C=0$ by the identity (\ref{Q26}).  We can also use this identity to write
\begin{equation}\label{Q27}
F=2\th_2^2\th_4^2\th_3^2(0)\bigg[\th_2^2\th_4^2(0)\left(\th_3^4-\th_4^4\right)+\th_4^2\th_2^2(0)\left(\th_3^4-\th_2^4\right)\bigg],
\end{equation}
and (\ref{evf6b}) to write
\begin{equation}\label{Q28}
A=\th_3^2\big(\th_2^4+\th_4^4\big)\bigg[\th_2^2\th_4^2\th_2^2(0)\th_4^2(0)+\th_1^4\th_3^2(0)\bigg].
\end{equation}
We now combine the terms $A,B$, and $E$ to obtain
\begin{equation}\label{Q29}
\begin{aligned}
24A+17B+7E=&24\th_2^2\th_3^2\th_4^2\big(\th_4^2\th_2^2(0)-\th_2^2\th_4^2(0)\big)\big(\th_2^2\th_2^2(0)-\th_4^2\th_4^2(0)\big) \\
&\quad -21\th_2^4\th_3^2\th_4^4\th_3^4(0)+7\th_3^6\big(\th_2^4+\th_4^4\big)\big(\th_2^4(0)+\th_4^4(0)\big) \\
&\quad -7\th_3^2\big(\th_4^8\th_4^4(0)+\th_2^8\th_2^4(0)\big).
\end{aligned}
\end{equation}
By (\ref{evf6b}) and (\ref{evf6d})
we can write this sum as
\begin{equation}\label{Q31}
\begin{aligned}
24A+17B+7E&=24\th_1^2\th_2^2\th_3^2\th_4^2\th_3^2(0)\big(\th_2^2\th_2^2(0)-\th_4^2\th_4^2(0)\big)-21\th_2^4\th_3^2\th_4^4\th_3^4(0) \\
&\qquad +7\th_3^6\th_3^4(0)\big(\th_2^4+\th_4^4\big)-7\th_3^2\big(\th_4^8\th_4^4(0)+\th_2^8\th_2^4(0)\big).
\end{aligned}
\end{equation}
We now combine all terms and use (\ref{evf6b}) and (\ref{evf6d})
to write all terms solely in terms of $\th_2,\th_4$, and factors which are constant with respect to $\om$, yielding
\begin{equation}\label{Q32}
\begin{aligned}
24A+17B+10D+7E+5F&=\left(41\frac{\th_2^2\th_4^2\th_4^2(0)}{\th_3^2(0)}+24\frac{\th_4^4\th_4^4(0)}{\th_2^2(0)\th_3^2(0)}+17\frac{\th_2^4\th_2^4(0)}{\th_3^2(0)}\right) \\
&\qquad \times \bigg(\th_2^4(0)+\th_4^4(0)-\th_3^4(0)\bigg),
\end{aligned}
\end{equation}
which is zero by (\ref{evf6d}).  Lemma \ref{residue} is thus proven, and it follows that $\tilde{f}(z,\om)$ is constant in $z$.  To evaluate the constant, we can take $z=0$.

\section{Evaluation of $\tilde{f}(0,\om)$} \label{eval_f}
We will evaluate $\tilde{f}(z,\om)$ at $z=0$.  Notice that many of the functions $h_{jk}(z)$ vanish at $z=0$.  In fact we have
\begin{equation}\label{evf1}
\tilde{f}(0,\om)=Q_{11}h_{11}(0)+Q_{14}h_{14}(0)+Q_{23}h_{23}(0)+Q_{31}h_{31}(0)+Q_{34}h_{34}(0)+Q_{41}h_{41}(0)+Q_{44}h_{44}(0).
\end{equation}
The identities (\ref{evf2}) and (\ref{evf3}) allow us eliminate all derivatives of theta functions from the sum (\ref{evf1}) except $\th_2', \th_2''$.  Making these substitutions and simplifying, (\ref{evf1}) becomes
\begin{equation}\label{evf4}
\begin{aligned}
\tilde{f}(0,\om)=&\frac{\th_1(\om)\big(\th_1^2\th_4^2(0)+\th_2^2\th_3^2(0)-\th_3^2\th_2^2(0)\big)}{96\th_1'(0)}\left(\frac{5\th_2''(0)}{\th_1\th_2\th_3\th_4^3}+\frac{17(\th_2')^2}{\th_1\th_2^3\th_3\th_4^3}-\frac{5\th_2''}{\th_1\th_2^2\th_3\th_4^3}\right. \\
 &+ \left. \frac{24\th_2'\big(\th_3^2\th_2^2(0)+\th_1^2\th_4^2(0)\big)}{\th_1^2\th_2^3\th_3^2\th_4^2}\right)+\frac{\th_1(\om)(24A+17B+7C+5D+3E+2F)}{96\th_1'(0)\th_1^3\th_2^3\th_3^3\th_4^3\th_2^2(0)}
\end{aligned}
\end{equation}
where
\begin{equation}\label{evf5}
\begin{aligned}
A&=\th_2^2\th_4^2\th_2^2(0)\th_3^2(0)\big(\th_1^4\th_4^4(0)+\th_3^4\th_2^4(0)\big), \\ 
B&=\th_4^2\th_2^2(0)\big(\th_1^2\th_4^2(0)-\th_3^2\th_2^2(0)\big)\big((\th_3^2\th_2^2(0)+ \th_1^2\th_4^2(0))^2-\th_1^2\th_3^2\th_2^2(0)\th_4^2(0)\big), \\
C&=\th_2^4\th_4^2\th_2^2(0)\th_3^4(0)\big(\th_1^2\th_4^2(0)-\th_3^2\th_2^2(0)\big), \\
D&=\th_1^2\th_3^2\bigg[\th_4^2\th_2^4(0)\th_4^2(0)\big(\th_3^2\th_2^2(0)-\th_1^2\th_4^2(0)\big)+\th_3^2(0)\bigg(\th_2^2(0)\big(\th_3^4\th_2^4(0)-\th_1^4\th_4^4(0)\big) \\
& \qquad +\th_2^2\th_1^2(0)\th_4^2(0)\big(\th_1^2(\th_4^2(0)+\th_2^2\th_3^2(0)\big)\bigg)\bigg], \\ 
E&=\th_1^2\th_2^2\th_3^2\th_2^2(0)\th_3^2(0)\bigg[\th_4^2(0)\big(\th_1^2\th_3^2(0)-\th_4^2\th_2^2(0)\big)-\th_2^2\th_2^4(0)\bigg], \\
F&=\th_1^2\th_2^2\th_3^2\th_2^2(0)\bigg[4\th_2^2\th_3^2(0)\th_4^4(0)-\th_2^2(0)\big(\th_1^2\th_2^2(0)\th_4^2(0)+\th_3^2\th_4^4(0)+\th_3^2\th_3^4(0)\big)\bigg].
\end{aligned}
\end{equation}
The identity (\ref{evf6a}) implies that all terms in (\ref{evf4}) involving derivatives of theta functions in the numerator vanish.  Additionally, (\ref{evf6a})-(\ref{evf6d}) allow us to simplify the numbers $B, C, D$ and $E$.  Namely, we have
\begin{equation}\label{evf7}
\begin{aligned}
B&=-\th_2^2\th_4^2\th_2^2(0)\th_3^2(0)\big((\th_3^2\th_2^2(0)+ \th_1^2\th_4^2(0))^2-\th_1^2\th_3^2\th_2^2(0)\th_4^2(0)\big),  \\
C&=-\th_2^6\th_4^2\th_2^2(0)\th_3^6(0), \ D=\th_1^2\th_2^2\th_3^2\th_2^2(0)\th_3^2(0)\bigg(\th_1^2\th_3^2(0)\th_4^2(0)+\th_3^2\th_2^2(0)\th_3^2(0)+\th_4^2\th_2^2(0)\th_4^2(0)\bigg), \\
E&=-\th_1^2\th_2^4\th_3^2\th_2^2(0)\th_3^6(0).
\end{aligned}
\end{equation}
Combining these terms gives us 
\begin{equation}\label{evf8}
\begin{aligned}
&24A+17B+7C+5D+3E+2F=\\
&\th_2^2\th_2^2(0)\bigg[-12\th_1^2\th_3^2\th_4^2\th_2^2(0)\th_3^2(0)\th_4^2(0)+7\th_4^2\th_3^2(0)\big(\th_3^4\th_2^4(0)+\th_1^4\th_4^4(0)-\th_2^4\th_3^4(0)\big) \\
&+8\th_1^2\th_2^2\th_3^2\th_3^2(0)\th_4^4(0)+5\th_1^4\th_3^2\th_3^4(0)\th_4^2(0)+3\th_1^2\th_3^2\th_3^4(0)\big(\th_3^2\th_2^2(0)-\th_2^2\th_3^2(0)\big) \\
&-2\th_1^2\th_3^2\th_2^2(0)\th_4^2(0)\big(\th_1^2\th_2^2(0)+\th_3^2\th_4^2(0)\big)\bigg].
\end{aligned}
\end{equation}
Again using (\ref{evf6a})-(\ref{evf6d}), this expression simplifies to
\begin{equation}\label{evf9}
24A+17B+7C+5D+3E+2F=8\th_1^2\th_2^2\th_3^2\th_4^2\th_2^2(0)\th_3^2(0)\th_4^2(0).
\end{equation}
Inserting this into (\ref{evf4}), we get
\begin{equation}\label{evf10}
\tilde{f}(0,\om) \equiv \tilde{f}(z,\om) \equiv f(n\om,\om)=\frac{\th_1(\om)\th_2^2(0)\th_3^2(0)\th_4^2(0)}{12\th_1'(0)\th_1\th_2\th_3\th_4}=\frac{1}{6}
\end{equation}
by (\ref{main13a}) and (\ref{2}).
It then follows that
\begin{equation}\label{evf11}
f_0(n\om,\om)=f(n\om,\om)-\frac{1}{6}=0.
\end{equation}
This proves (\ref{hn7}) and therefore Proposition \ref{h_n2}.

\section{Large $n$ asymptotics of $Z_n$}

By substituting (\ref{hn8}) into (\ref{dph15}) we obtain that
\begin{equation}\label{Z_1}
\begin{aligned}
\frac{\tau_n}{\prod_{k=0}^{n-1} (k!)^2}&=2^{n^2}\prod_{k=0}^{n-1} \frac{h_k}{(k!)^2}\\
&=2^{n^2}h_0\prod_{k=1}^{n-1}\left[G^{2k+1}\,\frac{\th_4((k+1)\om)}{\th_4(k\om)}
\left(1+O(k^{-2})\right)\right]\\
& =C\th_4(n\om)(2G)^{n^2}\bigg(1+O(n^{-1})\bigg),
\end{aligned}
\end{equation}
where $C>0$ does not depend on $n$. Thus, by (\ref{pf7}),
\begin{equation}\label{Z_2}
Z_n=\frac{[\sinh(\ga-t)\sinh(\ga+t)]^{n^2}\tau_n}{\left(
\prod_{k=0}^{n-1}k!\right)^2}=C\th_4(n\om)F^{n^2}\bigg(1+O(n^{-1})\bigg),
\end{equation}
where
\begin{equation}\label{Z_3}
F=2G\sinh(\ga-t)\sinh(\ga+t)=\frac{\pi \sinh(\ga-t)\sinh(\ga+t)\th'_1(0)}{2\ga\th_1(\om)}\,.
\end{equation}
 Theorem \ref{thmain1} is proved.

\medskip

\appendix

\section{Proof of formula (\ref{dph7}) for $\tau_n$} \label{proof_tau_n}
  
We have 
\begin{equation} \label{ap1}
\tau_n=\det\begin{pmatrix}\phi^{(i+j-2)}(t)\end{pmatrix}_{1 \leq i,j \leq n}
\end{equation}    
where
\begin{equation}\label{ap2}
\begin{aligned}
\phi(t)&=\frac{\sinh(2\ga)}{\sinh(\ga+t)\sinh(\ga-t)}
=\frac{2(1-e^{4\ga})}{(1-e^{-2(\ga-t)})(1-e^{-2(\ga+t)})}\\
&=\frac{2}{1-e^{-2(\ga-t)}}+\frac{2}{1-e^{-2(\ga+t)}}-2
=2\sum_{l=0}^\infty e^{-2l(\ga-t)}+2\sum_{l=0}^\infty e^{-2l(\ga+t)}-2\\
&=2\sum_{l=0}^\infty e^{-2l(\ga-t)}+2\sum_{l=1}^\infty e^{-2l(\ga+t)}
=2\sum_{l=-\infty}^\infty e^{2tl}e^{-2\ga |l|}.
\end{aligned}
\end{equation}   
Thus, 
\begin{equation} \label{ap5}
\phi^{(k)}(t)=2\sum_{l=-\infty}^\infty (2l)^k e^{2tl}e^{-2\ga |l|}.
\end{equation} 
From equation (\ref{ap5}) and multi-linearity of the determinant function, we have that $\tau_n$ is equal to
\begin{equation} \label{ap6}
\begin{aligned}
2^n\sum_{l_1,l_2,\dots,l_n=-\infty}^{\infty}
&\det
\begin{pmatrix}
e^{2tl_1-2\ga |l_1|} & (2l_1) e^{2tl_1-2\ga |l_1|} & \cdots & (2l_1)^{n-1} e^{2tl_1-2\ga |l_1|} \\
(2l_2) e^{2tl_2-2\ga |l_2|} & (2l_2)^2 e^{2tl_2-2\ga |l_2|} & \cdots & (2l_2)^{n} e^{2tl_2-2\ga |l_2|} \\
(2l_3)^2 e^{2tl_3-2\ga |l_3|} & (2l_3)^3 e^{2tl_3-2\ga |l_3|} & \cdots & (2l_3)^{n+1} e^{2tl_3-2\ga |l_3|} \\
\vdots & \vdots & \ddots & \vdots \\
(2l_n)^{n-1} e^{2tl_n-2\ga |l_n|} & (2l_n)^{n} e^{2tl_n-2\ga |l_n|} & \cdots & (2l_n)^{2n-2} e^{2tl_n-2\ga |l_n|} \\
\end{pmatrix}\\
&=2^n\sum_{l_1,l_2,\dots,l_n=-\infty}^{\infty} 
\Delta (2l)\prod_{i=1}^{n}e^{2tl_1-2\ga |l_i|} \prod_{i=1}^n(2l_i)^{i-1}\\
&=2^{n^2}\sum_{l_1,l_2,\dots,l_n=-\infty}^{\infty} \prod_{i=1}^{n}
\Delta (l)e^{2tl_1-2\ga |l_i|} \prod_{i=1}^n(l_i)^{i-1},
\end{aligned}
\end{equation}
where $\Delta (l)=\prod_{i<j}(l_i-l_j)$ is the Vandermonde determinant.
Note that, up to sign, this expression for $\tau_n$
is invariant with respect to any permutation of $l_i's$.  So, multiplying by their signs 
and then summing over all permutations, we get
\begin{equation} \label{ap9}
n!\tau_n=2^{n^2}\sum_{l_1,l_2,\dots,l_n=-\infty}^{\infty} \prod_{i=1}^{n}e^{2tl_i-2\ga |l_i|} 
\Delta (l_i)\sum_{\pi\in S_n}(-1)^\pi \prod_{i=1}^{n}(l_i)^{\pi(i)-1},
\end{equation}
thus
\begin{equation} \label{ap10}
\tau_n=\frac{2^{n^2}}{n!}\sum_{l_1,l_2,\dots,l_n=-\infty}^{\infty}  \Delta (l_i)^2\prod_{i=1}^{n}e^{2tl_i-2\ga |l_i|}.
\end{equation}

\section{Derivation of formula (\ref{dph15})} \label{proof_tau_n_2}

Multilinearity of the determinant function, combined with the form of the Vandermonde matrix, allows us 
to replace $\Delta(l)$ with  
\begin{equation} \label{app1}
\det\begin{pmatrix}
1 & 1 & 1 & \cdots & 1 \\
P_{1}(l_{1}) & P_{1}(l_{2}) & P_{1}(l_{3}) & \cdots & P_{1}(l_{n}) \\
P_{2}(l_{1}) & P_{2}(l_{2}) & P_{2}(l_{3}) & \cdots & P_{2}(l_{n}) \\
\vdots&\vdots&\vdots&\vdots&\vdots \\
P_{n-1}(l_{1}) & P_{n-1}(l_{2}) & P_{n-1}(l_{3}) & \cdots & P_{n-1}(l_{n})
\end{pmatrix},
\end{equation}
where $\{P_{j}(x)\}_{j=0}^\infty$ is the system of monic polynomials 
orthogonal  with respect to the weight $w(l)$.
 Then (\ref{dph7}) becomes 
\begin{equation} \label{app2}
\tau_n=\frac{2^{n^2}}{n!}\sum_{l_1,\ldots,l_n=1}^\infty
\left(\sum_{\pi\in S_{n}}(-1)^\pi\prod_{k=1}^{n}P_{\pi(k)-1}(l_{k})\right)^{2}
\prod_{k=1}^{n}w(l_{k}).
\end{equation}
The orthogonality condition ensures that, after summing, only diagonal terms are non-zero, 
so we get 
\begin{equation} \label{app3}
\tau_{n}=\frac{2^{n^2}}{n!}\sum_{l_1,\ldots,l_n=1}^\infty\left(\sum_{\pi\in S_{n}}
\prod_{k=1}^{n}P_{\pi(k)-1}^{2}(l_{k})\right)
\prod_{k=1}^{n}w(l_{k})=2^{n^{2}}\prod_{k=0}^{n-1}h_{k}.
\end{equation}

\section{Proof of (\ref{pmi4})} \label{proof_of_jump}

From, (\ref{pmi3}), (\ref{st1}), (\ref{st2}), and (\ref{g3}), we have that the 
jump $j_{\tilde{Q}}$ on $(\al'-\ep,\al')$ is given by
\begin{equation}\label{j1}
\begin{aligned}
j_{\tilde{Q}}&=e^{\frac{in \pi z}{2\ga}\sigma_3}e^{-n(g_-(z)-\frac{V(z)}{2}-\frac{l}{2})\sigma_3} 
j_S e^{n(g_+(z)-\frac{V(z)}{2}-\frac{l}{2})\sigma_3}e^{\frac{in \pi z}{2\ga}\sigma_3} \\
&=e^{\frac{in \pi z}{2\ga}\sigma_3}e^{-n(g_-(z)-\frac{V(z)}{2}-\frac{l}{2})\sigma_3} 
j_-^{-1} j_T j_+^{-1} e^{n(g_+(z)-\frac{V(z)}{2}-\frac{l}{2})\sigma_3}e^{\frac{in \pi z}{2\ga}\sigma_3} \\
&=e^{\frac{in \pi z}{2\ga}\sigma_3}e^{-n(g_-(z)-\frac{V(z)}{2}-\frac{l}{2})\sigma_3} 
\begin{pmatrix} 
0 & 1 \\ -1 & 0 
\end{pmatrix} 
e^{n(g_+(z)-\frac{V(z)}{2}-\frac{l}{2})\sigma_3}e^{\frac{in \pi z}{2\ga}\sigma_3} \\
&=e^{\frac{in \pi z}{2\ga}\sigma_3}
\begin{pmatrix} 
0 & e^{-n(g_+(z)+g_-(z)-V(z)-l)} \\ -e^{n(g_+(z)+g_-(z)-V(z)-l)} & 0 
\end{pmatrix} 
e^{\frac{in \pi z}{2\ga}\sigma_3} \\
&=\begin{pmatrix} 
0 & 1\\ -1& 0 
\end{pmatrix}. 
\end{aligned}
\end{equation}
From, (\ref{pmi3}), (\ref{st2}), (\ref{ft2}), and (\ref{red5}), we have that 
the jump $j_{\tilde{Q}}$ on $(\al',\al'+\ep)$ is given by
\begin{equation}\label{j2}
\begin{aligned}
j_{\tilde{Q}}&=e^{\frac{in \pi z}{2\ga}\sigma_3}e^{-n(g_-(z)-\frac{V(z)}{2}-\frac{l}{2})\sigma_3} 
j_S e^{n(g_+(z)-\frac{V(z)}{2}-\frac{l}{2})\sigma_3}e^{\frac{in \pi z}{2\ga}\sigma_3} \\
&=e^{\frac{in \pi z}{2\ga}\sigma_3}e^{-n(g_-(z)-\frac{V(z)}{2}-\frac{l}{2})\sigma_3} 
\begin{pmatrix} 
\frac{\ga}{n\pi i}e^{\frac{in\pi z}{2\ga}} & 0 \\ 0 &  \frac{n\pi i}{\ga}e^{-\frac{in\pi z}{2\ga}}
\end{pmatrix}^{-1} 
j_T 
\begin{pmatrix} 
-\frac{\ga}{n\pi i}e^{-\frac{in\pi z}{2\ga}} & 0 \\ 0 &  -\frac{n\pi i}{\ga}e^{\frac{in\pi z}{2\ga}}
\end{pmatrix} \\
& \quad \times e^{n(g_+(z)-\frac{V(z)}{2}-\frac{l}{2})\sigma_3}e^{\frac{in \pi z}{2\ga}\sigma_3} \\
&=e^{-n(g_-(z)-\frac{V(z)}{2}-\frac{l}{2})\sigma_3} 
\begin{pmatrix} 
\frac{n\pi i}{\ga}& 0 \\ 0 &  \frac{\ga}{n\pi i}
\end{pmatrix} 
e^{n(g_-(z)-\frac{l}{2})\sigma_3} j_R e^{-n(g_+(z)-\frac{l}{2})\sigma_3}
\begin{pmatrix} 
-\frac{\ga}{n\pi i} & 0 \\ 0 &  -\frac{n\pi i}{\ga} 
\end{pmatrix} \\ 
&\quad \times e^{n(g_+(z)-\frac{V(z)}{2}-\frac{l}{2})\sigma_3} \\
&=e^{n\frac{V(z)}{2}\sigma_3} 
\begin{pmatrix} 
\frac{n\pi i}{\ga}& 0 \\ 0 &  \frac{\ga}{n\pi i}
\end{pmatrix}  
\begin{pmatrix} 
1 & 0 \\ (\frac{n\pi i}{\ga})^2e^{nV(z)} & 1 
\end{pmatrix} 
\begin{pmatrix} 
-\frac{\ga}{n\pi i} & 0 \\ 0 &  -\frac{n\pi i}{\ga} 
\end{pmatrix}  
e^{-n\frac{V(z)}{2}\sigma_3} \\
&=e^{n\frac{V(z)}{2}\sigma_3} 
\begin{pmatrix} 
-1 & 0 \\ -e^{nV(z)} & -1 
\end{pmatrix}  
e^{-n\frac{V(z)}{2}\sigma_3} =
\begin{pmatrix} 
-1 & 0 \\ -1 & -1 
\end{pmatrix}.  
\end{aligned}
\end{equation}
From (\ref{pmi3}), (\ref{st2}), (\ref{ft2}), (\ref{red5}), and analytic continuation of 
(\ref{g3}) into a neighborhood of $[\al,\al']$, we have that the jump $j_{\tilde{Q}}$ on 
$(\al',\al'+i\ep)$ is given by
\begin{equation}\label{j3}
\begin{aligned}
j_{\tilde{Q}}&=e^{-n(g_+(z)-\frac{V(z)}{2}-\frac{l}{2})\sigma_3} 
\begin{pmatrix} 
-\frac{n\pi i}{\ga} & 0 \\ 0 &  -\frac{\ga}{n\pi i} 
\end{pmatrix} 
e^{n(g_+(z)-\frac{l}{2})\sigma_3} j_R e^{-n(g_+(z)-\frac{l}{2})\sigma_3} 
j_+(z)^{-1} \\
& \quad \times e^{n(g_+(z)-\frac{V(z)}{2}-\frac{l}{2})\sigma_3} e^{\frac{in \pi z}{2\ga}\sigma_3} \\
&=e^{n\frac{V(z)}{2}\sigma_3} 
\begin{pmatrix} 
-\frac{n\pi i}{\ga} & 0 \\ 0 &  -\frac{\ga}{n\pi i} 
\end{pmatrix}  
j_R e^{-n(g_+(z)-\frac{l}{2})\sigma_3} j_+(z)^{-1} e^{n(g_+(z)-\frac{V(z)}{2}-\frac{l}{2})\sigma_3} 
e^{\frac{in \pi z}{2\ga}\sigma_3}\\
&=e^{n\frac{V(z)}{2}\sigma_3} 
\begin{pmatrix} 
-\frac{n\pi i}{\ga} & 0 \\ 0 &  -\frac{\ga}{n\pi i} 
\end{pmatrix}  
\begin{pmatrix} 
\Pi(z) & \frac{\ga}{n\pi i}e^{-nV(z)}e^{\frac{in\pi z}{2\ga}} \\ 
-\frac{n\pi i}{\ga}e^{nV(z)}e^{\frac{in\pi z}{2\ga}} & -\frac{n\pi i}{\ga}e^{\frac{in\pi z}{2\ga}} 
\end{pmatrix} \\
& \quad \times e^{-n(g_+(z)-\frac{l}{2})\sigma_3} 
\begin{pmatrix} 1 & 0 \\ -e^{-n(g_+(z)-g_-(z))} & 1\end{pmatrix} 
e^{n(g_+(z)-\frac{V(z)}{2}-\frac{l}{2})\sigma_3} e^{\frac{in \pi z}{2\ga}\sigma_3} \\
&=e^{n\frac{V(z)}{2}\sigma_3} 
\begin{pmatrix} -2i\sin(\frac{n\pi z}{2\ga}) & -e^{-nV(z)}e^{\frac{in\pi z}{2\ga}} \\ 
e^{nV(z)}e^{\frac{in\pi z}{2\ga}} & e^{\frac{in\pi z}{2\ga}} 
\end{pmatrix} 
\begin{pmatrix} 
e^{-n\frac{V(z)}{2}} & 0 \\ -e^{n(g_+(z)+g_-(z)-l-\frac{V(z)}{2})} & e^{n\frac{V(z)}{2}}
\end{pmatrix}  
e^{\frac{in \pi z}{2\ga}\sigma_3}\\
&= \begin{pmatrix} 
-2i\sin(\frac{n\pi z}{2\ga})e^{n\frac{V(z)}{2}} & -e^{-n\frac{V(z)}{2}}e^{\frac{in\pi z}{2\ga}} \\ 
e^{n\frac{V(z)}{2}}e^{\frac{in\pi z}{2\ga}} & e^{-n\frac{V(z)}{2}}e^{\frac{in\pi z}{2\ga}} 
\end{pmatrix} 
\begin{pmatrix} 
e^{-n\frac{V(z)}{2}} & 0 \\ -e^{n(g_+(z)+g_-(z)-l-\frac{V(z)}{2})} & e^{n\frac{V(z)}{2}}
\end{pmatrix}  e^{\frac{in \pi z}{2\ga}\sigma_3}\\
&= \begin{pmatrix} 
-2i\sin(\frac{n\pi z}{2\ga})+e^{\frac{in\pi z}{2\ga}}e^{n(g_+(z)+g_-(z)-l-V(z))}  & -e^{\frac{in\pi z}{2\ga}} \\ 
e^{\frac{in\pi z}{2\ga}}-e^{\frac{in\pi z}{2\ga}}e^{n(g_+(z)+g_-(z)-l-V(z))}  & e^{\frac{in\pi z}{2\ga}} 
\end{pmatrix} 
e^{\frac{in \pi z}{2\ga}\sigma_3}\\
&= \begin{pmatrix} 
e^{-\frac{in\pi z}{2\ga}}  & -e^{\frac{in\pi z}{2\ga}} \\ 0 & e^{\frac{in\pi z}{2\ga}} 
\end{pmatrix} 
e^{\frac{in \pi z}{2\ga}\sigma_3}=
 \begin{pmatrix} 1  & -1 \\ 0 & 1 \end{pmatrix}.
\end{aligned}
\end{equation}
Similarly,  we have that the jump $j_{\tilde{Q}}$ on $(\al'-i\ep,\al')$ is given by
\begin{equation}\label{j4}
\begin{aligned}
j_{\tilde{Q}}&=e^{\frac{in \pi z}{2\ga}\sigma_3}e^{-n(g_-(z)-\frac{V(z)}{2}-\frac{l}{2})\sigma_3} 
j_S e^{n(g_-(z)-\frac{V(z)}{2}-\frac{l}{2})\sigma_3}e^{-\frac{in \pi z}{2\ga}\sigma_3} \\
&=e^{\frac{in \pi z}{2\ga}\sigma_3}e^{-n(g_-(z)-\frac{V(z)}{2}-\frac{l}{2})\sigma_3} 
\begin{pmatrix}
\frac{\ga}{n\pi i} e^{\frac{in\pi z}{2\ga}} & 0 \\ 0 & \frac{n\pi i}{\ga} e^{-\frac{in\pi z}{2\ga}}
\end{pmatrix}^{-1}  
j_T j_-(z)e^{n(g_-(z)-\frac{V(z)}{2}-\frac{l}{2})\sigma_3}e^{-\frac{in \pi z}{2\ga}\sigma_3} \\
&=e^{n\frac{V(z)}{2}\sigma_3} 
\begin{pmatrix}\frac{n\pi i}{\ga}  & 0 \\ 0 & \frac{\ga}{n\pi i} \end{pmatrix} 
j_R e^{-n(g_-(z)-\frac{l}{2})\sigma_3} j_-(z)e^{n(g_-(z)-\frac{V(z)}{2}-\frac{l}{2})\sigma_3}
e^{-\frac{in \pi z}{2\ga}\sigma_3}\\ 
&=e^{n\frac{V(z)}{2}\sigma_3} 
\begin{pmatrix}\frac{n\pi i}{\ga}  & 0 \\ 0 & \frac{\ga}{n\pi i} \end{pmatrix}
\begin{pmatrix} \Pi(z) & \frac{\ga}{n\pi i}e^{-nV(z)}e^{-\frac{in\pi z}{2\ga}} \\ 
-\frac{n\pi i}{\ga}e^{nV(z)}e^{-\frac{in\pi z}{2\ga}} & \frac{n\pi i}{\ga}e^{-\frac{in\pi z}{2\ga}} 
\end{pmatrix} \\
&\quad \times\begin{pmatrix} 
e^{-n\frac{V(z)}{2}} & 0 \\ e^{n(g_+(z)+g_-(z)-l-\frac{V(z)}{2})} & e^{n\frac{V(z)}{2}} 
\end{pmatrix} 
e^{-\frac{in \pi z}{2\ga}\sigma_3} \\
&=\begin{pmatrix} 
2i\sin(\frac{n\pi z}{2\ga})e^{n\frac{V(z)}{2}} & e^{-n\frac{V(z)}{2}}e^{-\frac{in\pi z}{2\ga}} \\ 
-e^{n\frac{V(z)}{2}}e^{-\frac{in\pi z}{2\ga}} & e^{-n\frac{V(z)}{2}}e^{-\frac{in\pi z}{2\ga}} 
\end{pmatrix} 
\begin{pmatrix} 
e^{-n\frac{V(z)}{2}} & 0 \\ e^{n(g_+(z)+g_-(z)-l-\frac{V(z)}{2})} & e^{n\frac{V(z)}{2}} 
\end{pmatrix} 
e^{-\frac{in \pi z}{2\ga}\sigma_3} \\
&=\begin{pmatrix} 
2i\sin(\frac{n\pi z}{2\ga})+e^{-\frac{in\pi z}{2\ga}}e^{n(g_+(z)+g_-(z)-l-V(z))} & e^{-\frac{in\pi z}{2\ga}} \\ 
-e^{-\frac{in\pi z}{2\ga}} +e^{-\frac{in\pi z}{2\ga}}e^{n(g_+(z)+g_-(z)-l-V(z))} & e^{-\frac{in\pi z}{2\ga}} 
\end{pmatrix} 
e^{-\frac{in \pi z}{2\ga}\sigma_3} \\
&=\begin{pmatrix} 
e^{\frac{in\pi z}{2\ga}} & e^{-\frac{in\pi z}{2\ga}} \\ 0 & e^{-\frac{in\pi z}{2\ga}} 
\end{pmatrix}
 e^{-\frac{in \pi z}{2\ga}\sigma_3} 
=\begin{pmatrix} 1 &1 \\ 0 & 1 \end{pmatrix}.
\end{aligned}
\end{equation}

\section{Proof of Proposition \ref{proposition_model}}  \label{proof_model}
From (\ref{eq18c}) and (\ref{m5}), we have 
\begin{equation}\label{me1}
\tilde{u}_\infty = \frac{\pi}{4}(1-\z)
\end{equation}
This, combined with formula (\ref{st4}) for $\Om_n$ immediately gives
\begin{equation}\label{me2}
\frac{\th_3(\tilde{u}_\infty+d)\th_3(-\tilde{u}_\infty+d+\frac{\Om_n}{2})}{\th_3(-\tilde{u}_\infty+d)\th_3(\tilde{u}_\infty+d+\frac{\Om_n}{2})}=\frac{\th_3(0)\th_4((n+1)\om)}{\th_3(\om)\th_4(n\om)}.
\end{equation}
Formulae (\ref{eq8b}) give that
\begin{equation}\label{me3}
\frac{(\be-\be')+(\al'-\al)}{4i}=\frac{\pi\th_4^2(0)}{4i}\left[\frac{\th_1^2(\frac{\om}{2})\th_4^2(\frac{\om}{2})+\th_2^2(\frac{\om}{2})\th_3^2(\frac{\om}{2})}{\th_1(\frac{\om}{2})\th_2(\frac{\om}{2})\th_3(\frac{\om}{2})\th_4(\frac{\om}{2})}\right].
\end{equation}
Plugging the duplication formulae (\ref{2}) and (\ref{me4})
into (\ref{me3}) yields
\begin{equation}\label{me5}
\frac{(\be-\be')+(\al'-\al)}{4i}=\frac{\pi}{2i}\frac{\th_2(0)\th_4(0)\th_3(\om)}{\th_1(\om)}.
\end{equation}
 Combining (\ref{me2}) and (\ref{me5}), we can write the $[12]$ entry of (\ref{m24}) as
 \begin{equation}\label{me6}
 \begin{aligned}
\left[\bold M_1\right]_{12}
&=\frac{i\pi}{2}\frac{\th_4\big((n+1)\om\big)}{\th_4(n\om)}
\frac{\th_3(0)}{\th_3(\om)}\frac{\th_2(0)\th_4(0)\th_3(\om)}{\th_1(\om)} \\
 &=\frac{i\pi}{2}\frac{\th_4\big((n+1)\om\big)}{\th_4(n\om)}
\frac{\th_1'(0)}{\th_2(\frac{\pi\z}{2})}
=\frac{i A \th_4\big((n+1)\om\big)}{\th_4(n\om)}\,.
 \end{aligned}
 \end{equation}
Similarly, we can write the $[21]$ entry of (\ref{m24}) as
 \begin{equation}\label{me7}
 \left[\bold M_1\right]_{21}=\frac{A \th_4(n\om)}{i\th_4\big((n-1)\om\big)}\,.
 \end{equation}
 
 \section{Theta function identities}\label{identities}
The following identities (see \cite{WW}) are used in this paper.  There are the identities involving derivatives of theta functions:
 \begin{equation}\label{main13a}
\th_1'(0)=\th_2(0)\th_3(0)\th_4(0),
\end{equation}
\begin{equation}\label{4}
\begin{aligned}
\th_4'(z)&=\frac{\th_1'(z)\th_4(z)-\th_4(0)^2\th_2(z)\th_3(z)}{\th_1(z)}, \\
\th_2'(z)&=\frac{\th_1'(z)\th_2(z)-\th_2(0)^3\th_3(z)\th_4(z)}{\th_1(z)}, \\
\th_3'(z)&=\frac{\th_1'(z)\th_3(z)-\th_3(0)^2\th_2(z)\th_4(z)}{\th_1(z)},
\end{aligned}
\end{equation}

\begin{equation}\label{evf3}
\begin{aligned}
\th_4'(z)&=\frac{\th_2'(z)\th_4(z)+\th_1(z)\th_3(z)\th_3^2(0)}{\th_2(z)}, \\
\th_1'(z)&=\frac{\th_2'(z)\th_1(z)+\th_3(z)\th_4(z)\th_2^2(0)}{\th_2(z)}, \\
\th_3'(z)&=\frac{\th_2'(z)\th_3(z)+\th_1(z)\th_4(z)\th_4^2(0)}{\th_2(z)}, \\
\end{aligned}
\end{equation}

\begin{equation}\label{evf2}
\begin{aligned}
\th_4''(z)&=\frac{\th_2''(z)\th_4(z)}{\th_2(z)}+\frac{2\th_2'(z)\th_1(z)\th_3(z)\th_3^2(0)}{\th_2^2(z)}+\frac{\th_4(z)\th_3^2(0)}{\th_2^2(z)}\bigg(\th_3^2(z)\th_2^2(0)+\th_1^2(z)\th_4^2(0)\bigg), \\
\th_1''(z)&=\frac{\th_2''(z)\th_1(z)}{\th_2(z)}+\frac{2\th_2'(z)\th_3(z)\th_4(z)\th_2^2(0)}{\th_2^2(z)}+\frac{\th_1(z)\th_2^2(0)}{\th_2^2(z)}\bigg(\th_3^2(z)\th_3^2(0)+\th_4^2(z)\th_4^2(0)\bigg), \\
\th_3''(z)&=\frac{\th_2''(z)\th_3(z)}{\th_2(z)}+\frac{2\th_2'(z)\th_1(z)\th_4(z)\th_4^2(0)}{\th_2^2(z)}+\frac{\th_3(z)\th_3^2(0)}{\th_2^2(z)}\bigg(\th_4^2(z)\th_2^2(0)+\th_1^2(z)\th_4^2(0)\bigg),
\end{aligned}
\end{equation}
\begin{equation}\label{Q23}
\begin{aligned}
\th_4''(z)&=\frac{\th_1''(z)\th_4(z)}{\th_1(z)}-\frac{2\th_1'(z)\th_2(z)\th_3(z)\th_4^2(0)}{\th_1^2(z)}+\frac{\th_4^2(0)\th_4^2(z)}{\th_1^2(z)}\bigg(\th_3^2(z)\th_2^2(0)+\th_2^2(z)\th_3^2(0)\bigg), \\
\th_2''(z)&=\frac{\th_1''(z)\th_2(z)}{\th_1(z)}-\frac{2\th_1'(z)\th_3(z)\th_4(z)\th_2^2(0)}{\th_1^2(z)}+\frac{\th_2^2(0)\th_2^2(z)}{\th_1^2(z)}\bigg(\th_4^2(z)\th_3^2(0)+\th_3^2(z)\th_4^2(0)\bigg), \\
\th_3''(z)&=\frac{\th_1''(z)\th_3(z)}{\th_1(z)}-\frac{2\th_1'(z)\th_2(z)\th_4(z)\th_3^2(0)}{\th_1^2(z)}+\frac{\th_3^2(0)\th_3^2(z)}{\th_1^2(z)}\bigg(\th_4^2(z)\th_2^2(0)+\th_2^2(z)\th_4^2(0)\bigg),
\end{aligned}
\end{equation}
the duplication formulae
\begin{equation}\label{2}
\th_1(2z)=\frac{2\th_1(z)\th_2(z)\th_3(z)\th_4(z)}{\th_1'(0)},
\end{equation}
\begin{equation}\label{me4}
\th_3(2z)\th_3(0)\th_2^2(0)=\th_1(z)^2\th_4(z)^2+\th_2(z)^2\th_3(z)^2 
\end{equation}
\begin{equation}\label{Q26}
\th_4(2z)\th_4^3(0)=\th_3^4(z)-\th_2^4(z)=\th_4^4(z)-\th_1^4(z),
\end{equation}
the addition formula
\begin{equation}\label{Q11}
\begin{aligned}
\th_3(y+z)\th_3(y-z)\th_2^2(0)&=\th_3^2(y)\th_2^2(z)+\th_4^2(y)\th_1^2(z) \\
&=\th_1^2(y)\th_4^2(z)+\th_2^2(y)\th_3^2(z),
\end{aligned}
\end{equation}
and the identities relating squares of theta functions

\begin{equation}\label{evf6a}
\th_1^2(z)\th_4^2(0)=\th_3^2(z)\th_2^2(0)-\th_2^2(z)\th_3^2(0),
\end{equation}
\begin{equation}\label{evf6b}
\th_2^2(z)\th_4^2(0)=\th_4^2(z)\th_2^2(0)-\th_1^2(z)\th_3^2(0),
\end{equation}
\begin{equation}\label{evf6c}
\th_3^2(z)\th_4^2(0)=\th_4^2(z)\th_3^2(0)-\th_1^2(z)\th_2^2(0),
\end{equation}
\begin{equation}\label{evf6d}
\th_4^2(z)\th_4^2(0)=\th_3^2(z)\th_3^2(0)-\th_2^2(z)\th_2^2(0).
\end{equation}

\end{document}